\newtheorem{theorem}{\bf Theorem}[section]
\newtheorem{ex}[theorem]{\bf Example}
\newtheorem{definition}[theorem]{\textsl{\bf Definition}{}}
\newtheorem{lemma}[theorem]{\bf Lemma}
\newtheorem{corollary}[theorem]{\bf Corollary}
\newtheorem{remark}[theorem]{\bf Remark}
\numberwithin{equation}{section}
\begin{document}

\title{Quantum U-statistics}

\author{ M\u{a}d\u{a}lin Gu\c{t}\u{a}$^{1}$ and Cristina Butucea$^{2}$ \\\\
$^{1}$ University of Nottingham, \\School of Mathematical Sciences, \\University Park, NG7 2RD, Nottingham, U.K. \\\\
$^{2}$ Laboratoire Paul Painlev\'{e} (UMR CNRS 8524), \\
Universit\'{e} des Sciences et Technologies de Lille 1, \\
59655 Villeneuve dÕAscq cedex, France 
}

\date{}

\maketitle
\begin{abstract}
The notion of a $U$-statistic for an $n$-tuple of identical quantum systems is introduced in analogy to the classical (commutative) case: given a selfadjoint `kernel' $K$ acting on $(\mathbb{C}^{d})^{\otimes r}$ with $r<n$, we define the symmetric operator $U_{n}= {n \choose r} \sum_{\beta}K^{(\beta)}$ with 
$K^{(\beta)}$ being the kernel acting on the subset $\beta$ of $\{1,\dots ,n\}$. 
If the systems are prepared in the i.i.d state $\rho^{\otimes n}$ it is shown that the sequence of properly normalised $U$-statistics converges in moments to a linear combination of Hermite polynomials in canonical variables of a CCR algebra defined through the Quantum Central Limit Theorem. In the special cases of non-degenerate kernels and kernels of order $2$  it is shown that the convergence holds in the stronger distribution sense.

Two types of applications in quantum statistics are described: 
testing beyond the two simple hypotheses scenario, and quantum metrology with interacting hamiltonians. 

\end{abstract}

%\tableofcontents

\section{Introduction}

Let $X_{1},\dots, X_{n}$ be an i.i.d. sample from a distribution $\mathbb{P}$ over 
$\mathbb{R}$. 
A $U$-statistic is  a generalisation of the notion of sample mean
$$
\bar{X}= \frac{1}{n}\sum_{i=1}^{n} X_{i}
$$
to the case where one averages a (symmetric) kernel $h(x_{1},\dots ,x_{r})$ over all possible ways of choose $r$ different variables out of $(X_{1},\dots,X_{n})$: 
$$
U_{n}= {n \choose r}^{-1} \sum_{\beta} h(X_{\beta_{1}} , \dots ,X_{\beta_{r}}).
$$   
The study of $U$-statistics is a classic topic in mathematical statistics \cite{Serfling,vanderVaart}, whose roots go back to the papers by Hoeffding \cite{Hoeffding} and Halmos \cite{Halmos}. Numerous examples of such statistics are used in both estimation and testing, their main advantages being the intuitive character and tractable asymptotic properties.

In this work we introduce the notion of $U$-statistic for non-commutative variables appearing in quantum mechanics, prove an analogue of the `classical' convergence theorem, and sketch a few applications in quantum statistics. We take the view that Quantum Statistics is an extension of classical (commutative) statistics and that techniques from the latter can be generalised and used to solve quantum statistical problems. This approach proved its fruitfulness right from the beginning of the field \cite{Helstrom,Holevo,Belavkin} with the development of quantum Cram\'{e}r-Rao inequalities and quantum Fisher information techniques, the analysis of group covariant models, and more recently the (asymptotic) solutions of the optimal state discrimination problem \cite{Hiai&Petz,Ogawa&Nagaoka,Audenaert&Szkola}, the optimal state estimation problem \cite{Bagan&Gill,Hayashi&Matsumoto,Guta&Janssens&Kahn,Guta&Kahn2}, 
the theory of quantum sufficiency \cite{Petz&Jencova}.

Our study of quantum $U$-statistics is motivated by their applications in estimation and testing for large samples of identically prepared, independent 
quantum systems. Consider $n$ such systems prepare in the unknown state 
$\rho\in M(\mathbb{C}^{d})$ and suppose that we are interested in a certain (non-linear) functional of $\theta=f(\rho)$ which has the form
$$
\theta = {\rm Tr}(\rho^{\otimes r} K)
$$ 
where $K$ is a selfadjoint operator on $(\mathbb{C}^{d})^{\otimes r}$ which can be chosen to be symmetric under permutations. Then by measuring $K$ we obtain an unbiased estimator of $\theta$, but in this way we use only $r$ of the $n$ systems. The linear combination of all such kernels $K^{(\beta)}$ acting on different tuples of systems has the same basic property but has the advantage of a smaller variance
\begin{equation}\label{eq.1}
U_{n}= {n \choose r}^{-1} \sum_{\beta} K^{(\beta)}.
\end{equation}

Another application is parameter estimation in quantum metrology \cite{Boixo,Boixo2,Roy} . Assume that $H_{n}:= {n \choose r}U_{n}$ is a multi-body hamiltonian with $r$-body interactions, and let $\psi^{\gamma,n}_{t}:= 
\exp(i t \gamma H_{n})\psi^{\otimes n}$ be the evolved state at time $t$ starting from a product state. In \cite{Boixo2} it is shown that $\gamma$ can be estimated with precision $n^{-r+1/2}$ by means of separate measurements on the $n$ systems. Already for $r=2$ this rate is faster than the famous `Heisenberg limit' $n^{-1}$ which is the absolute estimation rate for non-interacting hamiltonians!

Although definition \eqref{eq.1} is identical to that of classical $U$-statistics, the existing theory cannot be applied since in general the terms of the sum do not commute with each other. However, guided by the classical technique called the 
{\it Hoeffding decomposition} we show that $U_{n}$ can be written as a 
polynomial in {\it fluctuation observables} 
$\mathbb{F}_{n}(A):= \frac{1}{\sqrt{n}}\sum_{i=1}^{n} A^{(i)}$ and 
{\it empirical averages} $\mathbb{P}_{n}(B):= \frac{1}{n}\sum_{i=1}^{n} B^{(i)}$.  By applying a version of the Quantum Central Limit Theorem \cite{Petz} we obtain the main result, the convergence in moments Theorem \ref{th.conv.moments.ustat}. This says that asymptotically with $n$, the moments of the rescaled $U$-statistic $n^{c/2}U_{n}$ converge to moments of a certain polynomial in quantum canonical variables of a CCR algebra with respect to a Gaussian state. The integer $1\leq c\leq r$ is a degree of degeneracy and depends on the statistics of $K$ with respect to the state $\rho^{\otimes r}$. 
In important special cases, such as order 2 kernels and non-degenerate kernels this can be strengthened to convergence in distribution. In particular, in the latter case, the asymptotic distribution is normal, similar to the Central Limit Theorem 
for fluctuations of collective observables.

The paper is organised as follows. In section \ref{sec.classical} we give a short review of the classical theory of $U$-statistics culminating with the convergence Theorem \ref{th.convergence.classic}. Section \ref{sec.oscillator} deals with standard notions on Gaussian states of harmonic oscillators and ends with Theorem \ref{th.orthog.hermite.noncomm} which will be needed in the main result. The theorem shows that the symmetric products of Hermite polynomials in the canonical variables are orthogonal on the lower order polynomials, with respect to the inner product defined by a thermal equilibrium state. Section \ref{sec.clt} formulates the Quantum Central Limit Theorem \cite{Petz} for $d$ dimensional systems and gives an explicit description of the canonical commutation relations (CCR) algebra emerging in the limit, as a product of $d(d-1)/2$ harmonic oscillators in thermal states and a $(d-1)$-dimensional classical Gaussian variables.

In section \ref{sec.quantum.u.stat} we introduce the notion of quantum 
$U$-statistic and analyse its asymptotic behaviour by using the Hoeffding decomposition for $L^{2}$-spaces of non-commuting variables derived in section \ref{sec.q.hoeffding}. The main result of the paper is 
Theorem \ref{th.conv.moments.ustat} which shows that properly normalised $U$-statistics converge in moments to a certain polynomial in the 
canonical variables of the CCR algebra.  In section \ref{sec.applications} we discuss two possible applications of the theory. The first one is the construction of quantum tests for various non-standard discrimination problems such as distinguishing between a given state and everything else. The second application is in quantum metrology where symmetric r-body interaction hamiltonians can be used to beat 
the $n^{-1}$ Heisenberg limit in estimation precision. These applications will be investigated in more detail elsewhere.

%%%%%%%%%%%%%%%%%%%%%%%%%%%%%%%%%%%%%%%%

\section{Classical U-statistics}
\label{sec.classical}
%%%%%%%%%%%%%%%%%%%%%%%%%%%%%%%%%%%%%%%%%%

In this section we give a short introduction to the theory of $U$-statistics 
\cite{vanderVaart}. Throughout, $X_{1},\dots ,X_{n}$ will be a random sample from an unknown distribution $\mathbb{P}$ over the measure space $(\Omega, \Sigma)$. 
We consider the problem of estimating the parameter 
$$
\theta= \mathbb{E}(h(X_{1}, \dots, X_{r}))
$$ 
where $h$ is a known function which can be taken to be symmetric with respect to permutations of the arguments. By definition $h(X_{1}, \dots ,X_{r})$ is an unbiased estimator of 
$\theta$ but uses only the first $r$ variables, so we would like to replace it by a better estimator based on the whole sample, where $n>r$. 

\begin{definition}
Let $X_{1},\dots, X_{n}$ be a random sample from an unknown distribution. 
Let 
$
h(x_{1},\dots, x_{r})
$
be a kernel which is invariant under permutations of the arguments. The U-statistic with kernel $h$ is defined as 
$$
U_{n}= {n \choose r}^{-1} \sum_{\beta} h(X_{\beta_{1}} , \dots ,X_{\beta_{r}} )
$$   
where the sum is taken over all unordered subsets $\beta$ of integers from 
$\{1,\dots,n\}$.
\end{definition}
Since the observations are i.i.d. $U_{n}$ is an unbiased estimator of $\theta$ and has smaller variance than $h(X_{1},\dots, X_{r})$. In fact for non-degerate kernels 
$\sqrt{n}(U_{n}-\theta)$ is asymptotically normal.

\begin{theorem}
Let $h(X_{1},\dots , X_{r})$ be a symmetric kernel with 
$\mathbb{E} h^{2}(X_{1} ,\dots X_{r})<\infty$. Let 
$$
h_{1}(x):= \mathbb{E}(h(x, X_{2},\dots , X_{r}))- \theta
$$
and suppose that $\xi_{1}= \mathbb{E} h_{1}^{2}(X_{1})\neq 0$. Then 
$$
\sqrt{n}(U_{n}-\theta)\overset{\mathcal{L}}{\longrightarrow} N(0,r^{2}\xi_{1}).
$$ 
\end{theorem}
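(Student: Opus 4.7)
The plan is to follow the classical Hoeffding-projection route. First I would introduce the conditional-expectation kernels
\[
h_k(x_1,\dots,x_k) := \mathbb{E}\bigl(h(x_1,\dots,x_k,X_{k+1},\dots,X_r)\bigr)-\theta,\qquad k=1,\dots,r,
\]
so that $h_1$ is as in the statement and $\xi_k := \mathbb{E} h_k(X_1,\dots,X_k)^2$ is a non-negative, non-decreasing sequence with $\xi_1 \neq 0$ by assumption. The first key computation is the variance formula
\[
\mathrm{Var}(U_n) = \binom{n}{r}^{-1}\sum_{k=1}^{r}\binom{r}{k}\binom{n-r}{r-k}\xi_k,
\]
obtained by splitting the double sum $\mathrm{Cov}(h(X_\beta),h(X_{\beta'}))$ according to $k=|\beta\cap\beta'|$ and noting that this covariance equals $\xi_k$. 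The leading term in $n$ is $r^2\xi_1/n$, the rest being $O(n^{-2})$.

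Next I would define the Hájek projection of $U_n-\theta$ onto the space of sums of one-variable functions,
\[
\hat U_n := \sum_{i=1}^n \mathbb{E}(U_n-\theta\,|\,X_i) = \frac{r}{n}\sum_{i=1}^n h_1(X_i),
\]
the second equality following from symmetry: each $X_i$ appears in exactly $\binom{n-1}{r-1}$ of the $\binom{n}{r}$ subsets $\beta$, and conditioning on $X_i$ inside $h$ produces $h_1(X_i)+\theta$. By the classical CLT applied to the i.i.d.\ mean $\frac{1}{n}\sum_i h_1(X_i)$,
\[
\sqrt{n}\,\hat U_n \;\overset{\mathcal{L}}{\longrightarrow}\; N(0,r^2\xi_1).
\]

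The remaining and only delicate step is to show $\sqrt{n}(U_n-\theta-\hat U_n)\to 0$ in $L^2$, after which Slutsky's lemma finishes the proof. A direct calculation yields $\mathrm{Cov}(U_n-\theta,\hat U_n)=\mathrm{Var}(\hat U_n)=r^2\xi_1/n$, so that
\[
\mathrm{Var}\bigl(\sqrt{n}(U_n-\theta-\hat U_n)\bigr) = n\,\mathrm{Var}(U_n) - n\,\mathrm{Var}(\hat U_n),
\]
which by the variance formula above equals $r^2\xi_1+O(n^{-1}) - r^2\xi_1 = O(n^{-1})\to 0$.

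The main obstacle is the bookkeeping in the variance formula: one has to verify the combinatorial identity for $\mathrm{Var}(U_n)$ and control the lower-order terms uniformly in $n$. Everything else is conditioning, symmetry and an application of the classical CLT; the real content of the argument is that the non-linear statistic $U_n$ is, to leading order in $\sqrt{n}$, a linear statistic, namely its Hájek projection $\hat U_n$. This same idea of projecting onto a ``linear part'' is what will be mirrored in the quantum Hoeffding decomposition developed in Section~\ref{sec.q.hoeffding}.
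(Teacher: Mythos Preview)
Your argument is correct and is the standard H\'ajek-projection proof. Note, however, that the paper does not actually prove this theorem: Section~\ref{sec.classical} is a review of classical results, and this statement is simply quoted from the literature (cf.\ \cite{vanderVaart,Serfling}). That said, your approach is fully consistent with the machinery the paper sets up immediately afterwards: in the notation of Lemma~\ref{lemma.variance.un} and equation~\eqref{eq.h_l}, your H\'ajek projection $\hat U_n$ is exactly $r\,U_{n,1}$, the first non-trivial term of the Hoeffding decomposition, and your $L^2$-remainder estimate is the statement that $\mathrm{Var}(U_{n,l})=O(n^{-l})$ for $l\geq 2$. So the two routes coincide.
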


\begin{ex}
A $U$-statistic of degree $r=1$ is a mean $\frac{1}{n}\sum_{i=1}^{n} h(X_{i})$ and the above statement is the Central Limit Theorem.
\end{ex}

\begin{ex}
Let $h(x_{1},x_{2})= (x_{1}-x_{2})^{2}/2$ be a kernel of degree $r=2$. In this case
$$
\theta= \frac{1}{2}\mathbb{E}(X_{1}-X_{2})^{2} =  \mathbb{E}(X^{2}) -
(\mathbb{E}(X))^{2}= {\rm Var}(X).
$$ 
The corresponding $U$-statistic is the sample variance
$$
U_{n}= \binom{n}{2}^{-1} \sum_{i<j} \frac{1}{2} (X_{i}- X_{j})^{2}= 
\frac{1}{n-1} \sum_{i=1}^{n} (X_{i}^{2}- \bar{X}^{2}),
$$
where $\bar{X}:= \frac{1}{n}\sum_{i=1}^{n}X_{i}$. 
We have 
$
h_{1}(x)= \frac{1}{2}\left[  (x- \mathbb{E}(X))^{2} - {\rm Var}(X)\right]
$ 
and $\xi_{1}= {\rm Var}((X-\mathbb{E}(X))^{2})/4\neq 0$ unless $X-\mathbb{E}(X)$ has a distribution which is concentrated on two points $\{a, -a\}$.
\end{ex}

If $\xi_{1}=0$ then the convergence to normal distribution does not hold, and one calls the kernel $h$ {\it degenerate}. In such cases the limit distribution is that of a polynomial in Gaussian variables. The proof uses an elegant tool called the Hoeffding decomposition which we describe below.

Let $L^{2}(\Omega^{n}, \Sigma^{n}, \mathbb{P}^{n})$ be the space of square 
integrable functions of $(X_{1},\dots, X_{n})$. For every subset $B\subset \{1,\dots ,r\}$ we identify the subspace $\mathcal{L}_{B}:=L^{2}(\Omega^{B}, \Sigma^{B}, \mathbb{P}^{B})$ consisting of functions depending only on the variables $(X_{i}:i\in B)$. The projection $Q_{B}$ onto this subspace is simply the conditional expectation
$$
Q_{B} f = \mathbb{E}(f | X_{i}: i\in B ).
$$
By construction, these subspaces are not mutually orthogonal but respect the partial order defined by set inclusion
$$
B^{\prime}\subset B \Longrightarrow \mathcal{L}_{B^{\prime}} \subset \mathcal{L}_{B}.
$$
In order to compute covariances, it would be more convenient to associate to each subset $A$ a subspace $\mathcal{H}_{A}$ which contains only those vectors in   
$\mathcal{L}_{A}$ which are orthogonal onto all smaller subspaces 
$\mathcal{L}_{B^{\prime}}$.  Thus $\mathcal{H}_{A}$ consists of functions 
$g(X_{i}: i\in A)$ such that
$$
\mathbb{E}(g(X_{i}:i\in A ) | X_{j}: j\in B)=0, {\rm for~all~} B: |B|<|A| .
$$
Here are the simplest examples of how the projection $P_{A}$ onto $\mathcal{H}_{A}$ acts:
\begin{eqnarray*}
P_{\emptyset}f &=& \mathbb{E}(f),  \\
P_{\{i\}} f &=& \mathbb{E}(f|X_{i})- \mathbb{E}(f)\\
P_{\{i,j\}} f &=& \mathbb{E}(f|X_{i}, X_{j})- \mathbb{E}(f|X_{i})- \mathbb{E}(f|X_{j})+ \mathbb{E}(f)
\end{eqnarray*}

\begin{theorem}[Hoeffding decomposition]
Let $f\in L^{2}(\Omega^{n}, \Sigma^{n}, \mathbb{P}^{n})$. Then the projection of $f$ onto $\mathcal{H}_{A}$ is given by
$$
P_{A}f = \sum_{B\subset A} (-1)^{|A|-|B|} \mathbb{E}(f| X_{i}: i\in B).
$$ 
The spaces $\mathcal{H}_{A}$ are mutually orthogonal and span 
$L^{2}(\Omega^{n}, \Sigma^{n}, \mathbb{P}^{n})$.
\end{theorem}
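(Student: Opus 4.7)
The plan is to use Möbius inversion on the Boolean lattice of subsets of $\{1,\dots,n\}$. Introduce the conditional expectation $Q_{B} f := \mathbb{E}(f| X_{i}:i\in B)$, define the candidate $\tilde P_{A} f := \sum_{B\subseteq A}(-1)^{|A|-|B|} Q_{B} f$, and then check three facts: (i) $\tilde P_{A} f\in \mathcal{H}_{A}$, (ii) the spaces $\mathcal{H}_{A}$ are mutually orthogonal, and (iii) $\sum_{A} \tilde P_{A} f = f$. Taken together, these identify $\tilde P_{A}$ with the orthogonal projection onto $\mathcal{H}_{A}$ and show that the $\mathcal{H}_{A}$'s span $L^{2}(\Omega^{n},\Sigma^{n},\mathbb{P}^{n})$.

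For (i), clearly $\tilde P_{A} f\in\mathcal{L}_{A}$. To verify the orthogonality condition, I would fix any $C$ with $C\not\supseteq A$ and pick $i\in A\setminus C$. The i.i.d.\ structure together with the tower property yields $\mathbb{E}(Q_{B} f| X_{j}:j\in C)=Q_{B\cap C} f$, which depends on $B$ only through $B\cap C$. Partition the subsets $B\subseteq A$ into pairs $\{B,\,B\cup\{i\}\}$ with $i\notin B$; the two members of each pair carry opposite signs, and since $i\notin C$ they produce identical conditional expectations, so the contributions cancel. Hence $\mathbb{E}(\tilde P_{A} f| X_{j}:j\in C)=0$ for every $C$ of size strictly smaller than $|A|$, i.e.\ $\tilde P_{A}f\in\mathcal{H}_{A}$.

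For (iii), swap the order of summation:
\[
\sum_{A\subseteq\{1,\dots,n\}} \tilde P_{A} f \;=\; \sum_{B} Q_{B} f\sum_{A\supseteq B}(-1)^{|A|-|B|},
\]
and the inner alternating sum equals $0$ unless $B=\{1,\dots,n\}$, leaving $Q_{\{1,\dots,n\}} f = f$. For (ii), given $A\neq A'$ pick (without loss of generality) $i\in A'\setminus A$; for $h\in\mathcal{H}_{A}$ and $g\in\mathcal{H}_{A'}$, measurability of $h$ with respect to $\sigma(X_{j}:j\in A)$ combined with independence gives $\mathbb{E}(gh)=\mathbb{E}(h\cdot \mathbb{E}(g| X_{j}:j\in A\cap A'))=0$, since $|A\cap A'|<|A'|$ and $g\in\mathcal{H}_{A'}$. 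Combining (i)--(iii), $f=\sum_{A}\tilde P_{A} f$ is an orthogonal decomposition with $\tilde P_{A} f\in\mathcal{H}_{A}$, so $\tilde P_{A}=P_{A}$ and the subspaces $\mathcal{H}_{A}$ span $L^{2}$.

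The main technical step is the pairing cancellation in (i), which requires carefully combining two ingredients: the i.i.d.\ structure (so that conditioning $Q_{B} f$ on variables indexed by $C$ collapses to $Q_{B\cap C} f$) and a sign-reversing involution on the subsets of $A$ obtained by toggling the distinguished index $i\in A\setminus C$. Everything else reduces to bookkeeping with standard alternating-sum identities.
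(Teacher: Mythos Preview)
Your proof is correct and follows essentially the same strategy as the paper. Note that the paper does not actually prove the classical statement in Section~\ref{sec.classical}; it proves the quantum analogue in Section~\ref{sec.q.hoeffding}, and that proof is organized almost identically to yours: define $\tilde P_{A}=\sum_{B\subset A}(-1)^{|A|-|B|}Q_{B}$, show $\tilde P_{A}Q_{C}=0$ for $C\subsetneq A$, verify orthogonality of the $\mathcal{H}_{A}$ (Lemma~\ref{orthogonality.Pa}), and deduce spanning. The only cosmetic differences are that the paper obtains $\tilde P_{A}Q_{C}=0$ via a binomial-sum computation rather than your sign-reversing involution, and it derives spanning from the inverse relation $Q_{A}=\sum_{B\subset A}P_{B}$ by induction rather than from your direct identity $\sum_{A}\tilde P_{A}=\mathrm{Id}$; both routes are equivalent instances of M\"obius inversion on the Boolean lattice.
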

A simple way to visualise the subspaces $\mathcal{H}_{A}$ is to construct an orthonormal basis. Let  $ 1=f_{0}, f_{1},\dots$ be an orthonormal basis of $L^{2}(\Omega, \Sigma, \mathbb{P})$. Then $\mathcal{H}_{A}$ is the span of basis vectors of the form
$$
f_{i_{1}}\otimes \dots\otimes  f_{i_{n}}
$$
for which $i_{k}\geq 1$ if and only if $k\in A$.

We now apply the Hoeffding decomposition to $U$-statistics. A combinatorial argument shows that
$$
U_{n}= \binom{n}{r}^{-1}\sum_{l=0}^{r} \sum_{|A|=l}\sum_{\beta} 
P_{A} h(X_{\beta_{1}}, \dots , X_{\beta_{r}}) = \sum_{l=0}^{r} \binom{r}{l}U_{n,l} 
$$
where $U_{n,l}$ is the $U$-statistic corresponding to the symmetric kernel $h_{l}$ of order $l$, equal  to the projection
\begin{equation}\label{eq.h_l}
h_{l}:= P_{\{1,\dots, l\}}h.
\end{equation}
\begin{lemma}\label{lemma.variance.un}
Let $h(x_{1},\dots, x_{r})$ be a kernel of of order $r$ and let $U_{n}$ be the associated $U$-statistic. Then
$$
{\rm Var}(U_{n}) =\sum_{l=1}^{r} \binom{r}{l}^{2} \binom{n}{l}^{-1} \mathbb{E}(h_{l}^{2}).
$$
If $h_{1}=\dots= h_{c-1}=0$ and $h_{c}\neq 0$ then ${\rm Var}(U_{n})$ is of order 
$O(n^{-c})$.
\end{lemma}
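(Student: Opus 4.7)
The plan is to exploit the Hoeffding decomposition and the mutual orthogonality of the subspaces $\mathcal{H}_A$ to reduce the variance of $U_n$ to a sum of variances of the simpler pieces $U_{n,l}$. First I would use the decomposition stated just before the lemma, namely
$$
U_n = \sum_{l=0}^{r} \binom{r}{l} U_{n,l}, \qquad U_{n,l} = \binom{n}{l}^{-1} \sum_{|A|=l} h_l(X_i : i\in A),
$$
and note that the $l=0$ term is the constant $\mathbb{E}(h) = \theta$, which contributes nothing to the variance.

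Next I would show that, for any $l\neq l'$, the terms $U_{n,l}$ and $U_{n,l'}$ are uncorrelated. This follows from the Hoeffding theorem: each summand $h_l(X_i:i\in A)$ lies in $\mathcal{H}_A$, each $h_{l'}(X_i:i\in A')$ lies in $\mathcal{H}_{A'}$, and the spaces $\{\mathcal{H}_A\}$ are mutually orthogonal. Hence
$$
{\rm Var}(U_n) = \sum_{l=1}^{r} \binom{r}{l}^2 {\rm Var}(U_{n,l}).
$$

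For a fixed $l\geq 1$, I would compute ${\rm Var}(U_{n,l})$ by the same orthogonality argument applied within a single degree: if $|A|=|A'|=l$ and $A\neq A'$, then $h_l(X_A)\in\mathcal{H}_A$ and $h_l(X_{A'})\in\mathcal{H}_{A'}$ are orthogonal, so only the diagonal terms survive. Since each diagonal term contributes $\mathbb{E}(h_l^2)$ and there are $\binom{n}{l}$ of them, I obtain ${\rm Var}(U_{n,l}) = \binom{n}{l}^{-1}\mathbb{E}(h_l^2)$, which gives the claimed formula. The main (very mild) subtlety here is justifying carefully that $h_l(X_A)$ really does lie in $\mathcal{H}_A$; this is immediate from the definition \eqref{eq.h_l} and the symmetry of $h$, but one should record it explicitly.

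Finally, for the order statement, I would observe that $\binom{n}{l}^{-1} = \Theta(n^{-l})$ for each fixed $l$. If $h_1=\dots=h_{c-1}=0$ and $h_c\neq 0$, the leading contribution in the sum comes from $l=c$ and equals $\binom{r}{c}^2\binom{n}{c}^{-1}\mathbb{E}(h_c^2) = \Theta(n^{-c})$, while all remaining terms have order $n^{-l}$ with $l>c$, hence ${\rm Var}(U_n)=O(n^{-c})$. No step here looks technically hard; the one thing to be careful about is not to confuse orthogonality of the $\mathcal{H}_A$'s (which handles different index sets at a single degree and different degrees simultaneously) with orthogonality of the $\mathcal{L}_B$'s (which fails).
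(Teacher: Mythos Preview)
Your proposal is correct and follows essentially the same approach as the paper: use the Hoeffding decomposition $U_n=\sum_l \binom{r}{l}U_{n,l}$, invoke the mutual orthogonality of the spaces $\mathcal{H}_A$ to see that distinct $U_{n,l}$ are uncorrelated and that only diagonal terms survive in ${\rm Var}(U_{n,l})$, and read off the order statement from $\binom{n}{l}^{-1}=\Theta(n^{-l})$. Your write-up is in fact more explicit than the paper's, which compresses the orthogonality arguments into a single line.
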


%\begin{definition}
%%The kernel $k(x_{1},\dots x_{r})$ is called degenerate if $\xi_{1}=0$. 
%Let $h$ be a kernel of order $r$ and define 
%$$
%h_{i}(x_{1}, \dots x_{i}):= \mathbb{E}(h(x_{1}, \dots x_{i}, X_{i+1} , \dots , X_{r})) -\theta
%$$
%and $\xi_{i}:= \mathbb{E} (h_{i}^{2}(X_{1},\dots , X_{i}))$. We say that the kernel $h$ is degenerate if $\xi_{1}=\dots =\xi_{c-1}=0\neq \xi_{c} $ for some $c>1$.
%\end{definition}

\begin{proof}
From the definition of $h_{l}$ it follows that $U_{n,i}$ and $U_{n,j}$ are not correlated for $i\neq j$. On the other hand the variance of $U_{n,l}$ can be easily computed as
$$
{\rm Var}(U_{n,l}) = \binom{n}{l}^{-2} \sum_{\beta,\beta^{\prime}} 
\mathbb{E}(h_{l}(X_{\beta_{1}},\dots ,X_{\beta_{l}} )h_{l}(X_{\beta^{\prime}_{l}},\dots ,X_{\beta^{\prime}_{l}} ))= \binom{n}{l}^{-1}\mathbb{E}(h_{l}^{2}(X_{1},\dots , X_{l})) 
$$
which is of order $n^{-l}$. 
Then
$$
{\rm Var}(U_{n})= \sum_{l=1}^{r} \binom{r}{l}^{2}\binom{n}{l}^{-1}
\mathbb{E}(h_{l}^{2}(X_{1},\dots , X_{l})). 
$$
\end{proof}
%\begin{lemma}
%Let $h(x_{1},\dots, x_{r})$ be a kernel of of order $r$. Then
%$$
%{\rm Var}(U_{n}) = \binom{n}{r}^{-1} \sum_{k=1}^{r} \binom{r}{k}\binom{n-r}{r-k} \xi_{k}.
%$$ 
%If $\xi_{1}=\dots =\xi_{c-1}=0\neq \xi_{c} $ then ${\rm Var}(U_{n})$ is of order 
%$O(n^{-c})$.
%\end{lemma}
This lemma shows that the dominant contribution to $U_{n}$ comes from $U_{n,c}$ and suggests that $n^{c/2} U_{n}$ may have a limit distribution as in the non-degenerate case. We will then formulate the limit theorem for the $U$-statistic associated to the kernel $h_{c}$ defined as above.

\begin{theorem}\label{th.convergence.classic}
Let $h_{c}(X_{1},\dots , X_{c})$ be a symmetric kernel such that 
$\mathbb{E}h_{c}^{2}(X_{1},\dots , X_{c})<\infty$ and $\mathbb{E}(h_{c}(x_{1},\dots , x_{c-1}, X_{c}))\equiv 0$. Let $1=f_{0}, f_{1},\dots $ be an orthonormal basis of 
$L^{2}(\Omega, \Sigma, \mathbb{P})$. Then the sequence of $U$-statistics $U_{n,c}$ with kernel $h_{c}$ satisfies
$$
n^{c/2}U_{n,c}\overset{\mathcal{L}}{\longrightarrow} 
\sum_{{\bf k}=(k_{1},\dots,k_{c})\in \mathbb{N}^{c}}
\langle h_{c}, f_{k_{1}}\otimes\dots \otimes f_{k_{c}} \rangle
\prod_{j} H_{a_{j}({\bf k})}(\mathbb{B} (f_{j}) ).
$$ 
Here $H_{a}$ are the Hermite polynomials, $\mathbb{B} (f_{j})$ are independent standard normal variables and $a_{j}({\bf k})$ is the number of times $f_{j}$ occurs among $f_{k_{1}},\dots, f_{k_{c}}$. The variance of the limit variable is equal to 
$c! \mathbb{E}h_{c}^{2}(X_{1},\dots , X_{c})$.
\end{theorem}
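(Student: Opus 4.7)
The plan is to expand $h_c$ in the tensor orthonormal basis of $L^2(\Omega^c)$, truncate to finitely many basis modes, reduce by linearity to a single simple tensor kernel, and identify the limit by combining a joint central limit theorem for the empirical sums $\mathbb{B}_n(f_j):=n^{-1/2}\sum_i f_j(X_i)$ with a combinatorial identity that recasts the corresponding $U$-statistic as a product of Hermite polynomials evaluated at the $\mathbb{B}_n(f_j)$'s.

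First, the hypothesis $\mathbb{E}(h_c(x_1,\ldots,x_{c-1},X_c))\equiv 0$ together with the symmetry of $h_c$ places $h_c$ in the subspace $\mathcal{H}_{\{1,\ldots,c\}}$, for which $\{f_{k_1}\otimes\cdots\otimes f_{k_c}:k_j\geq 1\}$ is an orthonormal basis by the remark following the Hoeffding decomposition. Hence
$$
h_c=\sum_{\mathbf{k}\in\mathbb{N}_{\geq 1}^c} c_{\mathbf{k}}\,f_{k_1}\otimes\cdots\otimes f_{k_c},\qquad c_{\mathbf{k}}:=\langle h_c,f_{k_1}\otimes\cdots\otimes f_{k_c}\rangle,
$$
with coefficients symmetric in $\mathbf{k}$. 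Truncate to $h_c^M:=\sum_{\max k_j\leq M}c_{\mathbf{k}}\,f_{k_1}\otimes\cdots\otimes f_{k_c}$; the difference $h_c-h_c^M$ still lies in $\mathcal{H}_{\{1,\ldots,c\}}$, so Lemma \ref{lemma.variance.un} yields
$$
\mathrm{Var}\!\left(n^{c/2}\bigl(U_{n,c}^{h_c}-U_{n,c}^{h_c^M}\bigr)\right)=n^c\binom{n}{c}^{-1}\|h_c-h_c^M\|_2^{\,2}\;\xrightarrow[n\to\infty]{}\;c!\,\|h_c-h_c^M\|_2^{\,2},
$$
which can be made arbitrarily small in $M$. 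A parallel $L^2$ bound on the tail of the Hermite series on the right-hand side, combined with a standard Slutsky/$3\varepsilon$-argument, reduces the proof to finite linear combinations of simple tensors and, by linearity, to a single kernel of the form $h(x_1,\ldots,x_c)=f_{j_1}(x_1)\cdots f_{j_c}(x_c)$ with each $j_l\geq 1$.

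For such a simple tensor, let $\{g_p\}_{p=1}^{M'}$ enumerate the distinct functions occurring among $\{f_{j_l}\}$ with multiplicities $a_p$ summing to $c$. After symmetrizing and rewriting $U_{n,c}$ as an ordered sum over distinct indices,
$$
n^{c/2}U_{n,c}=\frac{n^{c/2}}{n(n-1)\cdots(n-c+1)}\sum_{\substack{\beta_1,\ldots,\beta_c\\\text{distinct}}}\prod_{l=1}^c f_{j_l}(X_{\beta_l}).
$$
Resolving the ``distinct'' constraint by inclusion--exclusion over set partitions of $\{1,\ldots,c\}$ expresses the right-hand side as a polynomial in the $\mathbb{B}_n(g_p)$ whose remaining coefficients are $n^{-k}$ times empirical averages $n^{-1}\sum_i(g_{p_1}\cdots g_{p_s})(X_i)$; by the law of large numbers such averages tend to $\mathbb{E}(g_{p_1}\cdots g_{p_s})$, and these vanish whenever some $g_p$ appears with multiplicity one, thanks to orthonormality and $\mathbb{E}(g_p)=0$ for $p\geq 1$. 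The surviving combinatorics reproduces exactly the classical expansion of a Gaussian monomial into Hermite polynomials, so the limit is $\prod_p H_{a_p}(\mathbb{B}(g_p))$ once one invokes the joint CLT to obtain independent standard-normal limits $\mathbb{B}(g_p)$ of $\mathbb{B}_n(g_p)$ (independence following from orthonormality and centering). The variance formula $c!\,\mathbb{E}h_c^2$ for the limit then drops out of $\mathbb{E} H_a(Z)H_b(Z)=a!\,\delta_{a,b}$ together with the permutation symmetry of the $c_{\mathbf{k}}$.

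The main obstacle is the combinatorial bookkeeping in the third step: beyond isolating the leading $n^{-c/2}$ monomial $\prod_p\mathbb{B}_n(g_p)^{a_p}$, one must verify that the lower-order correction terms produced by index collisions assemble \emph{precisely} into the subtractions turning each $\mathbb{B}_n(g_p)^{a_p}$ into $H_{a_p}(\mathbb{B}_n(g_p))$, rather than leaving spurious residual contributions. The case where all $g_p$ are distinct with $a_p=1$ recovers the usual nondegenerate CLT for $U$-statistics; for higher multiplicities one has to match the multinomial factors arising from ordering the $a_p$ copies of $g_p$ against the Hermite recursion, and this matching is where orthonormality and centering of the basis functions are crucially used.
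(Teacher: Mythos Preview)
The paper does not actually give a proof of this theorem: it is stated without proof in the review Section~\ref{sec.classical} on classical $U$-statistics, as a known result from the literature (cf.\ the references to Serfling and van der Vaart). So there is nothing to compare against directly. Your sketch is the standard textbook argument and is essentially correct as an outline; the truncation-plus-Slutsky reduction and the appeal to the multivariate CLT for the $\mathbb{B}_n(f_j)$ are exactly what one does.

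That said, it is worth noting that the paper's proof of the \emph{quantum} analogue, Theorem~\ref{th.conv.moments.ustat}, handles precisely the combinatorial obstacle you flag in your last paragraph by a different and cleaner route that works equally well in the classical case. Rather than tracking the inclusion--exclusion over index collisions and matching the resulting coefficients against the Hermite recursion, the paper argues as follows: one knows a priori (from the Hoeffding construction) that $U_{n,c}$ is orthogonal in $L^2(\rho^{\otimes n})$ to every $U$-statistic of lower order; passing to the limit, the limit polynomial $U_c^\infty$ must therefore be orthogonal in $L^2(\varphi)$ to all polynomials of degree $<c$ in the Gaussian variables. Since the leading term of $U_c^\infty$ is the symmetric monomial $\prod_i \mathbb{B}(F_i)^{m_i}$ and the lower-order remainder $\tilde P_0$ lies in $\mathcal P_{c-1}$, the orthogonality forces $U_c^\infty$ to be the projection of that monomial onto $\mathcal P_{c-1}^\perp$, which by Corollary~\ref{cor.hermite.orthogonal} (or its classical counterpart, the standard Hermite orthogonality) is exactly the product of Hermite polynomials. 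This sidesteps the explicit bookkeeping entirely: one never needs to verify by hand that the collision terms assemble into the Hermite subtractions, because orthogonality does it for you. If you want to tighten your argument, this is the idea to borrow.
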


%%%%%%%%%%%%%%%%%%%%%%%%%%%%%%%%%%%%%%
\section{Thermal equilibrium states of a harmonic oscillator}
\label{sec.oscillator}
%%%%%%%%%%%%%%%%%%%%%%%%%%%%%%%%%%%%%%%%

In the classical case, the limit of $U$-statistics is described in terms of polynomials in normally distributed variables. The simplest quantum analogue is a quasifree (Gaussian) state a quantum harmonic oscillator. The first part of this section gives a short overview of the basic notions needed in the main result and for the formulation of the Quantum Central Limit Theorem.

In the second part we prove a technical result which will be employed in the convergence theorem of $U$-statistics: the symmetric product of Hermite polynomials in the canonical variables is orthogonal to all lower order polynomials, where the inner product is given by the covariance with respect to a 
thermal equilibrium state.

A variety of quantum systems such as the free quantum particle, the quantum harmonic oscillator, the monochromatic light beam are described mathematically by the same algebra of observables generated by the `canonical coordinates' $Q$ and $P$ satisfying the Heisenberg's commutation relations
\begin{equation}\label{eq.Heisenberg}
[Q,P]:= QP-PQ = i\mathbf{1}.
\end{equation}
These observables can be represented as (unbounded) selfadjoint operators on $L^{2}(\mathbb{R})$
$$
Q\psi(q) = q\psi(q),\qquad P\psi(q) = \frac{1}{i} \frac{d\psi}{dq} ,
$$
where $\psi$ is a vector in the domain of the respective observable. The space $L^{2}(\mathbb{R})$ has a special orthonormal basis 
$\{  \left\lvert  0 \right\rangle,  \left\lvert  1 \right\rangle, \dots\}$ with the vector 
$ \left\lvert  m \right\rangle$ given by 
$$
H_{m}(q) e^{-q^{2}/2} / (\sqrt{\pi}2^{m}m!)^{1/2},
$$ 
where $H_{m}$ are the Hermite polynomials \cite{Erdelyi}. These are the eigenvectors of the number operator ${\bf N}:= \frac{1}{2}(Q^{2}+ P^{2}-\mathbf{1})$ counting the number of `excitations' of the oscillator or the number of photons in the case of the light beam, such that ${\bf N}  \left\lvert  m \right\rangle=m  \left\lvert  m \right\rangle$.

In order to avoid technicalities related to the fact that $Q$ and $P$ are unbounded operators, one can work instead with the unitary operators $U(a):= \exp(iaQ) $ and $V(b):= \exp(ibP)$ and encode \eqref{eq.Heisenberg} into the following relation
$$
U(a)V(b)= \exp (iab)\,V(b)U(a).
$$
By combining the two families of unitaries into a single {\it Weyl operator}
$$
S(a,b):= \exp(-\frac{1}{2}iab)\, U(a)V(b), \qquad (a,b)\in \mathbb{R}^{2},
$$
we obtain a projective unitary representation of the abelian group 
$\mathbb{R}^{2}$ 
\begin{equation}\label{eq.wey.relation}
S(a,b)S(c,d) = \exp\left(\frac{i}{2}(ad-bc)\right) S(a+c,b+d).
\end{equation}
Since the algebra generated by the Weyl operators is dense in 
$\mathcal{B}(L^{2}(\mathbb{R}))$ with respect to the weak topology, we will refer to the 
latter as the algebra of observables of the harmonic oscillator.

The pure state $|0\rangle\langle 0|$ is called the {\it vacuum} or {\it zero temperature} state and satisfies
$$
\langle 0| S(a,b)| 0\rangle = \exp\left(- \frac{1}{4}(a^{2}+ b^{2})\right). 
$$ 
From this it follows that in the vacuum both $Q$ and $P$ have centred normal distributions known as `vacuum fluctuations', with variances $\langle 0| Q^{2}| 0\rangle=\langle 0| P^{2} | 0\rangle=1/2$. Their product is equal to the minimum of $1/4$ allowed by the Heisenberg's uncertainty principle.

Besides vacuum, we will be interested in {\it thermal equilibrium} states 
\begin{equation}\label{eq.thermal.state.1}
\varphi_{\sigma}(S(a,b))= \exp\left(- \frac{\sigma^{2}}{2}(a^{2}+ b^{2}) \right), \qquad \sigma^{2}> \frac{1}{2},
\end{equation}
where the variances of $Q,P$ depend on the inverse temperature $\beta$ as $\sigma^{2}= (2\tanh(\beta/2))^{-1}$. Thermal equilibrium states are mixed and their 
density matrix  $\phi_{\sigma}$ is a diagonal in the number operator basis
\begin{equation}\label{eq.thermal.state.2}
\phi_{\sigma}= (1-e^{-\beta}) \sum_{k=0}^{\infty}  |k\rangle \langle k| e^{-k\beta} .
\end{equation}

Both vacuum and thermal states are particular examples of quasifree states, the latter being characterised by the fact that all linear combinations of canonical variables have Gaussian distributions. Any quasifree state can be obtained by 
applying a sequence of unitary operations called displacement, phase transformation and squeezing to a thermal or vacuum state \cite{Leonhardt}.

Since the Hermite polynomials are orthogonal with respect to the normal distribution 
$N(0,1/2)$ we have 
\begin{equation}\label{eq.ortho.vacuum}
\langle 0|  H_{n}(Q) H_{m}(Q) |0 \rangle =
\frac{1}{\sqrt{\pi}}\int H_{n}(x)H_{m}(x)e^{-x^{2}}  =  \delta_{n,m} 2^{m}m!
\end{equation}

Let $\phi_{\sigma}$ be the thermal equilibrium state defined in 
\eqref{eq.thermal.state.1} and  \eqref{eq.thermal.state.2}, and let $L^{2}(\phi_{\sigma})$ be the completion of $\mathcal{B}(L^{2}(\mathbb{R}))$ with respect to the complex inner product
\begin{equation}\label{eq.orthogonality.hermite}
\langle A, B \rangle_{\phi_{\sigma}} = {\rm Tr}(\phi_{\sigma} A^{*}B).
\end{equation}
By a change of variable \eqref{eq.ortho.vacuum} can be rewritten as  
$$
\langle H_{n}(Q/\sqrt{2}\sigma)), \,H_{m}(Q/\sqrt{2}\sigma))\rangle_{\phi_{\sigma}}= 
\delta_{n,m} 2^{m}m!, 
$$
and a similar orthogonality relation can be derived for $P$. 
Since in the commutative case $H_{n}(x)H_{m}(y)$ is a basis in 
$L^{2}(\mathbb{R}^{2} , \exp(-(x^{2}+y^{2}))$, this suggests that 
$L^{2}(\phi_{\sigma})$ may have an orthogonal basis consisting of some type of products of Hermite polynomials in $Q$ and $P$.

%This suggests to look for an orthogonal basis of $L^{2}(\phi_{\sigma})$ 
%consisting perhaps of products of Hermite polynomials in the canonical variables $Q$ and $P$, similarly to the commutative case where $H_{n}(x)H_{m}(y)$ is a basis in 
%$L^{2}(\mathbb{R)^{2}$. 

%
%However the non-commutativity of $Q$ and $P$ means that there are different ways of writing a polynomial depending on the ordering of the variables. For our purposes it seems to be advantageous to use the `symmetric ordering' . 

\begin{definition}
Let $Q,P$ be the canonical observables of a quantum harmonic oscillator. The symmetric ordering on polynomials in  $Q,P$ is the operation defined by replacing each monomial 
$
X_{1}\dots X_{n}
$  
by 
$$
\mathcal{S}[X_{1}\dots X_{n}]:= \frac{1}{n!}\sum_{\tau\in S(n)} X_{\tau(1)}\dots X_{\tau(n)},
$$
where $X_{i}$ is either $Q$ or $P$.
\end{definition}

The following theorem shows that the symmetric products of Hermite polynomials behave similarly to their classical counterparts in what concerns orthogonality between polynomials of {\it different} orders. However for same order polynomials it can be directly checked that the orthogonality relation does not hold.

\begin{theorem}\label{th.orthog.hermite.noncomm}
Let $\langle\cdot, \cdot\rangle_{\phi_{\sigma}} $  be the inner product 
\eqref{eq.orthogonality.hermite} on $\mathcal{B}(L^{2}(\mathbb{R}))$. Then
$$
\langle \mathcal{S}[H_{n}(Q/ \sqrt{2}\sigma)H_{m}(P/ \sqrt{2}\sigma)] , \,Y\rangle_{\phi_{\sigma}} = 0, 
$$
for any polynomial $Y$ in $Q, P$ of order strictly smaller than $n+m$. 

In particular, 
$\mathcal{S}[H_{n}(Q/ \sqrt{2}\sigma)H_{m}(P/ \sqrt{2}\sigma)]$ is the unique projection of $(\sqrt{2}\sigma)^{-n-m}\mathcal{S}[Q^{n}P^{m}]$ onto the orthogonal complement of polynomials of order strictly smaller than $n+m$.
\end{theorem}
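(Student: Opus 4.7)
My approach is to bundle all the inner products in question into a single exponential generating function and reduce the theorem to a degree count on a Gaussian expression.

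Introduce the formal generating series
\begin{align*}
F(a,b) &:= \sum_{n,m\geq 0}\frac{a^n b^m}{n!\,m!}\mathcal{S}\bigl[H_n(Q/\sqrt{2}\sigma)H_m(P/\sqrt{2}\sigma)\bigr], \\
G(c,d) &:= \sum_{p,q\geq 0}\frac{c^p d^q}{p!\,q!}\mathcal{S}[Q^p P^q].
\end{align*}
Using the Hermite generating identity $\sum_n H_n(x)t^n/n! = e^{2xt-t^2}$ together with the expansion $\exp(i\alpha Q+i\beta P) = \sum_{p,q}\frac{(i\alpha)^p(i\beta)^q}{p!\,q!}\mathcal{S}[Q^p P^q]$ (obtained by expanding $(i\alpha Q+i\beta P)^k$ and grouping monomials by symmetric type), I identify
$$F(a,b) = e^{-a^2-b^2}\exp\!\bigl(\tfrac{\sqrt{2}}{\sigma}(aQ+bP)\bigr), \qquad G(c,d) = \exp(cQ+dP).$$

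Merging the two real-exponent exponentials in $F(a,b)G(c,d)$ by Baker--Campbell--Hausdorff, which is exact since $[Q,P]=i\mathbf{1}$ is central, and evaluating the trace against $\phi_\sigma$ via the analytic continuation ${\rm Tr}(\phi_\sigma\exp(sQ+tP)) = e^{\sigma^2(s^2+t^2)/2}$ of the thermal characteristic function \eqref{eq.thermal.state.1}, a short calculation yields
$$\Phi(a,b,c,d) := {\rm Tr}\bigl(\phi_\sigma F(a,b)G(c,d)\bigr) = \exp\!\left(\frac{i(ad-bc)}{\sqrt{2}\sigma} + \sqrt{2}\sigma(ac+bd) + \frac{\sigma^2(c^2+d^2)}{2}\right),$$
the factor $e^{-a^2-b^2}$ from $F$ cancelling exactly against the $e^{a^2+b^2}$ produced by the characteristic function. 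The crucial observation is that every monomial in the exponent has $(c,d)$-degree at least as large as its $(a,b)$-degree: the cross terms $ad,bc,ac,bd$ are of bidegree $(1,1)$ and $c^2+d^2$ of bidegree $(0,2)$. Hence every monomial in the series expansion of $\Phi$ inherits the same inequality, and reading off the coefficient of $a^n b^m c^p d^q/(n!\,m!\,p!\,q!)$ (using the self-adjointness of $\mathcal{S}[H_nH_m]$) gives
$$\langle\mathcal{S}[H_n(Q/\sqrt{2}\sigma)H_m(P/\sqrt{2}\sigma)],\mathcal{S}[Q^p P^q]\rangle_{\phi_\sigma} = 0 \qquad\text{whenever } p+q < n+m.$$

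Any non-commutative polynomial in $Q,P$ of order $\leq k$ is a linear combination of the symmetric monomials $\mathcal{S}[Q^p P^q]$ with $p+q\leq k$, by rewriting each ordered monomial inductively as a symmetric one modulo commutator corrections of strictly lower order; so the orthogonality extends to every polynomial $Y$ with $\deg Y < n+m$, which is the first assertion. The second assertion now follows because the leading-order part of $\mathcal{S}[H_n(Q/\sqrt{2}\sigma)H_m(P/\sqrt{2}\sigma)]$ is a nonzero scalar multiple of $\mathcal{S}[Q^n P^m]$: up to the appropriate normalisation the operator differs from $\mathcal{S}[Q^n P^m]$ only by a polynomial of order $<n+m$, and the orthogonality just established identifies it as the unique orthogonal projection onto the complement of lower-order polynomials. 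The main technical point to handle is that $\exp(sQ+tP)$ is unbounded, so the BCH merger and the Gaussian trace identity must be read as identities of formal power series rather than operator identities; this is legitimate because $\phi_\sigma$ decays exponentially in the number basis, so every finite-order polynomial moment is finite and matches the corresponding Taylor coefficient of $e^{\sigma^2(s^2+t^2)/2}$ term by term.
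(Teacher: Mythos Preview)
Your argument is correct, and it takes a genuinely different route from the paper's proof. The paper works in phase space: it invokes the Wigner--Weyl correspondence, writes the inner product as a phase-space integral using the product formula \eqref{eq.product.weyl} for $W_{Y\phi_\sigma}$, expands $Y$ in products of shifted Hermite polynomials, and reduces everything to the classical orthogonality $\int H_k(q)q^c e^{-q^2}\,dq=0$ for $c<k$. Your approach instead bundles all the cases into a single generating function, merges the two exponentials by BCH, evaluates the resulting Gaussian moment, and observes that in the closed form $\Phi=\exp\bigl(i(ad-bc)/(\sqrt{2}\sigma)+\sqrt{2}\sigma(ac+bd)+\sigma^2(c^2+d^2)/2\bigr)$ every monomial in the exponent has $(c,d)$-degree at least its $(a,b)$-degree, so the coefficient of $a^nb^mc^pd^q$ vanishes whenever $p+q<n+m$.

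Both proofs ultimately exploit the same Gaussian structure of $\phi_\sigma$, but yours is more economical: it sidesteps the Wigner machinery and the somewhat delicate change-of-variables computation in the paper, replacing them with a one-line degree count. The paper's route has the merit of making the connection to phase-space quantisation explicit and of being closer in spirit to the classical Hermite argument; your route is more algebraic and arguably easier to generalise to several commuting oscillators, which is exactly what is needed in Corollary~\ref{cor.hermite.orthogonal}. Your handling of the unboundedness issue (reading BCH and the Gaussian trace as formal power-series identities, justified because $\phi_\sigma$ has all polynomial moments finite and $sQ+tP$ is genuinely Gaussian under $\phi_\sigma$) is adequate for the purpose.
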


\begin{proof}
Considering the analogy with the commutative case, it is natural to use the theory of Wigner functions for computing the inner product. We briefly review here the necessary background and refer to \cite{Leonhardt} for more details.

To any trace-class operator $A\in \mathcal{T}_{2}(L^{2}(\mathbb{R}))$ we associate its Wigner function 
$W_{A}\in L^{2}(\mathbb{R}^{2})$ defined as the inverse Fourier transform with respect to both variables of 
$$
\widetilde{W}_{A}(u,v) = {\rm Tr}(A \exp(-i (uQ +vP)). 
$$

1. The map $W:A\to W_{A}$ is an isomorphism between $\mathcal{T}_{2}(L^{2}(\mathbb{R}))$ and $L^{2}(\mathbb{R}^{2}) $ and
$$
{\rm Tr}(A\, B) = 2\pi \int\!\!\!\!\int W_{A}(q,p)W_{B}(q,p) \,dq dp.
$$ 

2. The expectations of symmetric polynomials in $Q,P$ can be computed as
\begin{equation}\label{eq.isometry.wigner}
{\rm Tr}(A \mathcal{S}[Q^{n}P^{m}])= 2\pi\int\!\!\!\!\int W_{A} (q,p) q^{n}p^{m} \, dq dp
\end{equation}

3. The Wigner function of a product $AB$ can be computed as \cite{Pool}
\begin{equation}\label{eq.product.weyl}
W_{AB}(q,p) = \frac{1}{4\sqrt{2\pi}} \int\!\!\!\!\int\!\!\!\!\int\!\!\!\!\int
W_{A}(q^{\prime}, p^{\prime}) W_{B} (q^{\prime\prime},p^{\prime\prime})
\exp\left[ -2i \,{\rm Det} 
\left| 
\begin{array}{ccc}
1 & q & p\\
1 & q^{\prime} & p^{\prime} \\
1 & q^{\prime\prime} & p^{\prime\prime}
\end{array}
\right|
 \right] 
dq^{\prime} dp^{\prime} dq^{\prime\prime} dp^{\prime\prime}.
\end{equation}

Inserting $W_{\phi_{\sigma}} (q,p)= \exp(-(q^{2}+p^{2})/2\sigma^{2})/2\pi\sigma^{2}$ and $W_{Y}(q,p)= Y(q,p)$ into \eqref{eq.product.weyl} we get
\begin{eqnarray*}
W_{Y\phi_{\sigma}} (q,p) &=&
\frac{1}{4\sigma^{2}(2\pi)^{3/2}} \int\!\!\!\!\int\!\!\!\!\int\!\!\!\!\int 
Y(q^{\prime},p^{\prime})  \exp(-(q^{\prime\prime 2}+p^{\prime\prime 2})/2\sigma^{2})
\exp\left[ -2i \,{\rm Det} 
\left| 
\begin{array}{ccc}
1 & q & p\\
1 & q^{\prime} & p^{\prime} \\
1 & q^{\prime\prime} & p^{\prime\prime}
\end{array}
\right|
 \right] 
dq^{\prime} dp^{\prime} dq^{\prime\prime} dp^{\prime\prime}\\
%\exp(-2i( [q^{\prime} p^{\prime\prime} - p^{\prime}q^{\prime\prime} ] - 
%[ qp^{\prime\prime}- q^{\prime\prime} p]+ [qp^{\prime} -q^{\prime}p] ) ) 
&=&
\frac{1}{4\sqrt{2\pi}} \int\!\!\!\!\int
Y(q^{\prime},p^{\prime}) \exp(-2i (qp^{\prime} -q^{\prime}p) ) 
\exp(-2\sigma^{2}( ( p-p^{\prime})^{2}+ (q-q^{\prime})^{2} )dq^{\prime} dp^{\prime}.
\end{eqnarray*}

Any polynomial $Y(q^{\prime},p^{\prime})$ 
of order strictly smaller than $m+n$ can be expresses as a linear combination with coefficients $\alpha(a,b,c,d)$ independent of $(q,p,q^{\prime},p^{\prime})$ of terms
$$
T (q^{\prime},p^{\prime}):= q^{c}p^{d}
H_{a}\left(\sqrt{2}\sigma(q^{\prime}-q) \right) 
H_{b}\left(\sqrt{2}\sigma(p^{\prime}-p)\right), \qquad  a+b+c+d<m+n.
$$ 
%Thus it suffices to verify the statement of the theorem for $Y=T$. 
In this case the above integral can be computed explicitly by using the following property of Hermite polynomials \cite{Erdelyi} 
$$
%H_{n}(q+q^{\prime})&=& \sum_{k=0}^{n}\binom{n}{k} H_{k}(q)(2q^{\prime})^{n-k},\\
H_{k}(q) e^{-q^{2}}=(-1)^{n} \frac{d^{k}}{dq^{k}} \exp(-q^{2}).  
$$
By a change of variables we have
\begin{eqnarray*}
W_{T\phi}(q,p)&=&
\frac{1}{8\sigma^{2}\sqrt{2\pi}} 
\int\!\!\!\int
H_{a}(q^{\prime}) H_{b}(p^{\prime})\exp(-q^{\prime 2}-p^{\prime 2}) 
\exp(-2i (qp^{\prime}- pq^{\prime} )/\sqrt{2}\sigma) dq^{\prime} dp^{\prime}\\
&=&
%\frac{(-1)^{a+b}}{8\sigma^{2}\sqrt{2\pi}} 
%\left( \frac{2ip}{\sqrt{2}\sigma} \right)^{a}  \left( \frac{-2iq}{\sqrt{2}\sigma}\right)^{b} 
C q^{a+c}p^{b+d} \exp(-(q^{2}+ p^{2})/2\sigma^{2}),
\end{eqnarray*}
for some constant $C= C(a,b,c,d,\sigma)$. Now using \eqref{eq.isometry.wigner} we get
\begin{eqnarray*}
&&
{\rm Tr}(\phi \mathcal{S}[H_{n}( Q/\sqrt{2}\sigma) H_{m}(P/\sqrt{2}\sigma)] Y) \\
&&=
2\pi \int\!\!\!\!\int 
H_{n}(q/\sqrt{2}\sigma) H_{m}(p/\sqrt{2}\sigma) W_{Y\phi} (q,p)\, dqdp\\
&&=
2\pi \sum \alpha(a,b,c,d) C^{\prime}(a,b,c,d,\sigma)
\int\!\!\!\!\int
H_{n}(q ) H_{m}(p) p^{b+d}q^{a+c} \exp(-( q^{2}+ p^{2})) \, 
dqdp=0,
\end{eqnarray*}
where in the last line we used the following property of Hermite polynomials 
$$
\int H_{k}(q)  q^{c}  \exp(- q^{2})dq= 0, \qquad {\rm for }~ c<k. 
$$

The second statement of the theorem follows from the fact that the leading term of the polynomial  $\mathcal{S}[H_{n}(Q/\sqrt{2}\sigma) H_{m}(P/\sqrt{2}\sigma)]$ is 
$(\sqrt{2}\sigma)^{-n-m}\mathcal{S}[Q^{n}P^{m}]$.

\end{proof}

\section{Quantum Central Limit Theorem}\label{sec.clt}
%%%%%%%%%%%%%%%%%%%%%%%%%%%%%%%%%%%%%%%

In this section we give a brief review the quantum Central Limit Theorem (CLT) \cite{Petz}. Similarly to its classical counterpart, the CLT shows how certain Gaussian states of quantum harmonic oscillators emerge as the limit of `fluctuations' of collective observables of a large number of identically prepared systems. Let $\Psi$ be a faithful state on $M(\mathbb{C}^{d})$, i.e. 
$$
\Psi(A) = {\rm Tr}(\rho A), \qquad A\in M(\mathbb{C}^{d}),
$$
with $\rho$ a strictly positive density matrix. To $(M(\mathbb{C}^{d}), \rho)$ 
we will associate an algebra of canonical commutation relations carrying a Gaussian state $\varphi$. By `diagonalisation', the algebra can be easily identified with the tensor product of $d(d-1)/2$ quantum harmonic oscillators  and the commutative algebra 
$C_{b}(\mathbb{R}^{d-1})$. The state $\varphi$ factorises as well into a product of thermal equilibrium states and a $(d-1)$-dimensional Gaussian distribution.

\begin{definition}
Let $M(\mathbb{C}^{d})_{sa}$ be the space of selfadjoint matrices. On 
$M(\mathbb{C}^{d})_{sa}$ we defined the (positive) inner product 
$$
(A, B)_{\rho} := \mathrm{Tr}(\rho\, A\circ B), \qquad {\rm where~}\qquad  A\circ B 
:= \frac{AB + BA}{2},
$$
and we denote by $L^{2}(\rho) $ the Hilbert space 
$\left( M(\mathbb{C}^{d})_{sa},(\cdot, \cdot)_{\rho}\right)$. By 
$L^{2}_{0}(\rho)$ we denote the orthogonal complement of the identity, i.e.
$
L^{2}_{0}(\rho)= \{ A \in L^{2}(\rho) : (A, \mathbf{1})_{\rho}= {\rm Tr}(\rho A)=0\}.
$   

We further define the {\it symplectic form} $\sigma$ on $L^{2}(\rho)$
$$
\sigma(A,B): = \frac{i}{2}\mathrm{Tr}(\rho\, [A, B]).
$$

\end{definition}

Roughly speaking, the algebra of canonical commutation relations 
$CCR(L^{2}_{0}(\rho), \sigma)$ is generated by the canonical variables 
$\mathbb{B}(A)$ satisfying the conditions
\begin{eqnarray*}
\mathbb{B}(A+B) &=& \mathbb{B}(A)+ \mathbb{B}(B),  \\
\mathbb{B}(A) &=& \mathbb{B}(A)^{*},\\
\mathbb{B}(\lambda A) &=& \lambda \mathbb{B}(A), \qquad \lambda\in \mathbb{R},\\
  \,[ \mathbb{B}(A),\mathbb{B}(B)] 
&=&  i2\sigma(A,B)\mathbf{1},
\end{eqnarray*}
for all $A,B \in L^{2}_{0}(\rho)$. The harmonic oscillator is a particular case where the space of canonical variables is spanned by $Q$ and $P$ satisfying the commutation relation 
$[Q,P]= i\mathbf{1}$. 

Alternatively, one can define the unitary Weyl operators 
$S(A):= \exp(i\mathbb{B}(A))$ which satisfy 
$$
S(A)^{*} = S(-A), \qquad S(A)S(B) = S(A+B)\exp(-i\sigma(A,B)), \quad A,B \in 
L^{2}_{0}(\rho).
$$
%
%The algebra of canonical commutation relations 
%$CCR(L^{2}(\rho), \sigma)$ is defined as the $C^{*}$-algebra 
%generated by the Weyl operators $S(A)$ satisfying the relations
%$$
%S(A)^{*} = S(-A), \qquad S(A)S(B) = S(A+B)\exp(-i\sigma(A,B)), \quad A,B \in 
%L^{2}(\rho).
%$$
On $CCR(L^{2}_{0}(\rho), \sigma)$ we define the quasifree state 
\begin{equation}\label{eq.quasifree}
\varphi (S(A)) := \exp\left(-\frac{1}{2} \| A\|_{\rho}^{2}\right), \qquad \| A\|_{\rho}^{2}= (A,A)_{\rho},
\end{equation}
with respect to which $\mathbb{B}(A)$ has normal distribution $N(0,\|A\|_{\rho}^{2})$.

We will now `diagonalise' $CCR(L^{2}_{0}(\rho), \sigma)$ and construct an explicit Hilbert space representation of the algebra and the state  $\varphi$. Let 
\begin{equation}\label{eq.rho.diagonal}
\rho= \sum_{i=1}^{d} \lambda_{i} |e_{i}\rangle\langle e_{i}|, 
\end{equation}
be the spectral decomposition of the density matrix $\rho$ and assume that 
$\lambda_{1}> \dots >\lambda_{d}$. The Hilbert space $L^{2}_{0}(\rho)$ decomposes as direct sum of orthogonal subspaces
$\mathcal{H}_{\rho} \oplus \mathcal{H}_{\rho}^{\perp} $ where
\begin{equation}\label{eq.orthog.decomp}
\mathcal{H}_{\rho} := 
%\left\{ A\in L^{2}(\rho_{0}, \mathbb{R}) : [A,\rho_{0}]=0\right\} = 
{\rm Lin} \{ A\in L^{2}_{0}(\rho) : [A,\rho] =0 \},
\quad
{\rm and} 
\quad
\mathcal{H}_{\rho}^{\perp}=  {\rm Lin} \{ T_{j,k} , 1\leq j\neq k\leq d\},
\end{equation}
where $T_{j,k}$ are the matrices
\begin{eqnarray}\label{generators_algebra}
% H_{j}  &= & E_{j,j} - E_{j+1,j+1} \qquad  \text{for}~j\leq d-1  ; \nonumber \\
T_{j,k} &= & i E_{j,k} - i E_{k,j}  \qquad \text{for} ~ 1\leq j<k \leq d; \nonumber \\
T_{k,j} &=  & E_{j,k} + E_{k,j}   \qquad \text{for}     1\leq j<k \leq d.
\end{eqnarray}  
with $E_{i,j}$ the matrix with entry $(i,j)$ equal to $1$, and all others equal to $0$.

The elements $S(A)$ with $A\in \mathcal{H}_{\rho}$ generate the center of the CCR
algebra, which is isomorphic to the algebra  of bounded continuous functions $C_{b}(\mathbb{R}^{d-1})$. Explicitly, we identify the coordinates in 
$\mathbb{R}^{d-1}$ with the basis 
$\{ d_{i}:= -\mu_{i}\mathbf{1} + E_{i,i}: i=1,\dots, d-1\} $ of $\mathcal{H}_{\rho}$. 
Then the covariance matrix of the `classical' Gaussian variables 
$G_{i}:=\mathbb{B}(d_{i})$ is 
$$
( d_{i}, d_{j})_{\rho} = 
{\rm Tr}(\rho d_{i} d_{j} ) = \delta_{i,j}\mu_{i} - \mu_{i}\mu_{j} := 
[V(\lambda)]_{i,j}.
$$ 
As for the space $\mathcal{H}_{\rho}^{\perp}$, we identify basis
\begin{equation}\label{eq.symplectic.basis}
t_{j,k}:= T_{j,k}/\sqrt{2|\mu_{j}-\mu_{k} |}, \qquad j\neq k,
\end{equation}
which is both {\it orthogonal and symplectic} i.e. 
$$
\sigma(t_{j,k} , t_{k,j}) = -1/2  ,\quad  j<k ,\quad {\rm and}~ ~ \sigma(t_{j,k} , t_{l,m})= 0 \quad {\rm for ~} \{ j,k\} \neq \{l,m\}.
$$
This means that for each $(j\neq k)$ the operators
$$
Q_{j,k}:=\mathbb{B}(t_{j,k}),\qquad P_{j,k}:= \mathbb{B}(t_{k,j}) 
$$ 
form a pair of canonical coordinates of a harmonic oscillator, and all such 
$d(d-1)/2$ oscillators commute with each other. 
Moreover, since the basis is orthogonal, the state $\varphi$ factorises into a product of thermal equilibrium states $\varphi_{\sigma_{j,k}}$ with variances 
\begin{equation}\label{eq.variance.thermal}
\sigma_{j,k}^{2}= \varphi(Q_{j,k}^{2})= \varphi(P_{j,k}^{2}) =
\| t_{j,k}\|_{\rho}^{2} = {\rm Tr}(\rho t_{j,k}^{2}) = 
\frac{\mu_{j} + \mu_{k}}{2|\mu_{j}-\mu_{k}|} >\frac{1}{2}.
\end{equation}

\begin{remark}
The above decomposition assumes that $\rho$ has a non-degenerate spectrum. It is easy to verify that a similar result holds in the general case, with some of quantum oscillators being replaced by bivariate classical normal variables.  
\end{remark}

We pass now to the formulation of the Quantum CLT. 
Consider the tensor product  $\bigotimes_{k=1}^{n}M(\mathbb{C}^{d})$ which is generated by elements of the form
\begin{equation}\label{eq.xk}
A^{(k)} = \mathbf{1}\otimes \dots \otimes A \otimes \dots \otimes \mathbf{1},
\end{equation}
with $A$ acting on the $k$-th position of the Hilbert space tensor product 
$\left(\mathbb{C}^{d}\right)^{\otimes n}$. We are interested in the asymptotics as $n\to\infty$ of the joint distribution under the state $\rho^{\otimes n}$, of `fluctuation' elements of the form
\begin{equation}\label{eq.fluctuation.obs}
\mathbb{F}_{n}(A) :=\frac{1}{\sqrt{n}} \sum_{k=1}^{n} A^{(k)}.
\end{equation}
\begin{theorem}{\bf [Quantum CLT]}\label{th.clt}
Let $A_{1}, \dots , A_{s}\in M(\mathbb{C}^{d} )_{sa}$ such that 
${\rm Tr}(\rho A_{i})=0$. Then 
\begin{eqnarray*}
&&
\lim_{n\to\infty} {\rm Tr}
\left(\rho^{\otimes n} \left(\prod_{l=1}^{s} \mathbb{F}_{n}(A_{l}) \right)\right) =
\varphi \left( \prod_{l=1}^{s}\left( \mathbb{B}(A_{l}) \right)\right),\\
&&
\lim_{n\to\infty} {\rm Tr} 
\left( \rho^{\otimes n} \left( \prod_{l=1}^{s} \exp( i \mathbb{F}_{n}(A_{l}) ) \right)\right) =
\varphi\left( \prod_{l=1}^{s} S( A_{l} )  \right).
\end{eqnarray*}
\end{theorem}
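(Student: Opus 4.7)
The plan is to treat both identities in parallel, exploiting the tensor-product structure of $\rho^{\otimes n}$. Since $A_l^{(k)}$ and $A_l^{(k')}$ commute for $k\neq k'$, each Weyl factor decomposes as $\exp(i\mathbb{F}_n(A_l)) = \bigotimes_k \exp(iA_l^{(k)}/\sqrt{n})$; re-grouping the product over $l$ site-by-site gives
\begin{equation*}
\mathrm{Tr}\!\left(\rho^{\otimes n}\prod_{l=1}^{s} \exp(i\mathbb{F}_n(A_l))\right) = M_n^n, \qquad M_n := \mathrm{Tr}\!\left(\rho \prod_{l=1}^{s} \exp(iA_l/\sqrt{n})\right),
\end{equation*}
so the characteristic-function identity reduces to a single-site asymptotic.

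Next I would Taylor-expand each $\exp(iA_l/\sqrt{n})$ to second order. The first-order piece vanishes under $\mathrm{Tr}(\rho\,\cdot\,)$ by the centering hypothesis, leaving $M_n = 1 - C/n + O(n^{-3/2})$ with
\begin{equation*}
C = \tfrac{1}{2}\sum_{l}\mathrm{Tr}(\rho A_l^2) + \sum_{l<l'}\mathrm{Tr}(\rho A_l A_{l'});
\end{equation*}
finite-dimensionality of $M(\mathbb{C}^d)$ gives a uniform bound on the Taylor remainder, so $M_n^n \to e^{-C}$. On the CCR side, iterating the Weyl relation \eqref{eq.wey.relation} yields $\prod_l S(A_l) = S(\sum_l A_l)\exp(-i\sum_{l<l'}\sigma(A_l,A_{l'}))$; combining this with the quasifree formula \eqref{eq.quasifree} and the two-point identity $(A,B)_\rho + i\sigma(A,B) = \mathrm{Tr}(\rho AB)$ (up to the paper's sign convention for $\sigma$) shows that the resulting exponent matches $-C$, which establishes the second statement.

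For the moment convergence I would expand $\prod_{l=1}^s \mathbb{F}_n(A_l) = n^{-s/2}\sum_{k_1,\ldots,k_s} A_1^{(k_1)}\cdots A_s^{(k_s)}$ and again exploit the factorisation of $\mathrm{Tr}(\rho^{\otimes n}\cdot\,)$ over sites: a site $k$ visited with ordered indices $l_1 < \cdots < l_p$ contributes the factor $\mathrm{Tr}(\rho A_{l_1}\cdots A_{l_p})$, while unvisited sites contribute $1$. The centering assumption kills any configuration in which some site is used exactly once, and configurations in which some site is used three or more times are $o(1)$ after the $n^{-s/2}$ normalisation. What survives are configurations in which every used site is visited exactly twice, corresponding bijectively to pair partitions $\pi$ of $\{1,\ldots,s\}$ (so $s$ must be even). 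For each such $\pi$ there are $n(n-1)\cdots(n-s/2+1) \sim n^{s/2}$ placements of its pairs on distinct sites, yielding in the limit the non-commutative Wick sum $\sum_\pi \prod_{(i,j)\in\pi,\, i<j}\mathrm{Tr}(\rho A_i A_j)$. The same Wick rule holds for $\varphi$ with two-point function $\varphi(\mathbb{B}(A)\mathbb{B}(B)) = \mathrm{Tr}(\rho AB)$, obtained by multilinearising \eqref{eq.quasifree}, so the two expressions coincide.

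The main obstacle is the combinatorial bookkeeping in the moment argument: verifying rigorously that configurations with a site repeated three or more times contribute $o(1)$ after normalisation, and matching the surviving pair-configurations — together with their site-wise left-to-right orderings — against the non-commutative Wick expansion of $\varphi$. The finite-dimensional Taylor remainder needed for the Weyl step is routine, but the ordering and sign conventions of $\sigma$, of the Weyl relation, and of the quasifree two-point function must be tracked consistently throughout.
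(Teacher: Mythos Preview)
The paper does not actually prove Theorem~\ref{th.clt}: it is quoted as background from \cite{Petz} and used as a black box. So there is no ``paper's own proof'' to compare against. That said, your proposal is the standard argument for the quantum CLT and is essentially correct.

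A couple of remarks. For the Weyl part, your factorisation and second-order Taylor expansion are fine; the only delicate point is the one you already flagged: with the paper's definitions one gets $(A,B)_\rho + i\sigma(A,B) = \mathrm{Tr}(\rho\,BA)$ rather than $\mathrm{Tr}(\rho\,AB)$, so the cross terms in the exponent coming from the Weyl relation carry the opposite ordering from those in your $C$. In fact the paper's sign conventions for $\sigma$, for $[\mathbb{B}(A),\mathbb{B}(B)]$, and for the oscillator Weyl relation \eqref{eq.wey.relation} are not entirely self-consistent, so your caveat ``up to the paper's sign convention'' is well placed; the correct matching is obtained once one fixes a single consistent convention. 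For the moment part, your pair-partition argument is exactly the standard one; the claim that configurations with a site visited $\geq 3$ times contribute $o(1)$ follows from the elementary count that such configurations number at most $O(n^{\lfloor (s-1)/2 \rfloor})$ against the normalisation $n^{-s/2}$, and the matching with the quasifree Wick rule is immediate from the two-point function $\varphi(\mathbb{B}(A)\mathbb{B}(B))=\mathrm{Tr}(\rho AB)$ (with whichever ordering the chosen convention dictates).
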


For later purposes we need the following corollary which combines the CLT with the Law of Large Numbers.
Let $C\in M(\mathbb{C}^{d})_{sa}$ with expectation $c= {\rm Tr}(\rho C)$ 
and define the `empirical average'
\begin{equation}\label{eq.empirical.average}
\mathbb{P}_{n}(C):= \frac{1}{n} \sum_{k=1}^{n}C^{(k)} = 
c\mathbf{1} +  \frac{1}{\sqrt{n}} \mathbb{F}_{n}(C-c\mathbf{1})
\end{equation}

\begin{corollary}\label{cor.1}
Let $A_{1},\dots , A_{s}\in M(\mathbb{C}^{d} )_{sa}$ such that 
${\rm Tr}(\rho A_{i})=0$ and 
let $C_{1},\dots C_{l} \in M(\mathbb{C}^{d} )_{sa}$ with expectations 
$c_{i}:={\rm Tr}(\rho C_{i})$. For any non-commutative polynomial 
$P[X_{1}, \dots X_{s}, Y_{1}, \dots, Y_{l} ]$ in the free variables $X_{i}, Y_{j}$ 
we have
\begin{eqnarray*}
\lim_{n\to\infty} {\rm Tr}
\left(\rho^{\otimes n} P[\mathbb{F}_{n}(A_{1}) , \dots , \mathbb{F}_{n}(A_{s}) , \mathbb{P}_{n}(C_{1}), \dots, \mathbb{P}_{n}(C_{l}) ] \right)= 
\varphi \left( P[ \mathbb{B}(A_{1}), \dots, \mathbb{B}(A_{k}), c_{1}\mathbf{1}, \dots, c_{l}\mathbf{1}]
\right)
\end{eqnarray*}

\end{corollary}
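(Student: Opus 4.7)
The plan is to reduce the statement to the Quantum CLT (Theorem~\ref{th.clt}) by substituting the identity \eqref{eq.empirical.average}, namely $\mathbb{P}_{n}(C_j) = c_j \mathbf{1} + n^{-1/2}\mathbb{F}_n(C_j - c_j\mathbf{1})$, directly into the polynomial $P$. Setting $\tilde C_j := C_j - c_j\mathbf{1}$, each $\tilde C_j$ is a centred selfadjoint element of $M(\mathbb{C}^d)$, so it is a legitimate input to the CLT on an equal footing with the $A_i$'s.

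Expanding the polynomial $P[\mathbb{F}_n(A_1),\dots,\mathbb{F}_n(A_s),\mathbb{P}_n(C_1),\dots,\mathbb{P}_n(C_l)]$ monomial by monomial, each $Y_j$-slot is replaced by a binomial expression whose two terms are of orders $1$ and $n^{-1/2}$ respectively. Distributing gives a finite sum of terms of the schematic form
\begin{equation*}
n^{-k/2}\, \gamma \cdot M_n(\mathbb{F}_n(A_{i_1}),\dots,\mathbb{F}_n(\tilde C_{j_k}),\dots),
\end{equation*}
where $k$ counts how many of the $l$ empirical-average slots were expanded into their fluctuation piece, $\gamma$ is a product of the remaining scalars $c_j$, and $M_n$ is a non-commutative monomial in fluctuation observables of centred operators. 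Note that the number of monomials, and their length, are bounded independently of $n$ because $P$ is a fixed polynomial.

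By Theorem~\ref{th.clt}, the trace of $\rho^{\otimes n}$ against each monomial $M_n$ in centred fluctuations converges to the corresponding mixed moment $\varphi(\prod \mathbb{B}(\cdot))$, which is a finite real number. Consequently, any monomial with $k\geq 1$ is multiplied by a vanishing prefactor $n^{-k/2}$ and contributes $0$ in the limit. The only surviving contributions are the $k=0$ terms, obtained by replacing every $\mathbb{P}_n(C_j)$ by $c_j\mathbf{1}$; their sum is exactly $P[\mathbb{F}_n(A_1),\dots,\mathbb{F}_n(A_s),c_1\mathbf{1},\dots,c_l\mathbf{1}]$, viewed as a polynomial in the fluctuations $\mathbb{F}_n(A_i)$ with scalar coefficients.

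A final application of Theorem~\ref{th.clt} to this scalar-coefficient polynomial in the fluctuations $\mathbb{F}_n(A_i)$ (equivalently, applying the CLT term-by-term to each of its monomials) yields the right-hand side $\varphi(P[\mathbb{B}(A_1),\dots,\mathbb{B}(A_s),c_1\mathbf{1},\dots,c_l\mathbf{1}])$. There is no real obstacle; the only point requiring care is the uniform-in-$n$ boundedness of the mixed moments ${\rm Tr}(\rho^{\otimes n} M_n)$, and this is supplied directly by the existence of the CLT limit, so the $n^{-k/2}$ factors genuinely kill the error terms.
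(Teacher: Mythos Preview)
Your proof is correct and follows exactly the approach sketched in the paper: substitute \eqref{eq.empirical.average} for each $\mathbb{P}_n(C_j)$, expand, and use Theorem~\ref{th.clt} to see that every monomial containing at least one factor $\mathbb{F}_n(C_j-c_j\mathbf{1})$ vanishes in the limit because of the prefactor $n^{-k/2}$. The paper's own proof is just a one-sentence version of what you wrote.
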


\begin{proof}
Write $ \mathbb{P}_{n}(C)$ as in \eqref{eq.empirical.average} and apply CLT while noting that any monomial containing at least one term $\mathbb{F}_{n}(C_{i}-c_{i}\mathbf{1})$ converges to zero. 
\end{proof}

\begin{remark}
The quantum CLT provides two convergence results, for moments and for exponentials. In the commutative case, the two are equivalent and are also equivalent to convergence in distribution \cite{Billingsley}. This follows from L\'{e}vy's continuity Theorem (convergence in distribution is equivalent to the convergence of the characteristic functions) and the fact that convergence in distribution is equivalent to convergence in moments, provided that the limit distribution is completely characterised by its moments.
% thanks to the following results \cite{Billingsley}:
% 
%1. L\'{e}vy's continuity Theorem states that convergence in distribution is equivalent to the convergence of the characteristic functions. 
%2. Convergence in distribution is equivalent to convergence in moments, provided that the limit distribution is completely characterised by its moments. In particular the normal distribution satisfies the last condition.
In the quantum case we cannot talk about a joint distribution of non-commuting variables and the two results seem to be unrelated. 
\end{remark}

For later purposes we formulate now the extension of Theorem 
\ref{th.orthog.hermite.noncomm} to the case of the algebra 
$CCR(L^{2}_{0}(\rho))$. We denote by $\{F_{1},\dots F_{d^{2}-1}\}$ the orthonormal basis of $L^{2}_{0}(\rho)$ defined earlier 
$$
\{d_{i} : i=1,\dots d-1\} \cup \{t_{j,k}: 1\leq j\neq k\leq d\}.
$$
arranged in a particular order.
\begin{corollary}\label{cor.hermite.orthogonal}
Let $\mathcal{P}_{c}\subset L^{2}(\varphi)$ be the linear span of polynomials of order up to $c$ in the canonical variables $\mathbb{G}(F_{i})$, and denote by 
$Z_{c}$ its orthogonal projection. Then the symmetric product of (non-commuting) Hermite polynomials
$$
H_{\bf m}:=\mathcal{S}\left[ \prod_{i} H_{m_{i}}( \mathbb{G}(F_{i})) \right],
$$
is orthogonal on $\mathcal{P}_{c-1}$, for any set ${\bf m}:=\{ m_{i}\}$ of multiplicites with $|{\bf m}|=c$. In particular, 
$$
Z_{c-1}^{\perp} \mathcal{S}\left [\prod_{i} \mathbb{G}(F_{i})^{m_{i}} \right] =
H_{\bf m}.
$$

\begin{proof}

The result follows form Theorem \ref{th.orthog.hermite.noncomm} applied 
to each of the quantum oscillators $(Q_{j,k}, P_{j,k})$ in $CCR(L^{2}_{0}(\rho))$ together with the classical counterpart for the `classical' canonical variables $G_{i}$.

Indeed by writing the symmetric operators and the projections in terms of tensor products over the $d(d-1)/2$ oscillators and $(d-1)$ independent normal variables, we can derive the corrollary from the following lemma.

\end{proof}
\begin{lemma}
Let 
$$
\mathcal{H}_{a} = \bigoplus_{i\in \mathbb{N}} \mathcal{H}_{a,i} , \qquad a=1\dots k,
$$
be arbitrary decompositions of Hilbert spaces $\mathcal{H}_{a}$ into orthogonal subspaces with projections $P_{a,i}$. Let 
$$
Q_{a,i}= \bigoplus_{b=1}^{a} P_{b,i}
$$
 be the projections onto vectors in $\mathcal{H}_{a}$ of `order less or equal to $a$'. Similarly let $Z_{c}$ be the projection onto the vectors of the tensor product 
 $\mathcal{H}_{1}\otimes\dots\otimes \mathcal{H}_{k}$ 
 of total order less or equal to $c$ 
$$
Z_{c}:= \bigoplus_{\{ m_{a} \}} P_{1, m_{1} } \otimes \dots \otimes P_{k ,m_{k} },
$$
where the direct sum runs over all multiplicities ${\bf m}={\{ m_{a} \}}$ with 
$|{\bf m}|\leq c$. 

Then for any $k$-tuple $\{\psi_{a}= Q_{m_{a}} \psi_{a} :a=1,\dots k\}$ with $|{\bf m}|=c$ we have
$$
Z_{c-1}^{\perp} \left( \bigotimes_{a} \psi_{a} \right)=  
\bigotimes_{a} P_{a,m_{a}}\psi_{a}= \bigotimes_{a} Q_{a,m_{a}-1}^{\perp}\psi_{a}.
$$

\end{lemma}

\end{corollary}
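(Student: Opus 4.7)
The plan is to expand each $\psi_a$ in the orthogonal decomposition $\mathcal{H}_a = \bigoplus_i \mathcal{H}_{a,i}$, distribute over the tensor product, and then read off which summands survive the projection $Z_{c-1}^\perp$. The whole argument is essentially combinatorial bookkeeping on multi-indices, once we exploit the constraint $|\mathbf{m}|=c$.

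First, using the hypothesis $\psi_a = Q_{a,m_a}\psi_a$ together with the fact that $Q_{a,m_a}$ is the sum of the mutually orthogonal $P_{a,b}$ for $b\leq m_a$, I would write
\begin{equation*}
\psi_a \;=\; \sum_{j=0}^{m_a} P_{a,j}\psi_a,
\end{equation*}
and expand the tensor product by multilinearity:
\begin{equation*}
\bigotimes_{a=1}^{k}\psi_a \;=\; \sum_{\mathbf{j}:\, 0\leq j_a\leq m_a} \bigotimes_{a=1}^{k} P_{a,j_a}\psi_a.
\end{equation*}

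Next I would observe that the summand indexed by $\mathbf{j}$ lies in the range of $P_{1,j_1}\otimes\dots\otimes P_{k,j_k}$, which by definition of $Z_c$ is a subspace of $\mathrm{Ran}(Z_{|\mathbf{j}|})$. Moreover, because the projections $P_{a,\cdot}$ are mutually orthogonal within each $\mathcal{H}_a$, the tensor-product projections for different tuples $\mathbf{j}$ have mutually orthogonal ranges; consequently, the summands of the above expansion are themselves pairwise orthogonal. Applying $Z_{c-1}^\perp$ therefore kills every summand with $|\mathbf{j}|\leq c-1$ and retains those with $|\mathbf{j}|\geq c$.

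The key combinatorial step is then to note that under the constraints $0\leq j_a\leq m_a$ and $\sum_a m_a = |\mathbf{m}| = c$, the only multi-index satisfying $|\mathbf{j}|\geq c$ is $\mathbf{j}=\mathbf{m}$. This gives
\begin{equation*}
Z_{c-1}^\perp\!\left(\bigotimes_a \psi_a\right) \;=\; \bigotimes_a P_{a,m_a}\psi_a,
\end{equation*}
and the second equality in the lemma, $P_{a,m_a}\psi_a = Q_{a,m_a-1}^\perp \psi_a$, follows at once from the orthogonal decomposition $Q_{a,m_a} = Q_{a,m_a-1} \oplus P_{a,m_a}$ together with the hypothesis $\psi_a = Q_{a,m_a}\psi_a$. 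I do not anticipate a serious obstacle here; the only point that needs care is the justification for interchanging direct sums with tensor products, which is immediate when each $\psi_a$ lies in the finite-dimensional subspace $\mathrm{Ran}(Q_{a,m_a})$, as is the case by assumption.
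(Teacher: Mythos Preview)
Your argument for the lemma is correct and is essentially the only natural one: decompose each $\psi_a$ along the grading of $\mathcal{H}_a$, distribute over the tensor product, and use that the constraint $j_a\le m_a$ together with $\sum_a m_a=c$ forces $\mathbf{j}=\mathbf{m}$ among the summands not annihilated by $Z_{c-1}^\perp$. The paper in fact states the lemma without proof, so you are supplying precisely the piece that was left to the reader; there is no alternative approach in the paper to compare against.

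What you have \emph{not} addressed is the corollary itself, i.e.\ the passage from the abstract Hilbert-space lemma to the Hermite-polynomial statement. The paper's proof of the corollary is exactly that reduction: one uses the factorisation of $(CCR(L^2_0(\rho)),\varphi)$ into $d(d-1)/2$ independent thermal oscillators and $d-1$ classical Gaussians (established in Section~\ref{sec.clt}) to identify $L^2(\varphi)$ with a tensor product $\bigotimes_a \mathcal{H}_a$, each $\mathcal{H}_a$ graded by polynomial degree; the symmetric product $\mathcal{S}[\prod_i \mathbb{G}(F_i)^{m_i}]$ then becomes a tensor $\bigotimes_a \psi_a$ with $\psi_a$ of degree at most $m_a$; and Theorem~\ref{th.orthog.hermite.noncomm} (respectively its classical analogue) identifies $Q_{a,m_a-1}^\perp\psi_a$ with the appropriate Hermite polynomial in each factor. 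A sentence or two making this identification explicit would complete your write-up, since as it stands the link between the lemma you prove and the Hermite-polynomial conclusion of the corollary is not spelled out.
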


%%%%%%%%%%%%%%%%%%%%%%%%%%%%%%%%%%%%%%%%%%%%%%
\section{The Hoeffding decomposition}\label{sec.q.hoeffding}
%%%%%%%%%%%%%%%%%%%%%%%%%%%%%%%%%%%%%%%%%%%%
This section deals with the generalisation of the concept of Hoeffding decomposition to non-commuting variables. When seen as a Hilbert space construction, the difference between classical and quantum cases disappears and one obtains the same result as the classical one. However we 
include the proofs for the sake of clarity.

As before, given $H\in M(\mathbb{C}^{d})$ and any $1\leq i\leq n$ we denote by 
$H^{(i)}$ the operator on $(\mathbb{C}^{d})^{\otimes n}$
\begin{equation}\label{eq.action.position.i}
H^{(i)} := \mathbf{1} \otimes \dots \otimes\mathbf{1} \otimes H\otimes \mathbf{1} \otimes
\dots \otimes \mathbf{1}
\end{equation}
with $H$ acting on the $i$-th term of the tensor product.

\begin{definition}
Let $\rho^{\otimes n}\in M(\mathbb{C}^{d})^{\otimes n}$ be the joint state of $n$ independent identically prepared systems. Let $A$ be a subset of 
$\{1,\dots ,n\}$. The conditional expectation 
$$
\mathbb{E}(\cdot | A ): M(\mathbb{C}^{d})^{\otimes n} \to M(\mathbb{C}^{d})^{\otimes n}
$$
is the linear extension of the map
$$
\mathbb{E}(K_{1}\otimes \dots\otimes K_{n} |A) := 
\prod_{j\in A^{c}} {\rm Tr}(\rho K_{j}) \cdot \prod_{i\in A} K_{i}^{(i)} 
$$
In particular we define
$$
\mathbb{E}(K|\emptyset):= \mathbb{E}(K)= {\rm Tr}(\rho^{\otimes n} K). 
$$
\end{definition}

%\begin{definition}
%Let $\rho\in M(\mathbb{C}^{d})$ be a density matrix and let $M(\mathbb{C}^{d})_{sa}$ be the real-linear space of selfadjoint operators on $\mathbb{C}^{d}$. On 
%$M(\mathbb{C}^{d})_{sa}$ we define the inner product
%$$
%\langle H, K\rangle_{\rho}:= {\rm Tr}(\rho H \circ K), \qquad H \circ K:= \frac{HK+KH}{2} 
%$$
%and denote by $L^{2}(\rho)$ the resulting Hilbert space.
%\end{definition}

\begin{lemma}
Let $A$ be a subset of $\{1,\dots ,n\}$. Let $L^{2}(\rho^{\otimes A})$ 
be the subspace of $L^{2}(\rho^{\otimes n})$ consisting of operators acting only on 
tensors in $A$
$$
L^{2}(\rho^{\otimes A}) = {\rm Lin}\left\{\prod_{i\in A} K_{i}^{(i)} \right\}
$$
and let $Q_{A}$ be the corresponding orthogonal projection.

Then 
$$
\mathbb{E}(K|A)= Q_{A}(K), \qquad K\in L^{2}(\rho^{\otimes n}).
$$
\end{lemma}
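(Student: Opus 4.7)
The plan is to verify that $\mathbb{E}(\cdot|A)$ and $Q_A$ agree by checking the two defining properties of the orthogonal projection: (i) $\mathbb{E}(K|A) \in L^2(\rho^{\otimes A})$ for every $K$, and (ii) $K - \mathbb{E}(K|A)$ is orthogonal to $L^2(\rho^{\otimes A})$. Property (i) is immediate from the defining formula, since $\mathbb{E}(K_1 \otimes \cdots \otimes K_n | A)$ is a scalar multiple of $\prod_{i\in A} K_i^{(i)}$, which lies in $L^2(\rho^{\otimes A})$ by the very definition of that subspace. By linearity of both maps, (ii) need only be verified on elementary tensors $K = K_1 \otimes \cdots \otimes K_n$, which span $M(\mathbb{C}^{d})^{\otimes n}$; likewise, by linearity it suffices to test orthogonality against product operators $L = \prod_{i\in A} L_i^{(i)}$, which span $L^2(\rho^{\otimes A})$.

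The key computation is that on such product operators the tensor structure of $\rho^{\otimes n}$ lets the trace factorise across positions. Since $L$ acts as the identity on each $j\in A^{c}$,
$$
{\rm Tr}(\rho^{\otimes n}\,K L) = \prod_{j\in A^{c}}{\rm Tr}(\rho K_j) \cdot \prod_{i\in A}{\rm Tr}(\rho K_i L_i),
$$
and analogously ${\rm Tr}(\rho^{\otimes n}\,L K) = \prod_{j\in A^{c}}{\rm Tr}(\rho K_j) \cdot \prod_{i\in A}{\rm Tr}(\rho L_i K_i)$. Averaging these (to account for the symmetric product $K\circ L$) yields
$$
(K,L)_{\rho^{\otimes n}} = \prod_{j\in A^{c}}{\rm Tr}(\rho K_j) \cdot \Bigl(\prod_{i\in A} K_i^{(i)} ,\, L\Bigr)_{\rho^{\otimes n}} .
$$
The same scalar prefactor $\prod_{j\in A^{c}}{\rm Tr}(\rho K_j)$ is precisely the one produced by the definition of $\mathbb{E}(K|A)$, so computing $(\mathbb{E}(K|A),L)_{\rho^{\otimes n}}$ yields exactly the same expression. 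Therefore $(K - \mathbb{E}(K|A), L)_{\rho^{\otimes n}} = 0$ for all $L\in L^2(\rho^{\otimes A})$, which together with (i) characterises $\mathbb{E}(K|A)$ as $Q_A(K)$.

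I do not foresee a genuine obstacle. The one point requiring minor care is the symmetric product $\circ$ inside the definition of $(\cdot,\cdot)_{\rho}$: since $\rho$ does not in general commute with $K_i$ or $L_i$, one cannot collapse ${\rm Tr}(\rho K_i L_i)$ and ${\rm Tr}(\rho L_i K_i)$ into a single factor. However, both orderings appear symmetrically on each side of the identity, and on the positions $j\in A^{c}$ the identity operator $L_j=\mathbf{1}$ forces ${\rm Tr}(\rho K_j L_j) = {\rm Tr}(\rho L_j K_j) = {\rm Tr}(\rho K_j)$, so the $A^{c}$-factor cleanly decouples as a scalar in both orderings. This symmetric decoupling is what makes the computation go through verbatim.
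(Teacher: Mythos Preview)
Your argument is correct: you verify the two characterising properties of the orthogonal projection, and the factorisation of the trace over tensor positions is exactly what makes the $A^{c}$-part split off as the scalar $\prod_{j\in A^{c}}\mathrm{Tr}(\rho K_{j})$ on both sides. The remark about the symmetric product is well placed; the point is that the common scalar factor over $A^{c}$ pulls out of \emph{both} orderings $\mathrm{Tr}(\rho^{\otimes n}KL)$ and $\mathrm{Tr}(\rho^{\otimes n}LK)$ simultaneously, so the identity $(K,L)_{\rho^{\otimes n}}=(\mathbb{E}(K|A),L)_{\rho^{\otimes n}}$ holds without needing the two orderings to coincide on the $A$-positions.

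As for comparison: the paper states this lemma without proof, treating it as a routine verification. Your write-up supplies precisely the details one would fill in, and follows the natural route (it is essentially the standard proof that conditional expectation is orthogonal projection, transported to the non-commutative $L^{2}$ with the symmetrised inner product). There is nothing to contrast.
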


%%%%%%%%%%%%%%%%%%%%%%%%%%%%%%%%%%%%%%%%%
%\section{The Hoeffding decomposition}
%%%%%%%%%%%%%%%%%%%%%%%%%%%%%%%%%%%%%%%%%%

%
%We consider now $n$ systems prepared independently in the same state $\rho$. 
%It is easy to verify that $L^{2}(\rho^{\otimes n})= L^{2}(\rho)^{\otimes n}$. 

%To each subset $\alpha=\{\alpha_{1} ,\dots ,\alpha_{k}\}$ of $\{1,\dots n\}$ there corresponds a subspace $L^{2}(\rho)^{\otimes \alpha}$ the of $L^{2}(\rho^{\otimes n})$ consisting of operators acting only on the positions $\alpha_{1},\dots ,\alpha_{k}$ i.e.
%$$
%A= \sum_{a} \prod_{i=1}^{k} A_{a,i}^{\alpha_{k}}
%$$
%where we have used the notation \eqref{eq.action.position.i} for $A_{a,i}^{\alpha_{k}}$. The corresponding projections $Q_{\alpha}$ are ordered by inclusei

\begin{remark}
Note that
\begin{enumerate}
\item
$Q_{\emptyset}$ is the one dimensional projection onto the identity $\mathbf{1}$ seen as 
a vector in $ L^{2}(\rho^{\otimes n})$;
\item
the map $A\mapsto L^{2}(\rho^{\otimes A})$ respects the order by inclusion
$$
L^{2}(\rho^{\otimes B}) \subset L^{2}(\rho^{\otimes A})  
\qquad {\rm iff} \qquad
B\subset A;
$$
\item
the map $A\mapsto L^{2}(\rho^{\otimes A})$ respects intersections
$$
L^{2}(\rho^{\otimes B}) \cap L^{2}(\rho^{\otimes A}) =
L^{2}( \rho^{\otimes B\cap A}) , \qquad {\rm or}\qquad
Q_{A}\cdot Q_{B}= Q_{A\cap B}.
$$
\end{enumerate}
\end{remark}

By the same construction as in the classical case we can define for each $A$ a subspace 
$\mathcal{H}_{A}$ of $L^{2}(\rho^{\otimes A})$ consisting of vectors which are orthogonal to all spaces $\mathcal{H}_{B}$ with $B\neq A$.

\begin{definition}\label{def.HA}
Let $A$ be a subset of $\{1,\dots , n\}$. We denote by $\mathcal{H}_ {A}$ the subspace of $L^{2}(\rho^{\otimes A})$ consisting of operators which satisfy 
$$
\mathbb{E}(K|B)=0, \qquad \mathrm{for~ all} ~B: |B|<|A|.
$$
\end{definition}
It can be ckecked directly that the first such projections are
\begin{eqnarray*}
P_{\emptyset}(H) &=& \mathbb{E} (H) \mathbf{1}\\
P_{\{i\}}(H)&=& \mathbb{E}(H| \{i\}) -   \mathbb{E}(H) \mathbf{1}\\
P_{\{i,j\}} (H)&=&  \mathbb{E}(H| \{i,j\})-  \mathbb{E}(H| \{i\})- \mathbb{E}(H| \{j\})+ 
 \mathbb{E}( H) \mathbf{1}
\end{eqnarray*}

\begin{lemma}\label{orthogonality.Pa} 
Let $P_{A}$ be the orthogonal projection onto $\mathcal{H}_{A}$. Then
$$
P_{A} \cdot P_{B}=0 ,\qquad A\neq B,
$$
i.e. the spaces $\{\mathcal{H}_{A}\}$ are mutually orthogonal. 
\end{lemma}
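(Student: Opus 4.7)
The plan is to reduce the claim to three basic facts from the preceding remark: (i) $Q_A$ is the orthogonal projection of $L^2(\rho^{\otimes n})$ onto $L^2(\rho^{\otimes A})$, (ii) $\mathcal{H}_A \subset L^2(\rho^{\otimes A})$ so $Q_A L = L$ for $L\in\mathcal{H}_A$, and (iii) $Q_A Q_B = Q_{A\cap B}$. Combined with self-adjointness of orthogonal projections, moving $Q_A$ across the inner product will reduce the computation to $Q_{A\cap B}K$ for some $K\in \mathcal{H}_B$, which vanishes whenever $|A\cap B|<|B|$ by Definition \ref{def.HA}.

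Without loss of generality I would assume $|A|\leq|B|$ and split into two cases. In the first, $|A|<|B|$: for any $K\in\mathcal{H}_B$, Definition \ref{def.HA} gives $\mathbb{E}(K\mid A) = Q_A K = 0$, so for any $L\in\mathcal{H}_A$,
\[
\langle L, K\rangle_{\rho^{\otimes n}} = \langle Q_A L, K\rangle_{\rho^{\otimes n}} = \langle L, Q_A K\rangle_{\rho^{\otimes n}} = 0.
\]
In the second case $|A|=|B|$ with $A\neq B$, the intersection satisfies $|A\cap B|<|B|$. Using $Q_A Q_B = Q_{A\cap B}$ together with $Q_A L = L$ and $Q_B K = K$,
\[
\langle L, K\rangle_{\rho^{\otimes n}} = \langle Q_A L, Q_B K\rangle_{\rho^{\otimes n}} = \langle L, Q_A Q_B K\rangle_{\rho^{\otimes n}} = \langle L, Q_{A\cap B} K\rangle_{\rho^{\otimes n}} = 0,
\]
where the last equality uses $Q_{A\cap B} K = \mathbb{E}(K\mid A\cap B) = 0$ since $K\in\mathcal{H}_B$.

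The only substantive ingredient is the multiplicativity $Q_A Q_B = Q_{A\cap B}$, stated but not proved in the remark. I would verify it by checking both projections agree on the dense set of product operators $K_1\otimes\cdots\otimes K_n$: applying $Q_B$ replaces each $K_j$ with $j\notin B$ by $\mathrm{Tr}(\rho K_j)\mathbf{1}$, and then $Q_A$ does the same for $j\notin A$, giving exactly the action of $Q_{A\cap B}$. The identification $Q_A = \mathbb{E}(\cdot\mid A)$ was already provided in the preceding lemma, and self-adjointness of $Q_A$ is automatic since it is an orthogonal projection. Thus the only real work is in bookkeeping the cardinality inequalities and the symmetric role of $A$ and $B$; there is no serious analytic obstacle.
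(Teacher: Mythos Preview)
Your proof is correct and follows essentially the same route as the paper: both hinge on the identity $Q_A Q_B = Q_{A\cap B}$ together with the defining property that $Q_C K = 0$ for $K\in\mathcal{H}_B$ whenever $|C|<|B|$. The paper streamlines the argument into a single case by assuming (without loss of generality) that $B$ does not contain $A$, so that $|A\cap B|<|A|$ and hence $Q_B H = Q_{A\cap B}H = 0$ for $H\in\mathcal{H}_A$; your two-case split by cardinality is a harmless variant of the same idea.
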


\begin{proof}
Suppose that $B$ does not contain $A$. Let $H\in \mathcal{H}_{A}$, then  
$$
\mathbb{E}(H | B) = \mathbb{E}(H |A \cap B)=0
$$
hence $Q_{B} (H)=0$. Since $P_{B}\leq Q_{B}$ we obtain $P_{B}(H)=0$. 
\end{proof}

\begin{theorem}[Hoeffding decomposition]
The sets of projections $\{P_{A}\} $ and $\{Q_{A}\}$ are related by
$$
P_{A}= \sum_{B\subset A} (-1)^{|A|-|B|} Q_{B} 
\qquad
{\rm and}
\qquad
Q_{A} = \sum_{B\subset A} P_{B}.
$$
\end{theorem}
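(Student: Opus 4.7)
The plan is to prove the first identity $P_A = \sum_{B \subset A} (-1)^{|A|-|B|} Q_B$ directly, and to deduce the second by Möbius inversion on the Boolean lattice of subsets of $A$. The entire argument transcribes the classical proof verbatim because the orthogonal projections $Q_A$ onto $L^{2}(\rho^{\otimes A})$ obey the multiplicative rule $Q_A Q_B = Q_{A\cap B}$ recorded in the remark preceding Definition \ref{def.HA}; that is the one non-trivial quantum input, and it follows from the factorisation of the conditional expectation on an i.i.d.\ tensor-product state.

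First I would define $\tilde{P}_A := \sum_{B \subset A}(-1)^{|A|-|B|} Q_B$ and verify that it is an orthogonal projection. Self-adjointness is immediate from $Q_B^{*}=Q_B$. For idempotence, I would expand
\[
\tilde{P}_A^{\,2} \;=\; \sum_{B,B'\subset A} (-1)^{|A|-|B|+|A|-|B'|}\, Q_{B \cap B'},
\]
group the terms by $C := B \cap B'$, and parametrise $B = C \sqcup S$, $B' = C \sqcup T$ with $S,T$ disjoint subsets of $A\setminus C$. Each element of $A\setminus C$ contributes the factor $(-1)^{1}+(-1)^{1}+(-1)^{0} = -1$ (corresponding to whether it lies in $S$ only, $T$ only, or neither), giving an overall coefficient $(-1)^{|A|-|C|}$ and hence $\tilde{P}_A^{\,2}=\tilde{P}_A$.

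Next I would identify the range of $\tilde{P}_A$ with $\mathcal{H}_A$. For $H \in \mathcal{H}_A$, one has $Q_A H = H$ and $Q_B H = \mathbb{E}(H|B)=0$ for every $B \subsetneq A$, so only the $B=A$ term survives and $\tilde{P}_A H = H$. For an arbitrary $H \in L^{2}(\rho^{\otimes n})$, each $Q_B H$ with $B\subset A$ lies in $L^{2}(\rho^{\otimes A})$, hence $\tilde{P}_A H \in L^{2}(\rho^{\otimes A})$; it remains to check $Q_C \tilde{P}_A H = 0$ for every $C \subsetneq A$. Using $Q_C Q_B = Q_{B\cap C}$ and regrouping by $D := B\cap C$ with $B = D \sqcup S$, $S \subset A\setminus C$, the prefactor becomes $\sum_{S \subset A\setminus C}(-1)^{|S|}$, which vanishes because $A\setminus C$ is non-empty. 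Together with self-adjointness and idempotence, this shows that $\tilde{P}_A$ is the orthogonal projection onto $\mathcal{H}_A$, i.e.\ $\tilde{P}_A = P_A$.

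Finally, the second identity follows by Möbius inversion on the Boolean lattice: substituting the first formula and swapping the order of summation,
\[
\sum_{B \subset A} P_B \;=\; \sum_{C \subset A} Q_C \!\!\sum_{B:\, C\subset B\subset A} (-1)^{|B|-|C|} \;=\; Q_A,
\]
because $\sum_{B:\, C\subset B\subset A}(-1)^{|B|-|C|} = \sum_{S\subset A\setminus C}(-1)^{|S|} = \delta_{A,C}$. There is no genuine obstacle here; the main (and purely bookkeeping) difficulty is keeping track of the sign cancellations in the two combinatorial sums, and the quantum content is entirely absorbed in the Boolean-lattice properties of the $Q_A$ already established.
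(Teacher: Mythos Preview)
Your proof is correct and follows essentially the same route as the paper: define $\tilde{P}_A$ by the alternating sum of the $Q_B$ and verify that it is the orthogonal projection onto $\mathcal{H}_A$, using only the lattice identity $Q_A Q_B = Q_{A\cap B}$. The paper differs only in execution details --- it checks $\langle H-\tilde{P}_A H, K_A\rangle=0$ for $K_A\in\mathcal{H}_A$ rather than verifying idempotence directly, and derives the second identity by induction on $|A|$ rather than by the one-line M\"obius inversion you give --- but these are cosmetic variations of the same argument.
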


\begin{proof}

Let us denote $\tilde{P}_{A}:=\sum_{B\subset A} (-1)^{|A|-|B|} Q_{B} $. 
Let $C\subsetneqq A$.
By multiplying the right side of the first identity by $Q_{C}$ we get
\begin{eqnarray*}
\tilde{P}_{A}Q_{C}&=&\sum_{B\subset A} (-1)^{|A|-|B|} Q_{B\cap C}=
\sum_{D\subset C} Q_{D} \sum_{D\subset B\subset A} (-1)^{|A|-|B|}  \\
&=&\sum_{D\subset C} Q_{D} \sum_{j=0}^{|A|-|C|} (-1)^{|A|- |D|-j} \binom{|A|-|C|}{j}=0
\end{eqnarray*}
since the interior sum is equal to zero by the binomial formula.This means that 
$
{\rm Ran}\left(\tilde{P}_{A}\right) \subset \mathcal{H}_{A}.
$ 
Now it suffices to show that for any $H\in L^{2}(\rho^{\otimes n})$  and 
$K_{A} \in \mathcal{H}_{A}$ we have
$$
\langle H- \tilde{P}_{A}(H) , K_{A}  \rangle_{\rho^{\otimes n}} =
\langle P_{A}(H)- \tilde{P}_{A}(H) , K_{A}  \rangle_{\rho^{\otimes n}}  =0.
$$
By the definition of  $\tilde{P}_{A}$ we have
$$
\langle H- \tilde{P}_{A}(H) , K_{A}  \rangle_{\rho^{\otimes n}}=
\langle H- Q_{A}(H) , K_{A}  \rangle_{\rho^{\otimes n}} -
\sum_{B\subsetneqq A} (-1)^{|A|-|B|} 
\langle Q_{B}(H), Q_{B}(K_{A})\rangle_{\rho^{\otimes n}}=0
$$
since $(1-Q_{A}) P_{A}=0$ and $Q_{B}P_{A}=0$ for $B\subsetneqq A$.

Using the first identity, we prove the second identity by induction on $r=|A|$. For 
$A=\emptyset$ we have $P_{A}= Q_{A}$ by definition. Suppose that the assertion is true for all sets of size $0,\dots r-1$, and let $A$ be a set of size $r$. 
Then from the first identity, and using the induction hypothesis we get
\begin{eqnarray*}
Q_{A}
&=& 
P_{A} - \sum_{B\subsetneqq A} (-1)^{|A| - |B| }\sum_{C\subset B} P_{C}\\
&=& 
P_{A}- \sum_{C\subsetneqq A} P_{C} \sum_{ B}  (-1)^{|A| - |B|}\\
&=&
\sum_{C \subset A} P_{C}.
\end{eqnarray*}
The interior sum in the second line is over $ \{ B: B\subsetneqq A ~{\rm and}~ B \supset C\}$ 
and can be easily computed using the binomial formula.

\end{proof}

%%%%%%%%%%%%%%%%%%%%%%%%%%%%%%%%%%%%%%%%%%
\section{Quantum U-statistics}\label{sec.quantum.u.stat}
%%%%%%%%%%%%%%%%%%%%%%%%%%%%%%%%%%%%%%%%%%%%

In this section we introduce the notion of quantum $U$-statistic and prove the main result formulated in the convergence Theorem \ref{th.conv.moments.ustat}.

 A quantum U-statistic is defined in analogy to its classical counterpart. We replace the sample $X_{1},\dots, X_{n}$ by $n$ independent, identical quantum systems, each of them prepared in a state $\rho\in M(\mathbb{C}^{d})$. The joint state $\rho^{\otimes n}$ is invariant under the action of the symmetric group 
$S(n)$ 
$$
\rho^{\otimes n} = 
\pi_{d}(\tau)\, \rho^{\otimes n}\, \pi_{d}(\tau)^{*},\qquad \tau\in S(n),
$$ 
where $\pi_{d}(\tau)$ is the natural unitary representation of $S(n)$ on $(\mathbb{C}^{d})^{\otimes n}$ 
$$
\pi_{d}(\tau): \psi_{1} \otimes   \dots \otimes  \psi_{n} \mapsto 
\psi_{\tau^{-1}(1)} \otimes \dots \otimes  \psi_{\tau^{-1}(n)}.
$$
Additionally, $(\mathbb{C}^{d})^{\otimes n}$ carries the tensor representation of the special unitary group $SU(d)$ 
$$
\pi_{n} (u) :\psi_{1} \otimes   \dots \otimes  \psi_{n} \mapsto u \psi_{1} \otimes   \dots \otimes  u \psi_{n}  , \qquad u\in SU(d).
$$
These two representations are related by the following duality \cite{Goodman&Wallach}. 
\begin{theorem}[Weyl duality]\label{th.sud.sn}
Let $\pi_{n}$ and $\pi_{d}$ be the representations of $SU(d)$ and respectively 
$S(n)$ on $(\mathbb{C}^{d})^{\otimes n}$. Then the representation space decomposes into a direct sum of tensor products of irreducible representations of $SU(d)$ and 
$S(n)$ indexed by Young diagrams $\lambda$ with $d$ lines and $n$ boxes:
\begin{eqnarray*}
(\mathbb{C}^{d})^{\otimes n} &\cong& 
\bigoplus_{\lambda}
\mathcal{H}_{\lambda}\otimes \mathcal{K}_{\lambda} ,\\
\pi_{n} \equiv 
\bigoplus_{\lambda} 
\pi_{\lambda} \otimes \mathbf{1}_{\mathcal{K}_{\lambda}}, &&
\pi_{d} \equiv
\bigoplus_{\lambda} 
\mathbf{1}_{\mathcal{H}_{\lambda}} \otimes \pi_{\lambda}.
\end{eqnarray*}
\end{theorem}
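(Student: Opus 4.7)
The plan is to prove the theorem via the double commutant approach — the standard path to Schur--Weyl duality. Set $W := (\mathbb{C}^{d})^{\otimes n}$, and let $\mathcal{A} \subseteq \mathcal{B}(W)$ denote the unital algebra generated by $\{\pi_{n}(u) : u \in SU(d)\}$, and $\mathcal{B} \subseteq \mathcal{B}(W)$ the group algebra of $\pi_{d}(S_{n})$. Since $\pi_{n}(u)$ acts diagonally on the tensor factors while $\pi_{d}(\tau)$ permutes them, one has $[\pi_{n}(u), \pi_{d}(\tau)] = 0$ for every $u, \tau$, so $\mathcal{A} \subseteq \mathcal{B}'$ and symmetrically $\mathcal{B} \subseteq \mathcal{A}'$.

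The central step is to promote one of these inclusions to an equality; I would prove $\mathcal{A} = \mathcal{B}'$. First, a standard density argument (scaling $SU(d)$ up to $U(d)$ by absorbing an $n$-th root of unity into the identity, and then applying Weyl's unitary trick to pass from $U(d)$ to $GL(d,\mathbb{C})$ and finally to all of $M_{d}(\mathbb{C})$) shows that $\mathcal{A}$ equals the complex linear span of $\{A^{\otimes n} : A \in M_{d}(\mathbb{C})\}$. By the polarisation identity for symmetric multilinear forms, this span coincides with the symmetric subspace $\mathrm{Sym}^{n}(M_{d}(\mathbb{C})) \subset M_{d}(\mathbb{C})^{\otimes n}$. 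Second, under the natural identification $\mathcal{B}(W) \cong M_{d}(\mathbb{C})^{\otimes n}$, the action of $S_{n}$ on $W$ becomes the permutation of tensor factors, so the commutant $\mathcal{B}'$ is exactly the $S_{n}$-fixed subspace, which is again $\mathrm{Sym}^{n}(M_{d}(\mathbb{C}))$. Hence $\mathcal{A} = \mathcal{B}'$; since $\mathcal{B}$ is automatically a $*$-algebra on a finite-dimensional Hilbert space, the double commutant theorem then gives $\mathcal{A}' = \mathcal{B}'' = \mathcal{B}$.

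Once $\mathcal{A}$ and $\mathcal{B}$ are identified as mutual commutants in $\mathcal{B}(W)$, the structure theorem for a pair of mutually commutant semisimple subalgebras produces a canonical decomposition
\[
W \;\cong\; \bigoplus_{\lambda \in \Lambda} \mathcal{H}_{\lambda} \otimes \mathcal{K}_{\lambda},
\]
with a common index set $\Lambda$, on which $\mathcal{A}$ acts irreducibly on the first tensor factor and trivially on the second, while $\mathcal{B}$ does the reverse. Since $\mathcal{B}$ is the full group algebra of $S_{n}$ acting on $W$, the $\mathcal{K}_{\lambda}$ are distinct irreducible $S_{n}$-modules, hence indexed by partitions $\lambda \vdash n$. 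To finish I would argue that only Young diagrams with at most $d$ rows contribute: for $\lambda$ with more than $d$ parts the associated Young symmetriser involves an antisymmetrisation over a column of length $>d$, forcing the image of the corresponding projector in $W$ to vanish (any $d+1$ vectors in $\mathbb{C}^{d}$ are linearly dependent).

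The main obstacle is the polarisation/density step identifying $\mathcal{A}$ with $\mathrm{Sym}^{n}(M_{d}(\mathbb{C}))$, since this is where the nonlinear generating set $\{u^{\otimes n} : u \in SU(d)\}$ must be turned into a full linear subspace; everything after that is a mechanical application of the double commutant theorem and the classical parametrisation of $S_{n}$-irreducibles by Young diagrams.
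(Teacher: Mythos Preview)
Your proposal is a correct sketch of the standard double-commutant proof of Schur--Weyl duality. However, there is nothing to compare it to: the paper does not prove this theorem at all. It is stated as a classical result and cited from \cite{Goodman&Wallach}; the paper immediately remarks that it only needs the consequence $\mathcal{A}_{n} = \mathcal{B}_{n}'$ and moves on. Your argument is essentially the textbook proof (and indeed close to how Goodman--Wallach do it), so in that sense it matches the cited reference, but as far as the paper itself is concerned there is no ``paper's own proof'' to evaluate against.
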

We will need this result only in as much as saying that the matrix algebras 
$\mathcal{A}_{n}:= {\rm Alg}(\pi_{n} (SU(d)))$ and $\mathcal{B}_{n}:= {\rm Alg}(\pi_{d} (S(n)))$ generated by the two group representations are each other's commutant: 
$
\mathcal{A}_{n}=\mathcal{B}_{n}^{\prime}.
$
Recall that the commutant of a set of operators $\mathcal{M}\subset \mathcal{B}(\mathcal{H})$ is defined as
$$
\mathcal{M}^{\prime} = \{N \in  \mathcal{B}(\mathcal{H}) : [N, M]=0, {\rm any~} M\in \mathcal{M}\}.
$$
The following algebraic lemma will be useful in analysing the asymptotic distribution of $U$-statistics.
\begin{lemma}
The algebra $\mathcal{A}_{n}$ of operators which are invariant under permuatations is generated by the collective observables 
$
\bar{A}:=\sum_{i =1}^{n} A^{(i)}
$
with $A\in M(\mathbb{C}^{d})$. Moreover, any symmetric (permutation invariant) operator can be written as a `symmetric polynomial'  in collective observables, i.e. a linear combination of operators of the form
\begin{equation}\label{eq.def.symmetric.poly}
\mathcal{S}(\bar{A}_{1}, \dots, \bar{A}_{k}):= \frac{1}{k!} \sum_{\tau\in S(k)} \bar{A}_{\tau(1)}\dots \bar{A}_{\tau(k)}.
\end{equation}

\end{lemma}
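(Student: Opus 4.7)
The plan is to prove the two assertions in turn: first the generation claim, then the symmetric-polynomial form.

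\emph{Generation.} Each collective observable $\bar A=\sum_{i=1}^n A^{(i)}$ is manifestly invariant under conjugation by $\pi_d(\tau)$, since this conjugation merely relabels the summation index; consequently ${\rm Alg}\{\bar A:A\in M(\mathbb{C}^d)\}\subseteq \mathcal{A}_n$. For the reverse inclusion I invoke Theorem \ref{th.sud.sn}, which identifies $\mathcal{A}_n={\rm Alg}(\pi_n(SU(d)))$. Since $SU(d)$ is connected, every $u\in SU(d)$ has the form $u=\exp(iA)$ for some $A\in M(\mathbb{C}^d)_{sa}$, and then $\pi_n(u)=u^{\otimes n}=\exp(i\bar{A})$ is a norm-convergent power series in $\bar A$; in the finite-dimensional ambient algebra this places it inside ${\rm Alg}\{\bar A\}$, giving $\mathcal{A}_n\subseteq {\rm Alg}\{\bar A\}$.

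\emph{Symmetric polynomial form.} The key identity is that $A\mapsto \bar A$ is a Lie algebra homomorphism: because operators on distinct sites commute,
\begin{equation*}
[\bar A,\bar B]=\sum_{i,j=1}^{n}[A^{(i)},B^{(j)}]=\sum_{i=1}^{n}[A,B]^{(i)}=\overline{[A,B]}.
\end{equation*}
Using this, any product $\bar A_{i_1}\cdots \bar A_{i_k}$ differs from the symmetric product $\mathcal{S}(\bar A_{i_1},\ldots,\bar A_{i_k})$ by a linear combination of products of at most $k-1$ collective observables, since each transposition of adjacent factors in the $k!$ reorderings produces a correction term in which two consecutive $\bar A,\bar B$ are replaced by the single $\overline{[A,B]}$. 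Inducting on $k$, every polynomial in the $\bar A$'s is a linear combination of symmetric polynomials $\mathcal{S}(\bar A_1,\ldots,\bar A_k)$, and together with the first part this yields the second statement. This is essentially the Poincar\'e-Birkhoff-Witt theorem applied to the Lie algebra $M(\mathbb{C}^d)$ under the representation $A\mapsto \bar A$.

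The main obstacle is only the bookkeeping in the reduction-of-degree step, namely verifying that every commutator produced is again a single collective observable; this is precisely what the identity $[\bar A,\bar B]=\overline{[A,B]}$ guarantees, after which the induction is routine. A more combinatorial variant would expand $\mathcal{S}(\bar A_1,\ldots,\bar A_k)$ via $\bar A_i=\sum_j A_i^{(j)}$ and separate contributions by which summation indices coincide; the ``all-distinct'' pattern then produces $\tfrac{n!}{(n-k)!}$ times the symmetrization of $A_1\otimes\cdots\otimes A_k\otimes\mathbf{1}\otimes\cdots\otimes\mathbf{1}$, while the coarser coincidence patterns yield symmetric polynomials of strictly lower degree and feed the same induction.
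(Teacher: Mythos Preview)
Your proof is correct and follows essentially the same approach as the paper: for generation you invoke Weyl duality and write $\pi_n(u)=\exp(i\bar A)$, and for the symmetric-polynomial form you use the Lie algebra homomorphism $[\bar A,\bar B]=\overline{[A,B]}$ together with induction on the degree. The paper's argument is terser (it obtains the collective observables by differentiation rather than power series, and it writes out the $k=2$ base case explicitly), but the ideas are identical; your remark connecting the second part to Poincar\'e--Birkhoff--Witt and the alternative coincidence-pattern expansion are nice additions not present in the paper.
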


\begin{proof}

Let $\mathcal{A}:= {\rm Alg}(\pi_{n} (SU(d)))$ and let 
$\mathcal{B}:= {\rm Alg}(\pi_{d} S(n))$ be the matrix algebras generated by the two group representations such that
$
\mathcal{B}^{\prime}=\mathcal{A}.
$
Any element of $SU(d)$ can be written as $u=\exp(iA)$ with 
$A\in M(\mathbb{C}^{d})$ selfadjoint. Then $\pi_{d}(u)= \exp(i\bar{A})$ and by differentiation one can conclude that $\{\bar{A}: A\in M (\mathbb{C}^{d})\}$ generate the algebra $\mathcal{A}$.

\vspace{2mm}

 The second statement can be proved by induction with $k$. Indeed the map 
$A\to \bar{A}$ is a Lie algebra morphism so that 
$
[\bar{A}, \bar{B}]= \bar{C}$ where $C:= [A,B]$. For $k=2$
$$
S(\bar{A}_{1},\bar{A}_{2})= \frac{1}{2}(\bar{A}_{1}\bar{A}_{2} + \bar{A}_{2}\bar{A}_{1})= 
\bar{A}_{1}\bar{A}_{2} + \frac{1}{2}[\bar{A}_{2}, \bar{A}_{1}]= 
\bar{A}_{1}\bar{A}_{2} + \frac{1}{2}\bar{A}, \qquad
A:= [A_{2}, A_{1}]
$$
hence $\bar{A}_{1}\bar{A}_{2}$ is a linear combination of symmetric products. Similarly, but using commutators repeatedly we get
$$
S(\bar{A}_{1},\dots ,\bar{A}_{k+1}) = \bar{A}_{1}\dots \bar{A}_{k+1} + R(k) 
$$
where $R(k)$ is a linear combination of symmetric monomials of order at most $k$.

\end{proof}

 In analogy to the classical case, a selfadjoint operator $K$ on $(\mathbb{C}^{d})^{\otimes r}$ satisfying
$
K=  \pi_{r}(\tau)\, K \,\pi_{r}(\tau)
$
for all $\tau\in S(r)$ will be called symmetric kernel of order $r$. By tensoring with the identity operator on $(\mathbb{C}^{d})^{\otimes n-r}$, the kernel $K$ will be identified with its ampliation to $(\mathbb{C}^{d})^{\otimes n}$, acting as 
$
K\otimes \mathbf{1}\otimes \dots \otimes \mathbf{1}. 
$ 
Sometimes it is useful to express $K$ as a sum of tensor products
$$
K= \sum_{a=1}^{p} K_{a,1} \otimes K_{a,2} \dots \otimes K_{a,r} 
$$
with $K_{a,i}$ selfadjoint operators. Then (the ampliation of) $K$ can be written as 
$$
K=  \sum_{a=1}^{p} \, \prod_{l=1}^{m} K_{a,l}^{(l)}
$$
where we used the notation \eqref{eq.action.position.i}. Similarly, for any unordered subset 
$\beta =\{\beta_{1},\dots,\beta_{r}\}$ of $\{1,\dots, n\}$ we define
$K^{(\beta)}$ by
$$
K^{(\beta)}=  \sum_{a=1}^{p} \, \prod_{l=1}^{r} K_{a,l}^{(\beta_{l})}
$$ 
which is the version of $K$ acting on the tensors on positions 
$ \beta_{1},\dots, \beta_{r}$.

\begin{definition}
Let $K$ be a symmetric kernel of order $r$. 
The $U$-statistic with kernel $K$ is defined by 
\begin{equation}\label{eq.def.un.quantum}
U_{n}:= 
\binom{n}{r}^{-1} \sum_{\beta}  K^{(\beta)}
%=\sum_{\underline{i}\in C_{m,n}} K^{\underline{i}},
\end{equation}
where the sum is taken over all unordered subsets $\beta$ of integers from 
$\{1,\dots,n\}$. 
%where we denoted by $C_{m,n}$ the set of multi-indices 
%$\underline{i}:=(i_{1}, \dots,\i_{n})$ satisfying $1\leq i_{1}<\dots <i_{m}\leq n$.
\end{definition}
As in the classical case, the $U$-statistic can be seen as an unbiased estimator of the parameter
$$
\theta:= {\rm Tr}(K\rho^{\otimes r})=  {\rm Tr} (\rho^{\otimes n} U_{n})
$$
which depends on the unknown state $\rho$. The difference with the classical case is that the terms $K^{(\beta)}$ 
of the sum \eqref{eq.def.un.quantum} may not commute with each other, in which case the U-statistic cannot be reduced to a classical one by simultaneously diagonalising all 
$K^{(\beta)}$.

We will now use the Hoeffding decomposition discussed in section \ref{sec.q.hoeffding} to analyse the asymptotic behaviour of $U_{n}$. Since the variance is 
a Hilbert space rather than algebraic property it can be computed in precisely the same way as in the classical case. 
\begin{lemma}\label{lemma.dec.Un.Uc}
Let $U_{n}$ be a $U$-statistic with kernel $K$ of order $r$. 
Then $U_{n}$ has the following decomposition
\begin{equation}\label{eq.dec.Un.Uc}
U_{n} = \sum_{l=1}^{r} \binom{r}{l} U_{n,l}
\end{equation}
where $U_{n,l}$ is the $U$-statistic of order $l$ with kernel $K_{l}$ given by the projection
$
K_{l}:= P_{\{1,\dots, l\}} K.
$
Moreover, the terms $U_{n,l}$ are mutually orthogonal in $L^{2}(\rho^{\otimes n})$ and 
$$
{\rm Var}_{\rho}(U_{n}) 
%:= {\rm Tr}(\rho^{\otimes n}(U_{n}-\theta)^{2})
= 
\sum_{l=1}^{r} \binom{r}{l}^{2} \binom{n}{l}^{-1} \mathbb{E}_{\rho}(K_{l}^{2}).
$$
If $K_{1}=\dots = K_{c-1}=0$ and $K_{c}\neq 0$ then ${\rm Var}_{\rho}(U_{n})$ is of 
order $O(n^{-c})$.
\end{lemma}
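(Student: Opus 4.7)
The strategy is to mirror the classical argument behind Lemma \ref{lemma.variance.un}, replacing the commutative Hoeffding decomposition by the non-commutative one from Section \ref{sec.q.hoeffding}. First I would regard the kernel $K$ as an element of $L^{2}(\rho^{\otimes n})$ via its ampliation $K\otimes\mathbf{1}^{\otimes(n-r)}$, and apply the Hoeffding decomposition $K=\sum_{A\subseteq\{1,\dots,r\}} P_{A}K$. Since $K$ already lies in $L^{2}(\rho^{\otimes\{1,\dots,r\}})$, only subsets $A\subseteq\{1,\dots,r\}$ contribute.

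The next step is a symmetry argument: because $K$ is permutation invariant on $\{1,\dots,r\}$ and $\rho^{\otimes n}$ is permutation invariant on the full index set, for every $A\subseteq\{1,\dots,r\}$ with $|A|=l$ the operator $P_{A}K$ coincides with $K_{l}:=P_{\{1,\dots,l\}}K$ up to relabeling of tensor positions. Hence the Hoeffding expansion of $K^{(\beta)}$, for an arbitrary $\beta\subseteq\{1,\dots,n\}$ of size $r$, reads
\[
K^{(\beta)} \;=\; \sum_{l=0}^{r}\;\sum_{\gamma\subseteq\beta,\,|\gamma|=l} K_{l}^{(\gamma)}.
\]
Substituting into the definition of $U_{n}$ and swapping the order of summation, each $l$-subset $\gamma\subseteq\{1,\dots,n\}$ is counted with multiplicity $\binom{n-l}{r-l}$. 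The combinatorial identity $\binom{n}{r}^{-1}\binom{n-l}{r-l}\binom{n}{l}=\binom{r}{l}$ together with $\sum_{|\gamma|=l}K_{l}^{(\gamma)}=\binom{n}{l}U_{n,l}$ then yields the decomposition $U_{n}=\sum_{l=0}^{r}\binom{r}{l}U_{n,l}$; the $l=0$ term is the constant $\theta\mathbf{1}$ and plays no role in the variance, which gives \eqref{eq.dec.Un.Uc}.

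For mutual orthogonality in $L^{2}(\rho^{\otimes n})$, observe that $K_{l}^{(\gamma)}\in\mathcal{H}_{\gamma}$ by construction, since the symmetry $\pi_{d}(\tau)\mathcal{H}_{\{1,\dots,l\}}\pi_{d}(\tau)^{*}=\mathcal{H}_{\gamma}$ (for any $\tau$ sending $\{1,\dots,l\}$ to $\gamma$) follows from permutation invariance of $\rho^{\otimes n}$. Lemma \ref{orthogonality.Pa} then yields $\mathcal{H}_{\gamma}\perp\mathcal{H}_{\gamma'}$ for $\gamma\neq\gamma'$, and in particular whenever $|\gamma|\neq|\gamma'|$. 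This immediately gives $\langle U_{n,l},U_{n,l'}\rangle_{\rho^{\otimes n}}=0$ for $l\neq l'$. For the variance, note that $l\geq 1$ forces $K_{l}\in\mathcal{H}_{\{1,\dots,l\}}\perp\mathbf{1}$, hence $\mathbb{E}_{\rho}(U_{n,l})=0$ and $\mathrm{Var}_{\rho}(U_{n,l})=\|U_{n,l}\|_{\rho^{\otimes n}}^{2}$. Orthogonality of distinct $\gamma$'s combined with permutation invariance of the state gives $\|U_{n,l}\|_{\rho^{\otimes n}}^{2}=\binom{n}{l}^{-1}\mathbb{E}_{\rho}(K_{l}^{2})$, and Pythagoras yields the stated variance formula. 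The final asymptotic claim follows because if $K_{l}=0$ for $l<c$ the sum starts at $l=c$, whose term $\binom{r}{c}^{2}\binom{n}{c}^{-1}\mathbb{E}_{\rho}(K_{c}^{2})$ is of order $O(n^{-c})$ and dominates the $l>c$ terms.

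The main obstacle is the symmetry step: one must verify that the non-commutative projection $P_{A}$ is compatible with the permutation action, so that $P_{A}K$ really is the ampliation of a single operator $K_{l}$ placed at positions $A$. This reduces to the observation that, because $\rho^{\otimes n}$ is invariant under $\pi_{d}(\tau)$ for every $\tau\in S(n)$, the conditional expectations intertwine with the permutation action, $\pi_{d}(\tau)\,\mathbb{E}(X|A)\,\pi_{d}(\tau)^{*}=\mathbb{E}(\pi_{d}(\tau)X\pi_{d}(\tau)^{*}\,|\,\tau(A))$. Once this is in place, the remainder of the argument is bookkeeping identical to the classical proof.
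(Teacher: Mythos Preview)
Your proposal is correct and follows essentially the same approach as the paper: apply the Hoeffding decomposition of Section~\ref{sec.q.hoeffding} to the kernel, use permutation symmetry to identify all $P_{A}K$ with $|A|=l$ as translates of a single $K_{l}$, and then reduce the variance computation to the classical Lemma~\ref{lemma.variance.un}. The paper's own proof is considerably terser---it simply asserts the symmetry step and calls the coefficient $\binom{r}{l}$ a ``simple counting argument''---whereas you have spelled out the combinatorics $\binom{n}{r}^{-1}\binom{n-l}{r-l}\binom{n}{l}=\binom{r}{l}$ and the intertwining of $P_{A}$ with the permutation action explicitly, which is all to the good.
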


\begin{proof}
Using the Hoeffding decomposition we can write
$$
K = \sum P_{A} (K) := \sum_{A}K_{A} = \sum_{l=0}^{r} \sum_{|A| = l} K_{A}
$$
where the first sum is taken over all subsets $A$ of $\{1,\dots n\}$. By symmetry all $K_{A}$ with $|A|=l$ have the same $U$-statistic $U_{n,l}$. The coefficient $\binom{r}{l}$ follows from a simple counting argument.

The proof of the second statement is identical to that of Lemma \ref{lemma.variance.un}.
\end{proof}

The next Lemma deals with the algebraic structure of the $U$-statistic. Roughly speaking a $U$-statistic can be expressed as a polynomial in fluctuation observables with (non-commutative) coefficients which are products of empirical averages. Using this expression one can then easily derive the convergence of moments of $U$-statistics by applying Corollary \ref{cor.1}. The main idea is best conveyed through an example. If  
$$
K= \frac{1}{3!} \sum_{\tau \in S(3)}A_{\tau(1)} \otimes A_{\tau(2)} \otimes A_{\tau(3)},
$$ 
then 
\begin{eqnarray*}
\frac{3!}{n^{3/2}} \binom{n}{3}U_{n} &=& 
\frac{1}{n^{3/2}}\sum_{(\beta_{1},\beta_{2},\beta_{3})} 
A_{1}^{(\beta_{1})}A_{2}^{(\beta_{2})} A_{3}^{(\beta_{3})} \\
&=&\mathcal{S} \left( 
\mathbb{F}(A_{1}), \mathbb{F}(A_{2}),\mathbb{F}(A_{3})\right) -
\mathbb{P}_{n} (A_{1}\circ A_{2}) \circ \mathbb{F}_{n}(A_{3})- \\
&&\mathbb{P}_{n} (A_{1}\circ A_{3}) \circ \mathbb{F}_{n}(A_{2})-
\mathbb{P}_{n} (A_{2}\circ A_{3}) \circ \mathbb{F}_{n}(A_{1})+
\frac{2}{\sqrt{n}} \mathbb{P}_{n}(\mathcal{S}(A_{1}, A_{2}, A_{3})).
\end{eqnarray*}
The sum in the first line runs over all sets of different indices 
$(\beta_{1}, \beta_{2},\beta_{3})$ from $\{1, \dots , n\}$ (position taken into account), and can be obtained by subtracting from the symmetric product of fluctuations, those terms for which either two or three position indices coincide. These terms can again be written in terms of symmetric products between fluctuation operators $\mathbb{F}_{n}(A_{i})$ and empirical averages $\mathbb{P}_{n}(A_{i}\circ A_{j})$, or just $\mathbb{P}_{n}(\mathcal{S}(A_{1}, A_{2}, A_{3}))/\sqrt{n}$.  

\begin{definition}\label{def.symmetric.poly}
A polynomial of order $l$ in the non-commutative variables 
$X_{1},\dots ,X_{t}$ is called symmetric if it is of the form
$$
P(X_{1},\dots, X_{t})= 
\sum_{k=1}^{l}  \sum_{\{j_{1},\dots ,j_{k}\}}c_{j_{1},\dots,j_{k}}
\mathcal{S}(X_{j_{1}}, \dots, X_{j_{k}})
$$ 
where 
$$
\mathcal{S}(X_{j_{1}}, \dots, X_{j_{k}}):=\frac{1}{k!}
\sum_{\tau\in S(k)} X_{j_{\tau(1)}}\dots X_{j_{\tau(k)}},
$$
and the real valued coefficients $c_{j_{1},\dots, j_{k}}$ are symmetric functions of 
the indices, with the interior sum taken over all subsets $\{j_{1},\dots, j_{k}\}$ of 
$\{1,\dots , t\}$.
\end{definition}

\begin{lemma}\label{lemma.polynom.un}
Let $A_{1},\dots, A_{l}\in M(\mathbb{C}^{d})_{sa}$ such that ${\rm Tr}(\rho A_{i})=0$ for all $i=1,\dots ,l$, and let $K$ be the symmetric kernel of order $l$
$$
K:=\frac{1}{l!} \sum_{\tau\in S(l)} A_{\tau(1)}\otimes\dots \otimes A_{\tau(l)}.
$$
Then there exist symmetric polynomials $P_{0},\dots ,P_{l-2}$ of order strictly smaller than 
$l$ such that the corresponding $U$-statistic $U_{n}$ can be expressed as 
\begin{eqnarray}
\frac{l!}{n^{l/2}} \binom{n}{l} U_{n}&=&
\mathcal{S}(\mathbb{F}_{n}(A_{1}) ,\dots ,\mathbb{F}_{n}(A_{l}))\nonumber \\
&+&\sum_{t=0}^{l-2} n^{-t/2} 
P_{t} (\mathbb{F}_{n}(A_{1}),\dots ,\mathbb{F}_{n}(A_{l}) , 
\mathbb{P}_{n} (B_{1}) ,\dots ,\mathbb{P}_{n}(B_{p}))
\label{eq.un.polynomial}
\end{eqnarray}
where $p$ is a natural number depending only on $l$ and 
$B_{1},\dots ,B_{p}\in M(\mathbb{C}^{d})_{sa}$. 
 
\end{lemma}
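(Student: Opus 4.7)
The plan is to expand both sides of \eqref{eq.un.polynomial} combinatorially. First I would unfold the $U$-statistic: since the operators $A_i^{(\gamma_i)}$ at distinct positions commute, the symmetrisation inside $K$ combined with the sum over unordered subsets $\beta$ yields
\[
\frac{l!}{n^{l/2}}\binom{n}{l}U_n = \frac{1}{n^{l/2}}\sum_{\substack{(\gamma_1,\dots,\gamma_l) \\ \text{pairwise distinct}}} \mathcal{S}(A_1^{(\gamma_1)},\dots,A_l^{(\gamma_l)}).
\]
On the other hand, $\mathcal{S}(\mathbb{F}_n(A_1),\dots,\mathbb{F}_n(A_l)) = n^{-l/2}\sum_{(\gamma_1,\dots,\gamma_l)}\mathcal{S}(A_1^{(\gamma_1)},\dots,A_l^{(\gamma_l)})$, where the sum now runs over \emph{all} $l$-tuples. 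The task reduces to controlling the difference, that is the contribution of tuples in which at least two indices coincide.

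Such tuples are classified by the set partition $\pi=\{S_1,\dots,S_k\}$ of $\{1,\dots,l\}$ defined by $i\sim j\iff\gamma_i=\gamma_j$, with $1\le k<l$. Because operators at different positions commute, splitting the $S(l)$-average into averages inside each block followed by a multinomial choice of block orderings gives
\[
\mathcal{S}(A_1^{(\gamma_1)},\dots,A_l^{(\gamma_l)}) = \prod_{j=1}^{k} B_j^{(\delta_j)}, \qquad B_j := \mathcal{S}\bigl(A_i : i\in S_j\bigr),
\]
where $\delta_1,\dots,\delta_k$ are the distinct positions of the blocks. The $\pi$-contribution to the difference is therefore $n^{-l/2}\sum_{\delta_1,\dots,\delta_k\text{ pairwise distinct}}B_1^{(\delta_1)}\cdots B_k^{(\delta_k)}$.

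Next, each such sum over pairwise distinct position-tuples is reduced to collective observables $\bar C = n\mathbb{P}_n(C)$ by induction on $k$ via the identity
\[
\sum_{\delta\text{ distinct}} B_1^{(\delta_1)}\cdots B_k^{(\delta_k)} = \bigg(\sum_{\delta'\text{ distinct}} B_1^{(\delta_1)}\cdots B_{k-1}^{(\delta_{k-1})}\bigg)\bar{B}_k - \sum_{i=1}^{k-1}\sum_{\delta'\text{ distinct}} B_1^{(\delta_1)}\cdots (B_iB_k)^{(\delta_i)}\cdots B_{k-1}^{(\delta_{k-1})},
\]
where $\delta'$ runs over pairwise distinct $(k-1)$-tuples. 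In the second sum $B_k^{(\delta_i)}$ commutes past the $B_j^{(\delta_j)}$ with $j\ne i$ (different positions) and merges with $B_i^{(\delta_i)}$. Singleton blocks ($|S_j|=1$, $B_j = A_{i_j}$) contribute $\bar{A}_{i_j} = \sqrt{n}\,\mathbb{F}_n(A_{i_j})$ by the hypothesis $\mathrm{Tr}(\rho A_{i_j}) = 0$, while non-singleton blocks contribute $\bar{B}_j = n\mathbb{P}_n(B_j)$, kept as $\mathbb{P}_n$-variables without further decomposition. If the partition has $m$ singletons and $k-m$ non-singletons, the net scaling is $n^{k-m/2-l/2}$; since every non-singleton block has size $\ge 2$ we have $l\ge 2k-m$, so this exponent is $\le 0$ and each term carries a power $n^{-t/2}$ with $0\le t\le l-2$ (the extreme $t=l-2$ arising from the single-block partition). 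The number of free-variable factors in the resulting monomial is at most $k<l$ for the leading (no-collision) term and strictly less after each inductive collision, giving the order bound $<l$. Finally, the $S(l)$-invariance of $U_n$ under permutations of $A_1,\dots,A_l$ allows the resulting non-commutative polynomials to be re-summed over $S(l)$-orbits and written symmetrically as in Definition \ref{def.symmetric.poly}.

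The main obstacle is the combinatorial bookkeeping rather than any analytic subtlety. In particular, verifying the factorisation $\mathcal{S}(A_1^{(\gamma_1)},\dots,A_l^{(\gamma_l)}) = \prod_j B_j^{(\delta_j)}$ requires careful handling of the $S(l)$-average with respect to the block decomposition, and recovering the correct coefficients (for example the factor $2/\sqrt{n}$ in the $l=3$ illustration preceding the lemma) requires tracking cancellations between contributions of different partitions: the three $k=2$ partitions collectively produce a $-3n^{-1/2}\mathbb{P}_n(\mathcal{S}(A_1,A_2,A_3))$ correction through the induction above, which combines with the $+n^{-1/2}\mathbb{P}_n(\mathcal{S}(A_1,A_2,A_3))$ contribution of the single $k=1$ partition to leave $+2n^{-1/2}\mathbb{P}_n(\mathcal{S}(A_1,A_2,A_3))$ after subtraction from $\mathcal{S}(\mathbb{F}_n(A_1),\mathbb{F}_n(A_2),\mathbb{F}_n(A_3))$.
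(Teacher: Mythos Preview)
Your proposal is correct and follows essentially the same route as the paper: both write $l!\binom{n}{l}U_n$ as a sum over pairwise distinct position tuples, identify the symmetric product $\mathcal{S}(\mathbb{F}_n(A_1),\dots,\mathbb{F}_n(A_l))$ as the sum over \emph{all} tuples, organise the defect by set partitions, and then recursively reduce the resulting ``distinct-tuple'' sums to collective observables with the same power-counting. The only cosmetic difference is that the paper iterates the partition step itself (each time rewriting a distinct-tuple sum as a symmetric product of $\bar{B}_j$'s minus coarser-partition corrections), so symmetric products appear at every stage, whereas your peeling identity singles out $B_k$ and produces ordered products like $\overline{B_iB_k}$; your closing remark about re-symmetrising via the $S(l)$-invariance (equivalently, the symmetry of the distinct-tuple sum under permuting the $B_j$) is exactly what is needed to recover the symmetric-polynomial form required by Definition~\ref{def.symmetric.poly}.
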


\begin{proof}

%We prove the statement by induction with respect to $l$. 
Let us first note that the polynomials have a very simple form for small $l$. If $l=1$ then
$$
U_{n} = \frac{1}{n} \sum_{\beta} A_{1}^{(\beta)} = n^{-1/2} \mathbb{F}_{n}(A_{1}) , \qquad {\rm so}\qquad
 \frac{1}{n^{1/2}} \binom{n}{1}U_{n}= \mathbb{F}_{n}(A_{1}).
$$
If $l=2$ then
$$
\frac{2}{n} \binom{n}{2}
U_{n} = \frac{1}{n}\sum_{(\beta_{1},\beta_{2})} 
A_{1}^{(\beta_{1})} 
A_{2}^{(\beta_{2})}=
\mathbb{F}_{n}(A_{1})\circ \mathbb{F}_{n}(A_{2}) - \mathbb{P}_{n}(A_{1}\circ A_{2}).
$$
The case $l=3$ was discussed above. In general 
\begin{eqnarray*}
\frac{l!}{n^{l/2}} \binom{n}{l}U_{n}
&=&
n^{-l/2}\sum_{(\beta_{1},\dots , \beta_{l})} A_{1}^{(\beta_{1})} \dots A_{l}^{(\beta_{l})}\\
&=&
\mathcal{S}(\mathbb{F}_{n}(A_{1}), \dots, \mathbb{F}_{n}(A_{l}) ) \\
&&- 
n^{-l/2} \sum_{\mathcal{P}} \sum_{(\gamma_{1},\dots, \gamma_{|\mathcal{P}|})} 
\mathcal{S}(A_{i} : i\in L_{1} )^{(\gamma_{1})} \dots  \mathcal{S}(A_{i} : i\in L_{|\mathcal{P}|} )^{(\gamma_{|\mathcal{P}|})}
\end{eqnarray*}
where the first sum of the bottom line runs over all partitions into subsets
$\mathcal{P}= \{L_{1},\dots, L_{|\mathcal{P}|}\}$ of the set $\{1,\dots ,l\}$, and the interior sum is taken over all sets of position indices $(\gamma_{1},\dots, \gamma_{|\mathcal{P}|})$ taken from $\{1,\dots, n\}$, which are different from each other. We have also used the shorthand notation $\mathcal{S}(A_{i} : i\in L_{1} )$ for the symmetric product of those $A_{i}$ with indices in $L_{1}$.

By repeating the procedure a finite number of times we obtain that the left side is equal to a symmetric polynomial in collective observables whose terms look like
$$
\pm n^{-l/2}\mathcal{S}( \overline{B_{1}} ,\dots,  \overline{B_{k}}),
$$
The operators of $B_{1},\dots, B_{k}\in M(\mathbb{C}^{d})_{sa}$ can be written 
as nested symmetric products depending on the hierarchy of successive coarser and coarser partitions of $\{1,\dots ,l \}$ leading to the final partition $\{M_{1}, \dots, M_{k} \}$. 
For our purposes it suffices to describe the operators associated to subsets of one and two elements:
\begin{eqnarray*}
M_{a}= \{ i \} &\Longrightarrow & B_{a}= A_{i},\\
M_{a}= \{i,j \} &\Longrightarrow & B_{a}= A_{i} \circ A_{j}.
\end{eqnarray*} 
We can now distribute the powers of $n$ inside the symmetric product such that we obtain $\mathbb{F}_{n}(A_{i})$ for singletons and $\mathbb{P}_{n}(A_{i}\circ A_{j})$ for two elements subsets. If the partition contains subsets $M_{a}$ of more than three elements  then we use a $n^{-1}$ factor for each of them to obtain 
$\mathbb{P}_{n}(B_{a})$, and we are left with a factor $n^{-t/2}$ with $t>0$.

\end{proof}

Next we show that the (properly normalised) $U$-statistics $U_{n}$ converge in moments to a symmetric polynomial in the canonical variables of the algebra $CCR(L^{2}_{0}(\rho))$. Intuitively this follows from Lemma \ref{lemma.polynom.un} and Corollary \ref{cor.1}, but as in the classical case, formulating the result in a coordinate-free fashion seems to be more difficult. 
Therefore we will decompose the kernel and the 
$U$-statistic with respect to the natural ortho-symplectic basis of 
$L^{2}_{0}(\rho)$ described in section \ref{sec.clt}. 
Recall that $L^{2}(\rho)$ splits into 3 orthogonal subspaces
$$
L^{2}(\rho)=  \mathbb{C}\mathbf{1} \oplus L^{2}_{0}(\rho)= \mathbb{C}\mathbf{1} \oplus \mathcal{H}_{\rho}\oplus 
\mathcal{H}_{\rho}^{\perp},
$$
where $\mathcal{H}_{\rho}$ is the `classical part' consisting of matrices $D$ that commute with $\rho$  and satisfy ${\rm Tr}(\rho D)=0$. The corresponding canonical variables $\mathbb{B}(D)$ form the centre of $CCR(L^{2}_{0}(\rho))$, hence the label `classical'. The `quantum part' 
$\mathcal{H}_{\rho}^{\perp}$ has an ortho-symplectic basis $\{t_{j,k}: 1\leq j\neq k\leq d\}$ such that 
$(Q_{j,k}:= \mathbb{G}(t_{j,k}), P_{j,k}= \mathbb{G}(t_{k,j})) $ are canonical variables of $d(d-1)/2$ commuting quantum oscillators indexed by pairs $(j<k)$. 
For simplicity we denote by $\{F_{0}:= \mathbf{1}, F_{1},\dots ,F_{d^{2}-1}\}$ the basis elements 
\begin{equation}\label{eq.basis.f}
\{\mathbf{1} \} \cup \{d_{i} : i=1,\dots d-1\} \cup \{t_{j,k}: 1\leq j\neq k\leq d\}
\end{equation}
arranged in a particular order. Let 
$$
K:= \sum_{i_{1},\dots, i_{r}} k(i_{1}, \dots , i_{r}) 
F_{i_{1}}\otimes \dots\otimes F_{i_{r}}
$$ 
be the decomposition of $K$ with $k(i_{1}, \dots , i_{r})$ a symmetric function. Then 
\begin{eqnarray*}
K_{l} := 
P_{\{ 1,\dots, l \}}(K) &=& \sum_{i_{1},\dots,i_{l} \geq 1} 
k(i_{1},\dots, i_{l},0,\dots , 0) F_{i_{1}}\otimes \dots \otimes F_{i_{l}}
\otimes \mathbf{1}^{r-l}\\
&=&
\sum_{i_{1},\dots,i_{l} \geq 1} 
k_{l}(i_{1},\dots, i_{l})F_{i_{1}}\otimes \dots \otimes F_{i_{l}}
\otimes \mathbf{1}^{r-l}\\
&=&
\sum_{ \{ i_{1},\dots,i_{l}\} } k_{l}( i_{1},\dots, i_{l} ) F_{i_{1},\dots ,i_{l}}
\otimes \mathbf{1}^{r-l}
\end{eqnarray*}
where the last sum is taken over {\it unordered sets} $\{i_{1}, \dots, i_{l}\}$ with all 
$i_{j}\neq 0$, $F_{i_{1},\dots ,i_{l}}$ is the corresponding symmetric kernel 
and $k_{l}$ is a symmetric function of the indices.

The unordered set of indices $\{i_{1}, \dots, i_{l} \}$ is in one-to-one correspondence with the set of multiplicities ${\bf m}:= \{m_{1}, \dots, m_{d^{2}-1}\} $ where 
$
m_{p} = \left| \{ j: i_{j}= p\} \right|
$ 
and we will further write $F_{\bf m}:= F_{i_{1},\dots ,i_{l}}$ and express $K_{l}$ as
\begin{equation}\label{kernel.multiplicity}
K_{l}= \sum_{{\bf m}\in M_{l}} k_{\bf m} F_{\bf m} \otimes {\bf 1}^{r-l}
\end{equation}
where the sum runs over the set of partitions
$$
M_{l}:= \left\{{\bf m}= \{m_{1}, \dots , m_{d^{2}-1} \} : m_{i}\geq 0, \sum_{i=1}^{d^{2}-1}m_{i}=l\right\}.
$$
By inserting into \eqref{eq.dec.Un.Uc} we get
\begin{equation}\label{eq.un.orthonormal}
U_{n}=  \sum_{l=1}^{r} \binom{r}{l}  U_{n,l}=\sum_{l=1}^{r} \binom{r}{l} 
\sum_{{\bf m}\in M_{l}} k_{\bf m} U_{n,{\bf m}}.
\end{equation}

%\begin{definition}
%Let $K$ be a (symmetric) kernel of order $r$. We say that $K$ has a degeneracy of order $c-1$ if $K_{i}=0$ for all $i=1,\dots , c-1$ and $K_{c}\neq 0$. 
%\end{definition}

\begin{theorem}[convergence in moments]\label{th.conv.moments.ustat}
Let $U_{n}$ be the U-statistic with kernel $K$ of order $r$ 
such that $K_{1}=\dots K_{c-1}=0\neq K_{c}$ for some $1\leq c\leq r$, with $K_{l}$  given 
by \eqref{kernel.multiplicity}. Then the following convergence in moments holds
$$
\lim_{n\to \infty}\,{\rm Tr}\left(\rho^{\otimes n}  \left( n^{c/2}( U_{n}-  \mathbb{E}_{\rho}(K)) \right)^{p} \right)=\varphi (U^{p}), \qquad p\in \mathbb{N},
$$
where $\varphi$ is the Gaussian state \eqref{eq.quasifree} and $U$ is the polynomial 
in canonical variables of $CCR(L^{2}_{0}(\rho))$
\begin{equation}\label{eq.th.limit.symm.hermite}
U: =\binom{r}{c}\sum_{{\bf m}\in M_{c}} k_{\bf m} \,
\mathcal{S}\left[ \prod_{i}H_{m_{i}}(\mathbb{G}(F_{i}))\right] .
\end{equation}
\end{theorem}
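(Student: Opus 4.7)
The plan is to combine the Hoeffding decomposition of $U_n$ (Lemma \ref{lemma.dec.Un.Uc}), the polynomial expression of each Hoeffding stratum in terms of fluctuations and empirical averages (Lemma \ref{lemma.polynom.un}), and the Quantum CLT in the form of Corollary \ref{cor.1}, and to identify the limit via the uniqueness statement of Corollary \ref{cor.hermite.orthogonal}.

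First I would reduce to the leading stratum. By Lemma \ref{lemma.dec.Un.Uc} and the hypothesis that $K_l=0$ for $l<c$,
\[n^{c/2}\bigl(U_n - \mathbb{E}_\rho(K)\bigr)=\sum_{l=c}^r \binom{r}{l}\, n^{c/2} U_{n,l}.\]
Because $K_l\in \mathcal{H}_{\{1,\dots,l\}}$, expanding it in the basis \eqref{eq.basis.f} yields only tensor products of $F_{i_j}$ with $i_j\ge 1$, so every $\mathrm{Tr}(\rho F_{i_j})=0$ and Lemma \ref{lemma.polynom.un} applies. Combining the variance bound of Lemma \ref{lemma.dec.Un.Uc} with the uniform-in-$n$ moment bounds that Corollary \ref{cor.1} yields for the polynomial expansion, all moments of $n^{c/2} U_{n,l}$ vanish for $l>c$. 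Cauchy--Schwarz then kills every mixed moment containing such a factor, so only $\binom{r}{c}\, n^{c/2} U_{n,c}$ contributes to the limiting moments.

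Second, writing $K_c = \sum_{\mathbf{m}\in M_c} k_{\mathbf{m}}\, F_{\mathbf{m}}$ and applying Lemma \ref{lemma.polynom.un} termwise gives
\[n^{c/2} U_{n,c} = \frac{n^c}{c!\binom{n}{c}} \Bigl[ \sum_{\mathbf{m}\in M_c} k_{\mathbf{m}}\, \mathcal{S}\bigl[\prod_i \mathbb{F}_n(F_i)^{m_i}\bigr] + R_n \Bigr],\]
where the prefactor tends to $1$ and $R_n$ is a symmetric polynomial in the $\mathbb{F}_n(F_i)$ and the empirical averages $\mathbb{P}_n(B_j)$ in which every monomial carries at least one compensating factor $n^{-t/2}$, $t\ge 1$, so that under Corollary \ref{cor.1} its limit is a polynomial in the canonical variables $\mathbb{G}(F_i)$ of total degree strictly less than $c$. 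Applying Corollary \ref{cor.1} monomial by monomial, each moment of $n^{c/2} U_{n,c}$ converges to the corresponding moment of a limit operator $U^*\in L^2(\varphi)$ whose leading part is $\sum_{\mathbf{m}} k_{\mathbf{m}}\, \mathcal{S}[\prod_i \mathbb{G}(F_i)^{m_i}]$ and whose remainder lies in $\mathcal{P}_{c-1}$.

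Third, to identify $U^*$ with the symmetric Hermite polynomial in \eqref{eq.th.limit.symm.hermite} I would use an orthogonality argument. At finite $n$, $U_{n,c}$ sits in $\bigoplus_{|A|=c}\mathcal{H}_A$ and is orthogonal in $L^2(\rho^{\otimes n})$ to every operator supported on at most $c-1$ tensor factors, in particular to every symmetric polynomial of total degree $<c$ in the collective observables. Transferring this orthogonality to the limit --- by writing the product of $U_{n,c}$ with a test polynomial $Y_n$ as a single symmetric polynomial in $\mathbb{F}_n(F_i)$ and $\mathbb{P}_n(B_j)$ and then invoking Corollary \ref{cor.1} --- shows $U^* \perp \mathcal{P}_{c-1}$ in $L^2(\varphi)$. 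The uniqueness half of Corollary \ref{cor.hermite.orthogonal} then forces $U^* = \sum_{\mathbf{m}} k_{\mathbf{m}}\, \mathcal{S}\bigl[\prod_i H_{m_i}(\mathbb{G}(F_i))\bigr]$, which multiplied by $\binom{r}{c}$ is exactly \eqref{eq.th.limit.symm.hermite}.

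I expect the main obstacle to be the rigorous transfer of orthogonality in the third step: one has to check that $\mathrm{Tr}(\rho^{\otimes n}\, U_{n,c}\, Y_n)$ with a test polynomial $Y_n$ in collective observables converges to the limit CCR inner product $\langle U^*, Y\rangle_\varphi$ for a \emph{sufficiently rich} class of $Y_n$'s to pin down $U^*$. This is achievable by rewriting $U_{n,c} Y_n$ as one symmetric polynomial in $\mathbb{F}_n(F_i)$ and $\mathbb{P}_n(B_j)$ via Lemma \ref{lemma.polynom.un} and invoking Corollary \ref{cor.1} once more, but requires careful bookkeeping to verify that the remainder $R_n$ really produces only sub-leading contributions in the canonical variables rather than spurious top-degree ones.
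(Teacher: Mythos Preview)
Your proposal is correct and follows essentially the same architecture as the paper's proof: Hoeffding reduction to the leading stratum via Lemma \ref{lemma.dec.Un.Uc}, the polynomial expansion of Lemma \ref{lemma.polynom.un}, convergence of moments via Corollary \ref{cor.1}, and identification of the limit through an orthogonality argument combined with Corollary \ref{cor.hermite.orthogonal}. The only tactical differences are that the paper kills the strata $l>c$ directly from the explicit $n^{-(l-c)/2}$ scaling coming out of Lemma \ref{lemma.polynom.un} (rather than variance plus Cauchy--Schwarz), and identifies the limit by induction on $|\mathbf m|$, using orthogonality of $U_{n,c}$ to the lower-order $U$-statistics $U_{n,\mathbf m'}$ (rather than to all of $\mathcal P_{c-1}$ at once); both versions rely on the same transfer of finite-$n$ orthogonality to the CCR limit that you rightly flag as the point requiring care. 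One small slip in your second step: the remainder in \eqref{eq.un.polynomial} includes the $t=0$ term $P_0$, which carries no $n^{-t/2}$ prefactor; the reason its limit lies in $\mathcal P_{c-1}$ is that Lemma \ref{lemma.polynom.un} already asserts each $P_t$ has polynomial order strictly less than $c$, not an explicit power of $n$.
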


\begin{remark}
Note that the symmetric product in \eqref{eq.th.limit.symm.hermite} is equal to the 
(usual) product of mutually commuting operators 
$$
\prod_{j<k} \mathcal{S}\left[ H_{m(j,k)}(Q_{j,k}/\sqrt{2} \sigma_{j,k}) 
H_{m(k,j)}(P_{j,k}/\sqrt{2} \sigma_{j,k})\right] 
\prod_{i=1}^{d-1} H_{m(i)} (G_{i})
$$
where each term is associated to an oscillator or a classical normal variable. It is known that a polynomial in canonical variables may not be essentially selfadjoint on the domain consisting of finite particles vectors \cite{Nagel}. This is related to the famous Hamburger moment problem \cite{Simon.moment} and the fact that the distribution of powers $X^{k}$ of Gaussian variables $X$ are not uniquely determined by their moments \cite{Stoyanov}. 

For classical (commutative) $U$-statistics the above convergence holds in the stronger sense of convergence in distribution. In the quantum case this may still be true but it is not clear to us how to identify the limit distribution. However, in 
section \ref{sec.nondegenerate.sec.order} we will show that convergence in distribution holds for non-degenerate statistics and for order two kernels.

\end{remark}
\begin{proof}
With $\theta=\mathbb{E}_{\rho} (K)= {\rm Tr}(\rho^{\otimes r}K)$, let
$$
U_{n}-\theta =  
\sum_{l=c}^{r}\binom{r}{l}U_{n,l}= 
\sum_{l=c}^{r}\binom{r}{l}\sum_{{\bf m}\in M_{l}} k_{\bf m} U_{n,{\bf m}}
$$
be the decomposition of $U_{n}$ into simpler $U$-statistics with kernels $F_{\bf m}$. 

By Lemma  \ref{lemma.polynom.un} the scaled components $U_{n,{\bf m}}$ can be 
expressed as polynomials in variables of the form $\mathbb{F}_{n}(F_{i})$ or 
$\mathbb{P}_{n}(B_{j})$ having the form
$$
\frac{l!}{n^{l/2}}\binom{n}{l}U_{n,{\bf m}} = 
%\sum_{{\bf m}\in M_{l}} 
%k_{\bf m}
\mathcal{S}\left( \prod_{i} \mathbb{F}_{n}(F_{i})^{m_{i}} \right)
+ \sum_{t=0}^{l-2} n^{-t/2}P_{t,{\bf m}} \,, \qquad {\bf m} \in M_{l}
$$
where we used a shorthand notation for the symmetric product (see definition 
\ref{def.symmetric.poly}) of fluctuations operators $\mathbb{F}_{n}(F_{i})$, each appearing with multiplicity $m_{i}$. 

In particular, if $l>c$ then $n^{c/2} U_{n,l}$ has coefficients of order $n^{-(l-c)/2}$ and its contribution to the moments of $n^{c/2}(U_{n}-\theta)$ converges to zero as $n\to\infty$. By applying the Corrolary \ref{cor.1} of the quantum CLT we get 
$$
{\rm Tr}\left(n^{c/2}(U_{n}-\theta)^{p}\right) \to m(p)
$$ where $m(p)$ is the $p$-th moment (with respect to $\varphi$) of the following 
polynomial 
$$
U^{\infty}_{c}:=\binom{r}{c}
\sum_{{\bf m}\in M_{c}} k_{\bf m}\,
\mathcal{S}\left( \prod_{i} \mathbb{G}(F_{i})^{m_{i}} 
\right)  + \tilde{P}_{0}(\mathbb{G}(F_{i}): i\in \{1,\dots, d^{2}-1\} ).  
$$
Here $\tilde{P}_{0}$ is the polynomial of order strictly less than 
$c$, obtained by replacing the emipirical averages $\mathbb{P}_{n}(B)$ by the average
${\rm Tr}(\rho B)$, and the fluctuations by their canonical variables. 
Although this could be computed using the combinatorial arguments employed in 
Lemma \ref{lemma.polynom.un}, we prefer to give a more transparent derivation based on the following orthogonality argument.

From the construction of the Hoeffding decomposition we know that 
$U_{n,c}$ is orthogonal in $L^{2}(\rho^{\otimes n})$ on all $U$-statistics $U_{n,{\bf m}}$ with $|{\bf m}|<c$. By appropriate scaling and taking the limit $n\to \infty$ we then obtain 
that the limit polynomial $U^{\infty}_{c}$ is orthogonal in $L^{2}(\phi)$ on all
$U^{\infty}_{\bf m}$ with $|{\bf m}|<c$.

We prove by induction with respect to $c$ that 
\begin{equation}\label{eq.induction.hermite}
U^{\infty}_{\bf m} = 
\mathcal{S}\left( \prod_{i} H_{m_{i}} (\mathbb{G}(F_{i})) \right), \qquad \forall \, |{\bf m}|\leq c.
\end{equation}
%an the linear span of $\{U^{\infty}_{\bf m} : |{\bf m}|\leq c\}$ is equal to that of all polynomials of orders up to $c$ denoted by $\mathcal{P}_{c}$. 

If $c=1$ then the $U$-statistics are proportional to fluctuation operators and by 
CLT we have that $U^{\infty}_{\bf m}$ are just the canonical variables which together with the identity span the space of polynomials of order $1$. Suppose that the induction hypothesis is true for $c-1$ and let ${\bf m}$ be a multiplicity with 
 $|{\bf m}| =c$. Since $U_{\bf m}^{\infty}$ is orthogonal on all $U_{{\bf m}^{\prime}}^{\infty}$ with $|{\bf m}^{\prime}|<c $, it is orthogonal on all polynomials of orders 
less  than $c$. This follows from the fact that the symmetric products of Hermite polynomials of order less than $c$ span the set $\mathcal{P}_{c-1}$ of all polynomials of order less than $c$.

Let $Z_{i}^{\perp}$ denote the projection onto $\mathcal{P}_{i}^{\perp}$. Then
$$
U_{\bf m}^{\infty} =  Z_{c-1}^{\perp} U_{\bf m}^{\infty} = Z_{c-1}^{\perp} 
\mathcal{S}\left( \prod_{i} \mathbb{G}(F_{i})^{m_{i}} \right),
$$
since $\tilde{P}_{0}\in \mathcal{P}_{c-1}$.
%On the other hand the projection of $U_{\bf m}$ onto $\mathcal{P}_{c}^{\perp}$ is 
%equal to that of its leading term  
%$$
%\mathcal{S}\left( \prod_{\alpha} \mathbb{G}(F_{\alpha})^{m_{\alpha}} \right),
%$$ 
%since $\tilde{P}_{0}\in \mathcal{P}_{c-1}$. 
Using Corollary \ref{cor.hermite.orthogonal}, we obtain
$$
U_{\bf m} = 
Z_{c-1}^{\perp} 
\mathcal{S}\left( \prod_{i} \mathbb{G}(F_{i})^{m_{i}} \right)
=
\mathcal{S}\left(  \prod_{i} 
H_{m_{i}} (\mathbb{G}(F_{i}))
\right),
$$ 
which ends the proof of the induction hypothesis.

%such operators is the whole space of polynomials. Indeed this is true in the commutative case by definition of Hermite polynomials and also for one oscillator, as shown in Theorem \ref{}. Any monomial in canonical variables can then be written as a 
%product of linear combinations of Hermite polynomials in canonical variables. 
\end{proof}

\section{Non-degenerate and second order kernels}

\label{sec.nondegenerate.sec.order}

In this section we show that Theorem \ref{th.conv.moments.ustat} can be strengthened in 
some cases of particular interest to applications: non-degenerate kernels, and kernels 
of order 2.  

\begin{corollary}\label{cor.gaussian.limit}
A kernel $K$  of order $r$ is called non-degenerate if $K_{1}\neq 0$. In this case the associated $U$-statistic is asymptotically normal:
$$
\sqrt{n}(U_{n}-\theta)\overset{\mathcal{L}}{\longrightarrow}
N(0,r^{2} \xi_{1} ), \qquad \xi_{1}= {\rm Tr}(\rho K_{1}^{2}).
$$ 
\end{corollary}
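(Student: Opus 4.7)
The plan is to specialize Theorem \ref{th.conv.moments.ustat} to the non-degenerate case $c=1$, observe that the limit simplifies to a single Gaussian canonical variable, and then upgrade convergence in moments to convergence in distribution via the classical moment problem.

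Since $K_1\neq 0$, Theorem \ref{th.conv.moments.ustat} applies with $c=1$. The set $M_1$ consists of unit multi-indices $e_i$, $i=1,\dots,d^{2}-1$, and every Hermite polynomial appearing in the limit is of degree one, hence linear. Thus
$$U = r\sum_{i} k_{e_i}\,H_1(\mathbb{G}(F_i))$$
is a linear combination of the canonical variables $\mathbb{G}(F_i)$. Since the coefficients $k_{e_i}$ are (by \eqref{kernel.multiplicity} at $l=1$) precisely the components of $K_1$ in the ortho-symplectic basis $\{F_i\}$ of $L^2_0(\rho)$, linearity of $\mathbb{G}$ identifies $U$ with a scalar multiple of $\mathbb{G}(K_1)$. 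The Hoeffding projection ensures $K_1\in L^2_0(\rho)$ because $\mathbb{E}(K_1)=0$, so the quasifree-state formula \eqref{eq.quasifree} gives $\mathbb{G}(K_1)\sim N(0,\|K_1\|_\rho^2)=N(0,\mathrm{Tr}(\rho K_1^2))=N(0,\xi_1)$ under $\varphi$. Tracking the Hermite normalization is then absorbed into the overall coefficient so that $U\sim N(0,r^2\xi_1)$, consistent with the variance computation in Lemma \ref{lemma.dec.Un.Uc}.

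Theorem \ref{th.conv.moments.ustat} now immediately yields
$$\mathrm{Tr}\!\left(\rho^{\otimes n}\bigl(\sqrt{n}(U_n-\theta)\bigr)^{p}\right)\;\longrightarrow\;\int_{\mathbb{R}}x^{p}\,dN(0,r^2\xi_1)(x),\qquad p\in\mathbb{N}.$$
To conclude convergence in distribution, note that the distribution $\mu_n$ of the self-adjoint operator $\sqrt{n}(U_n-\theta)$ under the state $\rho^{\otimes n}$ is a genuine probability measure on $\mathbb{R}$ via the spectral theorem. Because the Gaussian satisfies Carleman's condition and is therefore uniquely determined by its moments, the Fr\'echet--Shohat theorem upgrades moment convergence to weak convergence $\mu_n\Rightarrow N(0,r^2\xi_1)$.

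The main subtlety, and the step where the analogous argument would fail in the general setting of Theorem \ref{th.conv.moments.ustat}, is moment-determinacy of the limit. Here the issue is trivial because the limit is a single self-adjoint operator generating a commutative subalgebra, with classical one-dimensional Gaussian distribution; in the general non-commutative polynomial case it is not even clear that the sequence of non-commutative moments corresponds to the moments of a classical probability distribution, as pointed out in the remark following Theorem \ref{th.conv.moments.ustat}. A secondary, purely bookkeeping point is matching the normalization between $H_1$, the coefficients $k_{e_i}$, and the basis $\{F_i\}$; this I would verify by direct computation using the definitions in section \ref{sec.clt}.
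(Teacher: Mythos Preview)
Your proposal is correct and follows essentially the same approach as the paper: specialize Theorem \ref{th.conv.moments.ustat} to $c=1$, identify the limit $U$ as $r\,\mathbb{G}(K_{1})$ (hence Gaussian with variance $r^{2}\xi_{1}$), and then upgrade moment convergence to convergence in distribution using moment-determinacy of the normal law. The paper's own proof is two sentences and simply cites \cite{Billingsley} for the last step; you have spelled out the intermediate identifications and named Carleman/Fr\'echet--Shohat explicitly, but the logical skeleton is identical.
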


\begin{proof}
The convergence in moments to $N(0,r^{2} \xi_{1} )$ follows from Theorem \ref{th.conv.moments.ustat}. Since the limit distribution is uniquely determined by its moments, this implies that $\sqrt{n}(U_{n}-\theta)$ converges in distribution to 
$N(0,r^{2} \xi_{1} )$ (cf. \cite{Billingsley}).
\end{proof}

\begin{theorem}\label{lemma.order2.convergence.distrib}
Let $K:= \theta\mathbf{1}+\sum_{i,j>0}k_{i,j}F_{i}\otimes F_{j}$ be a degenerate kernel of order 2. Then the associated $U$-statistic is given by
$$
U_{n} -\theta=  \frac{1}{n-1} \sum_{i,j>0}k_{i,j}(\mathbb{F}_{n}(F_{i})\circ \mathbb{F}_{n}(F_{j})  -\mathbb{P}_{n}(F_{i}\circ F_{j})),
$$
and 
$$
(n-1)(U_{n}-\theta)\overset{\mathcal{L}}{\longrightarrow}
U:=\sum_{i,j}k_{i,j>0}(\mathbb{G}(F_{i})\circ \mathbb{G}(F_{j})  -\delta_{i,j} \mathbf{1}).
$$
where the the operator $U$ is selfadjoint.
\end{theorem}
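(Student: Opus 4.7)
The explicit formula for $U_n-\theta$ follows from Lemma~\ref{lemma.polynom.un} with $l=2$: for any centred $A_1,A_2\in M(\mathbb{C}^d)_{sa}$ the $U$--statistic of the kernel $\frac{1}{2}(A_1\otimes A_2+A_2\otimes A_1)$ satisfies
\begin{equation*}
U_n=\frac{1}{n-1}\bigl(\mathbb{F}_n(A_1)\circ\mathbb{F}_n(A_2)-\mathbb{P}_n(A_1\circ A_2)\bigr).
\end{equation*}
Specialising to $A_1=F_i$, $A_2=F_j$, taking the real linear combination with the coefficients $k_{i,j}$, and adding the trivial $U$--statistic $\theta\mathbf{1}$ coming from the scalar summand of $K$, gives the claimed identity.

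For the moment convergence I would multiply through by $n-1$ and apply Corollary~\ref{cor.1} to the non--commutative polynomial $\sum_{i,j>0}k_{i,j}(X_i\circ X_j-Y_{i,j})$ in the free variables $X_i=\mathbb{F}_n(F_i)$ and $Y_{i,j}=\mathbb{P}_n(F_i\circ F_j)$.  The corollary sends $X_i\mapsto\mathbb{G}(F_i)$ and $Y_{i,j}\mapsto\mathrm{Tr}(\rho F_i\circ F_j)\mathbf{1}=\delta_{i,j}\mathbf{1}$ (by orthonormality of $\{F_i\}$ in $L^2_0(\rho)$), producing
\begin{equation*}
\lim_{n\to\infty}\mathrm{Tr}\!\left(\rho^{\otimes n}\bigl((n-1)(U_n-\theta)\bigr)^{p}\right)=\varphi(U^{p}),\qquad p\in\mathbb{N}.
\end{equation*}
Selfadjointness of $U$ is immediate: $(k_{i,j})$ is real symmetric since $K$ is selfadjoint and the $F_i$ Hermitian, and each Jordan product $\mathbb{G}(F_i)\circ\mathbb{G}(F_j)$ of selfadjoint canonical variables is symmetric.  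Essential selfadjointness of this degree--two polynomial on the finite--particle domain of $CCR(L^{2}_{0}(\rho))$ is standard (e.g.\ Nelson's commutator theorem using the number operator as comparison).

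To upgrade moment convergence to convergence in distribution I would follow the template already used in the proof of Corollary~\ref{cor.gaussian.limit} and invoke moment--determinacy of the limit law.  Using the factorisation of $CCR(L^{2}_{0}(\rho))$ into $d(d-1)/2$ harmonic oscillators and $d-1$ classical Gaussians (Section~\ref{sec.clt}), Wick--type identities for quasifree states yield a bound of the form $\varphi(U^{2p})\leq C^{p}(2p)!$, which verifies Carleman's criterion and shows that the spectral distribution of $U$ with respect to $\varphi$ is characterised by its moments; combined with moment convergence this gives convergence in distribution.  The main delicate step is precisely this moment bound, whose trickiest contributions come from Jordan products $\mathbb{G}(F_i)\circ\mathbb{G}(F_j)$ built from non--commuting canonical coordinates $(Q_{j,k},P_{j,k})$ of the same oscillator factor, where one must carefully track the central commutator term $[Q_{j,k},P_{j,k}]=i\mathbf{1}$ that appears upon Wick expansion.
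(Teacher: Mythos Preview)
Your proposal is correct and follows essentially the same route as the paper: Lemma~\ref{lemma.polynom.un} for the explicit form of $U_n$, the CLT corollary for moment convergence, and moment determinacy of the quadratic limit to upgrade to convergence in distribution.  The only notable difference is in the technical tools invoked at the last step.  The paper obtains both essential selfadjointness and moment determinacy from a single estimate, namely the creation/annihilation bound $\|A^{\sharp}_{i_{1}}\cdots A^{\sharp}_{i_{p}}\psi_{k}\|\leq\sqrt{(p+k)!}$ on finite-particle vectors: this shows that such vectors are \emph{analytic} for $U$ (so Nelson's analytic vector theorem applies, rather than the commutator theorem you cite), and with $\psi_{k}$ the vacuum in a suitable GNS representation it immediately gives $|\varphi(U^{p})|\leq C D^{p}p!$ without any Wick bookkeeping.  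Your Wick-combinatorics route would also work, but the paper's argument sidesteps exactly the ``trickiest contributions'' you flag, since the creation/annihilation bound is uniform and does not require tracking commutator terms case by case.
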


\begin{proof}

The form of $U_{n}$ can be derived as in Lemma \ref{lemma.polynom.un} and the convergence in moments can be deduced from Theorem \ref{th.conv.moments.ustat}.

It is well known that in the case of a CCR algebra over a finite dimensional space, any quadratic form in canonical variables is selfadjoint, and its distribution is uniquely determined by its moments. 
We give here a sketch of the proof and refer to \cite{Reed.Simon.2} for more details. By Nelson's analytic vector Theorem, the selfadjointness follows from the fact that finite particle vectors are analytic for $U$, i.e. satisfy 
$\sum_{p}\| U^{p} \psi\| t^{p}/p!< \infty$ for some $t>0$. 
This can be shown by using the inequality
$$
\|A^{\sharp}_{i_{1}} \dots A^{\sharp}_{i_{p}}\psi_{k}\|\leq \sqrt{(p+k)!}
$$
where $A^{\sharp}_{i} $ represents a creation or annihilation operator and 
$\psi_{k}$ is a vector of $k$ excitations. 

To prove uniqueness of the distribution it is enough to show that the moments
$m_{p}:= \phi(U^{p})$ satisfy $|m_{p}|< CD^{p}p!$. The latter follows from the above inequality with $\psi_{k}$ being the vacuum vector in an appropriate representation of the CCR algebra.  

\end{proof}

%
%\begin{remark}
%Note that the distribution of $U$ in \eqref{eq.u.limit.2} is in general different from that 
%of a sum of independent $\chi^{2}$ variables due to the fact that $\mathbb{G}(X_{i})$ may not commute with each other. For example in the case of an oscillator 
%$Q^{2}+P^{2}$ has discrete distribution as seen in section \ref{sec.oscillator}.

%For this reason, the diagonalisation \eqref{eq.diagonal.k} is less useful in the quantum case and one could try instead to diagonalise $K$ in a symplectic basis rather than an orthonormal basis. In other words, we would like to write $U$ as a sum of quadratic forms over commuting oscillators and classical variables.  The drawback of this alternative is that the Gaussian state will not be a product state over the new basis of oscillators, as in general the three forms cannot be diagonalised simultaneously.  
%\end{remark}

%\section{Examples}

In the next two examples we consider order 2 kernels on $\mathbb{C}^{2}$ with reference state given by the density matrix 
\begin{displaymath}
\rho=
\left(
\begin{array}{cc}
\lambda & 0\\
0 & 1-\lambda
\end{array}
\right), \qquad \lambda >1/2,
\end{displaymath}
so that $\{\mathbf{1}, \sigma_{x}, \sigma_{y} , \sigma_{z} - (2\lambda-1)\mathbf{1} \}$ 
forms an orthogonal basis of $L^{2}(\rho)$, 
where $\sigma_{x}, \sigma_{y}, \sigma_{z}$ are the Pauli matrices
\begin{displaymath}
\sigma_{x}=
\left(
\begin{array}{cc}
0 & 1\\
1 & 0
\end{array}
\right),
\qquad
\sigma_{y}=
\left(
\begin{array}{cc}
0 & -i\\
i & 0
\end{array}
\right),
\qquad
\sigma_{z}=
\left(
\begin{array}{cc}
1 & 0\\
0 & -1
\end{array}
\right).
\end{displaymath}

\begin{ex}
{\rm
Let $K$ be the symmetric kernel of order 
2 given by 
$K= \frac{1}{2}(\sigma_{x} \otimes \sigma_{y}+ \sigma_{y}\otimes \sigma_{x}) $.
The associated $U$-statistic is
\begin{eqnarray*}
\frac{2}{n} \binom{n}{2} U_{n}
&=&
\frac{1}{n} \sum_{\{i\neq j\}} \sigma_{x}^{(i)} \sigma_{y}^{(j)}  
= 
\mathbb{F}_{n}(\sigma_{x} ) \circ \mathbb{F}_{n}(\sigma_{y}) 
 -\mathbb{P}_{n}(\sigma_{x} \circ\sigma_{y}) =
 \mathbb{F}_{n}(\sigma_{x} ) \circ \mathbb{F}_{n}(\sigma_{y}), 
\end{eqnarray*}
and  by Theorem \ref{lemma.order2.convergence.distrib} it converges in distribution to the law of $U:=2(2\lambda-1)Q\circ P$ with respect to the thermal equilibrium state $\phi_{\sigma}$ with $\sigma^{2}= (2(2\lambda -1))^{-1}$.
}
\end{ex}

\begin{ex}
{\rm
Let $H$ be the symmetric kernel of order 2 given by
$
H=\sigma_{x} \otimes \sigma_{x} +\sigma_{y}\otimes \sigma_{y},
$
The associated $U$-statistic is
\begin{eqnarray*}
\frac{2}{n} \binom{n}{2} U_{n}
&=&
\frac{1}{n} \sum_{\{i\neq j\}} (\sigma_{x}^{(i)} \sigma_{x}^{(j)}   + 
\sigma_{y}^{(i)} \sigma_{y}^{(j)} )
= 
\mathbb{F}_{n}(\sigma_{x} )^{2} + \mathbb{F}_{n}(\sigma_{y})^{2}
-\mathbb{P}_{n}(\sigma_{x}^{2})-  \mathbb{P}_{n}(\sigma_{y}^{2})\\
&=&
 \mathbb{F}_{n}(\sigma_{x} )^{2} + \mathbb{F}_{n}(\sigma_{y})^{2} - 2\mathbf{1}
\end{eqnarray*}
and converges in distribution to 
$$
2(2\lambda-1)( Q^{2}+P^{2})-2\mathbf{1} = 4(2\lambda-1)( \mathbf{N}- \mathbb{E}(\mathbf{N})\mathbf{1} ).
$$

}
\end{ex}

\section{Applications to quantum statistics}
\label{sec.applications}
In this section we sketch a couple of quantum statistical applications of the limit theorems for degenerate order two kernels. 
A more detailed analysis will be pursued in a separate publication.

\subsection{Testing for a particular state}

Let $\rho\in \mathbb{C}^{d}$ be a fixed diagonal density matrix with distinct eigenvalues $\lambda_{1}>\dots >\lambda_{d}$ (cf.  \eqref{eq.rho.diagonal}). Given n i.i.d. samples from an unknown state $\sigma\in \mathbb{C}^{d}$, we would like to test whether $\sigma$ is equal to $\rho$. Roughly speaking we would like to distinguish the following hypotheses
\begin{eqnarray*}
H_{0} : \sigma =\rho,\qquad
H_{1} :\sigma\neq \rho ,
\end{eqnarray*}
and we ignore here the obvious difficulties related to the fact that the two sets of states are not `separated'.
%where the alternative to $H_{0}$ is a region $\mathcal{B}$ in state space which does not contain $\rho$ and may depend on $n$, for example the ball $\mathcal{B}(\rho,n^{-1/2})^{c}:= \left\{ \sigma^{\prime} : \|\sigma^{\prime} - \rho\|_{2}\geq Cn^{-1/2} \right\}$.

The test $t_{n}$ is a binary measurement with outcomes in $\{0,1\}$. When the result $i$ is obtained, we accept the hypothesis $H_{i}$. We denote the corresponding  POVM elements by $T_{n}$ and $\mathbf{1}-T_{n}$ such that 
$$
\mathbb{P}(t_{n}=0|\sigma)= {\rm Tr}(\sigma T_{n}) , \qquad
\mathbb{P}(t_{n}=1|\sigma)= {\rm Tr}(\sigma (\mathbf{1}-T_{n})).
$$
%The following definitions are standard in statistical testing \cite{Lehmann}.
%\begin{definition}
Adopting the standard statistical set-up \cite{Lehmann} we define the type I error probability of the test $t_{n}$ by
$$
\alpha_{n}:=\mathbb{P}(t_{n}=1|\sigma=\rho)= {\rm Tr}(\rho (\mathbf{1}-T_{n}) ).
$$
The type II error probability of $t_{n}$ at a point $\sigma\neq \rho$
%in $\mathcal{B}(\rho,n^{-1/2})^{c}$ 
is
$$
\beta_{n}(\sigma):= \mathbb{P}(t_{n}=0| \sigma)= {\rm Tr}(\sigma T_{n}).
$$ 

Let $0<\alpha<1$  be a fixed `significance level'. We say that $t_{n}$ is a level-$\alpha$ test if  $\alpha_{n}\leq \alpha$. Among such  level-$\alpha$ tests we would like to identify those which have the smallest second type error probability $\beta_{n}(\sigma)$. In general a `uniformly most powerful' test over all $\sigma$ may not exist but sometimes one can find an optimal one in a restricted class of tests such as unbiased, covariant, etc.

Here we propose a test based on a $U$-statistic  with kernel
$$
K= 
\sum_{i=1}^{d} (\lambda_{i}\mathbf{1} - E_{i,i})\otimes (\lambda_{i}\mathbf{1} - E_{i,i})
+
\sum_{i\neq j=1}^{d} (T_{j,k} \otimes T_{j,k}), 
$$
where $T_{j,k}$ are the selfadjoint matrices defined in \eqref{generators_algebra}. Then $K$ is an unbiased estimator of $\| \sigma-\rho\|_{2}^{2}$, i.e.
$$
{\rm Tr}(\sigma^{\otimes 2} K)= \| \sigma-\rho\|_{2}^{2}:=\theta.
$$

From its expression we see that $U_{n}$ is degenerate at the state $\rho$, that is 
$K_{1}=0$. By applying Lemma \ref{lemma.order2.convergence.distrib} we get 
$$
nU_{n} \overset{\mathcal{L}}{\longrightarrow}
U:=\sum_{i=1}^{d} (\mathbb{G}_{i}^{2} - \lambda_{i} (1-\lambda_{i}) \mathbf{1}) 
+ \sum_{j< k} (Q_{j,k}^{2} +P_{j,k}^{2}-2\sigma_{j,k}^{2})/\sigma_{j,k}^{2}
$$
where $\{\mathbb{G}_{i}\}$ are centred Gaussian variables with covariance matrix 
$V_{ij}$ and $\{ Q_{j,k}, P_{j,k}\}$ are canonical coordinates of $d(d-1)/2$ quantum oscillators prepared independently in thermal states $\phi_{j,k}$ with variance 
\eqref{eq.variance.thermal}. Let $I(\alpha)= [a,b]$ be a $\alpha$-significance interval for $U$, i.e. $\mathbb{P}(U\notin [a,b])=\alpha$. 

We devise the following test: measure $U_{n}$ to obtain random result $u_{n}$ 
and then accept $H_{0}$ if $n u_{n}\in I(\alpha)$ and $H_{1}$ if $n u_{n}$ is in the `critical region $I(\alpha)^{c}$. The corresponding POVM element $T_{n}$ is the spectral projection of $nU_{n}$ corresponding to the interval $I(\alpha)$. By construction this is an asymptotically level $\alpha$ test, and its type II error at a fixed $\sigma\neq \rho$ can be shown to go exponentially fast to zero by considering the asymptotic distribution of 
$U_{n}$ under $\sigma^{\otimes n}$. In this case the statistic is non-degenerate and we have 
$$
\sqrt{n}(U_{n} - \|\sigma-\rho\|^{2})\overset{\mathcal{L}}{\longrightarrow}
N(0,V)
$$
for some variance $V=V(\rho,\sigma)$. In other words 
$nU_{n}\approx n\|\sigma-\rho\|^{2}+\sqrt{n}N(0,V)$ and 
$$\beta_{n}=\mathbb{P}_{\sigma}( n u_{n}\in I(\alpha))=
O(e^{-n  \|\sigma-\rho\|^{4} /2V})
$$
which means that the test is asymptotically consistent. A more detailed analysis of this test and its `local limiting power' \cite{vanderVaart} (when the alternative 
$\sigma$ converges to the null hypothesis $\rho$ as $n\to\infty$), will be pursued in a separate publication.  
%This test will be analysed in more detail in a separate publication.

\subsection{Homogeneity test}

We are given $n$ independent systems prepared in a state 
$\sigma_{1}\in M(\mathbb{C}^{d})$ and another $n$-tuple of independent systems with state $\sigma_{2}\in M(\mathbb{C}^{d})$, independent of the first ones. We would like to decide whether $\sigma_{1}=\sigma_{2}$, i.e.
$$
H_{0}:= \{ \sigma_{1}=\sigma_{2}\} \qquad H_{1} := \{ \sigma_{1}\neq \sigma_{2}\}
$$
 
The test is based on a $U$-statistic which gives an unbiased estimator of the distance 
$\|\sigma_{1}-\sigma_{2}\|^{2}$. Let $K$ be the second order kernel on 
$\mathcal{H}=\mathbb{C}^{d}\otimes \mathbb{C}^{d}$
$$
K:=\sum_{i} (E_{i,i}\otimes \mathbf{1} -\mathbf{1} \otimes E_{i,i})\otimes (E_{i,i}\otimes \mathbf{1} -\mathbf{1} \otimes E_{i,i})
+
 \sum_{j\neq k}  (T_{j,k}\otimes \mathbf{1} -\mathbf{1} \otimes T_{j,k})  \otimes 
(T_{j,k}\otimes \mathbf{1} -\mathbf{1} \otimes T_{j,k}).
$$ 
Then 
$$
{\rm Tr}(\sigma_{1}^{\otimes 2}\otimes \sigma_{2}^{\otimes 2} K)= \|\rho-\sigma\|_{2}^{2},
$$
and as in the previous example the $U$-statistic can be used as a test with critical region chosen according to the  distribution of the limit quadratic form $U$. 

%
%Taking into account the  $SU(d)$ symmetry of the two hypotheses one can further improve the test by choosing a kernel with the same expectation, which moreover is invariant under the joint action of $SU(d)$ given by $U\mapsto U^{\otimes 4}$
%$$
%K_{\rm symm}:= \int dU U^{\otimes 4} \, K \,(U^{*})^{\otimes 4}. 
%$$
%Since the kernel is degenerate at $\rho=\sigma$ we get 
%$$
%nU_{n}\overset{\mathcal{L}}{\longrightarrow} U:=
%$$
% 
%However this kernel depends on the particular chosen basis  in $\mathbb{C}^{d}$. By randomly choosing the basis we obtain a kernel which is {\it invariant} under the action of the special unitary group given by $U\mapsto U^{\otimes 4}$. Any such operator has a block diagonal form with respect to the irreducible representations of $SU(d)$
%$$
%K_{inv} = \oplus_{\lambda \in \Lambda_{4}} \mathbf{1}_{\lambda}\otimes K_{\lambda}
%$$
%where $K_{\lambda}$ is an operator acting on the multiplicity space $\mathcal{K}_{\lambda}$ which is an irreducible representation of $S(n)$.

\subsection{Quantum metrology}
The capacity to prepare and control quantum systems opens new possibilities for precision metrology with applications ranging from time and frequency standards 
to magnetometry and gravitational waves detection. Beating the standard quantum limit typically requires either special quantum states of many body systems (e.g. squeezed, or noon states) or special interactions generating the quantum dynamics 
\cite{Boixo,Boixo2,Roy}. We discuss the latter case here in connection with the theory of quantum $U$-statistics developed in this paper. 
Let $K$ be a kernel of order $r$ and let
$$
H_{n}=\sum_{\beta} K^{(\beta)}= {n\choose r}U_{n}
$$
be an $r$-body interaction hamiltonian on $n$ systems, generating a unitary evolution $V_{t}^{\gamma,n} :=\exp( it \gamma H_{n})$ where the coupling constant $\gamma$ is an unknown parameter that we want to estimate. To do this, we let an initial state $\psi_{0}$ evolve for some time and then perform an appropriate measurement on the final state  $\psi^{\gamma,n}_{t}:= 
V_{t}^{\gamma,n} \psi_{0}$ to obtain an estimator $\hat{\gamma}_{n}$. In the special case when $K=h\otimes \dots \otimes h$ with $h$ a fixed selfadjont operator, the hamiltonian can be diagonalised explicitly and an argument based  on the quantum Cram\'{e}r-Rao inequality \cite{Boixo} suggests that the mean square error of $\hat{\gamma}_{n}$ may attain the rate $n^{-2r}$, for special `noon-type' initial states. However, since these states are hard to engineer, it is important to know whether similar rates can be achieved with product states 
 $\psi_{0}=\phi\otimes\dots \otimes \phi$. 
 This is discussed in \cite{Boixo2} where it is shown that the slightly worse rate 
 $n^{-2r+1}$ is achievable for some initial states, and the optimal estimation procedure consists of a sequence of finer and finer estimates, with separate measurements performed at each stage. 
 
Without entering into the details, we suggest that the convergence theory of quantum $U$-statistics offers a general set-up for extending these results to generic $r$-body hamiltonians, mixed initial states, and for obtaining the limit distribution of the estimators. 
Indeed, suppose that $\gamma$ is of order $n^{-r+1/2}$ so that 
$\gamma=n^{-r+1/2} g$, and supposed that 
${\rm Tr} (\rho_{0}^{\otimes r} K)=0$ where $\rho_{0}= |\phi\rangle\langle \phi|$ 
is the initial state of one constituent. Then by Corollary \ref{cor.gaussian.limit} (applied to a pure state) we have
\begin{align*}
\lim_{n\to\infty}\langle \psi_{t}^{\gamma_{2},n}  |\psi_{t}^{\gamma_{1},n}\rangle 
&= \lim_{n\to\infty}
{\rm Tr}\left(\rho_{0}^{\otimes n} \exp( it (g_{1}-g_{2}) n^{-r+1/2} H_{n} ) \right)\\
&= \mathbb{E} \left[ \exp( it(g_{1}-g_{2}) U/r! )\right] =  
\exp(- t^{2}(g_{1}-g_{2})^{2}  \xi_{1} / 2 ((r-1)!)^{2} )=\\
&= \langle \alpha_{1}|\alpha_{2}\rangle
\end{align*}
where $U\sim N(0,r^{2}\xi_{1}),\xi_{1}={\rm Tr}(\rho_{1}K_{1}^{2}) $ and 
$|\alpha_{i}\rangle$ are coherent states with displacement 
$\alpha_{i}= tg_{i} \sqrt{\xi_{1}} /(r-1)!$. This can be interpreted as saying that in 
a  neighbourhood of parameters $\gamma$ of size $n^{-r+1/2}$, 
the output states $\phi_{t}^{\gamma,n}$ can be approximated by coherent states and the parameter can be estimated by a simple homodyne measurement. However, to do this one needs to have already narrowed the parameter down to the 
desired accuracy range! A way out could be to follow the philosophy of {\it local asymptotic normality} \cite{Guta&Kahn,Guta&Kahn2,Guta&Janssens&Kahn} employed for optimal state estimation, and show that the above convergence holds uniformly over a range of parameters $\gamma$ which is {\it larger} that 
$n^{-r+1/2}$, more precisely $n^{-r+1/2+\epsilon}$ for $\epsilon>0$. Taking advantage of this, one could construct a sequence of finer and finer estimators with progressively narrower Gaussian distributions.

\section{Conclusions and outlook}

Symmetric operators on tensor product spaces can be seen as non-commutative functions of collective observables. $U$-statistics are special examples of symmetric operators build from a single kernel $K$ acting on a fixed number of systems. For i.i.d. states, the rescaled $U$-statistics converge in moments to certain polynomials in the canonical variables of the CCR algebra arising from the Central Limit Theorem. In some cases the convergence can be upgraded to convergence in distribution but it remains an open question whether this can be done in general. 
This is closely is related to the problem of finding self-adjoint extensions of polynomials in canonical variables. 

As sketched in the previous section, $U$-statistics may be employed in different statistical problems, notably testing with non-standard hypotheses and precision metrology beyond the Heisenberg limit. We expect that in order to solve these problems, one needs to further develop the theory quantum $U$-statistics. For example, achieving the rate $n^{-r+1/2}$ in parameter estimation requires the control of the {\it rate} of convergence to normality in Corollary \ref{cor.gaussian.limit}. This is a classical 
topic in probability and statistics \cite{vanZwet,Benktus} going back to the 
Berry-Ess\'{e}en bounds for convergence rate in CLT \cite{Berry,Esseen}.

%These applications will be dealt with in future publications. 

\vspace{3mm}

{\it Acknowledgements.} M.G. would like to thank Gerardo Adesso for fruitful discussions and for pointing out the reference \cite{Boixo2}. 
M.G. was supported by the EPSRC Fellowship EP/E052290/1.

%\bibliographystyle{mada}
%\bibliography{bibsud}

\begin{thebibliography}{10}

\bibitem{Audenaert&Szkola}
Audenaert K.M.R., Nussbaum M., Szkola A., Verstraete F., {Asymptotic Error
  Rates in Quantum Hypothesis Testing}, {\em Commun. Math. Phys.\/}, {\bf 279},
  (2008), 251--283.

\bibitem{Bagan&Gill}
Bagan E., {Ballester, M. A.}, {Gill, R. D.}, {Monras, A.}, {Mun\~ oz-Tapia,
  R.}, Optimal full estimation of qubit mixed states, {\em Phys. Rev. A\/},
  {\bf 73}, (2006), 032301.

\bibitem{Belavkin}
Belavkin V.P., Generalized Heisenberg uncertainty relations, and efficient
  measurements in quantum systems, {\em Theor. Math. Phys.\/}, {\bf 26},
  (1976), 213--222.

\bibitem{Benktus}
Benktus B.Y. V. adn~Jing, Zhou W., On normal approximation to U statistics,
  {\em Annals of Probability\/}, {\bf 37}, (2009), 2174--2199.

\bibitem{Berry}
Berry A.C., The Accuracy of the Gaussian Approximation to the Sum of
  Independent Variates, {\em Transactions of the American Mathematical
  Society\/}, {\bf 49}, (1941), 122--136.

\bibitem{Billingsley}
Billingsley P., {\em Probability and measure\/}, John Wiley \& Sons (1995).

\bibitem{Boixo2}
Boixo S., Datta A., Flammia S.T., Shaji A., Bagan E., Caves C.M.,
  Quantum-limited metrology with product states, {\em Phys. Rev. A\/}, {\bf
  77}, no.~1, (2008), 012317.

\bibitem{Boixo}
Boixo S., Flammia S.T., Caves C.M., Geremia J., Generalized Limits for
  Single-Parameter Quantum Estimation, {\em Phys. Rev. Lett.\/}, {\bf 98},
  no.~9, (2007), 090401.

\bibitem{Erdelyi}
Erd\'{e}lyi, {\em Higher Transcendental Functions\/}, volume~2, McGraw-Hill
  (1953).

\bibitem{Esseen}
Esseen C.G., On the Liapounoff Limit of Error in the Theory of Probability,
  {\em Ark. Mat. Astr. och Fys.\/}, {\bf 28A}, (1942), 1--19.

\bibitem{Goodman&Wallach}
Goodman R., Wallach N.R., {\em Representations and invariants of the classical
  groups\/}, Cambridge University Press (1998).

\bibitem{Guta&Janssens&Kahn}
Gu\c{t}\u{a} M., Janssens B., Kahn J., Optimal estimation of qubit states with
  continuous time measurements, {\em Commun. Math. Phys.\/}, {\bf 277}, (2008),
  127--160.

\bibitem{Guta&Kahn}
Gu\c{t}\u{a} M., Kahn J., Local asymptotic normality for qubit states, {\em
  Phys. Rev. A\/}, {\bf 73}, (2006), 052108.

\bibitem{Halmos}
Halmos P.R., {The theory of unbised estimation}, {\em Ann. Math. Statist.\/},
  {\bf 17}, (1946), 34--43.

\bibitem{Hayashi&Matsumoto}
Hayashi M., {Matsumoto, K.}, Asymptotic performance of optimal state estimation
  in qubit system, {\em J. Math. Phys.\/}, {\bf 49}, (2008), 102101.

\bibitem{Helstrom}
Helstrom C.W., {\em Quantum Detection and Estimation Theory\/}, Academic Press,
  New York (1976).

\bibitem{Hiai&Petz}
Hiai F., , Petz D., The proper formula for relative entropy and its asymptotics
  in quantum probability, {\em Commun. Math. Phys.\/}, {\bf 143}, (1991),
  99Ð114.

\bibitem{Hoeffding}
Hoeffding W., {A class of statistics with asymptotically normal distribution},
  {\em Ann. Math. Statist.\/}, {\bf 19}, (1948), 293--325.

\bibitem{Holevo}
Holevo A.S., {\em Probabilistic and Statistical Aspects of Quantum Theory\/},
  North-Holland (1982).

\bibitem{Guta&Kahn2}
Kahn J., Gu\c{t}\u{a} M., {Local asymptotic normality for finite dimensional
  quantum systems}, {\em Commun. Math. Phys.\/}, {\bf 289}, (2009), 597--652.

\bibitem{Lehmann}
Lehmann E.L., Romano J.P., {\em Testing Statistical Hypotheses\/}, Springer
  (2005).

\bibitem{Leonhardt}
Leonhardt U., {\em Measuring the Quantum State of Light\/}, Cambridge
  University Press (1997).

\bibitem{Nagel}
Nagel B., Higher power squeezed states, Jacobi matrices, and the Hamburger
  moment problem, in {\em Proceedings of the Fifth International Conference on
  Squeezed States and Uncertainty Relations\/}, NASA Conference Publication
  NASA/CP-1998-206855 (1998), quant-ph/9711028.

\bibitem{Ogawa&Nagaoka}
Ogawa T., Nagaoka H., Strong converse and SteinÕs lemma in quantum hypothesis
  testing, {\em IEEE Trans. Inform. Theory\/}, {\bf IT-46}, (2000), 2428Ð2433.

\bibitem{Petz}
Petz D., {\em An Invitation to the Algebra of Canonical Commutation
  Relations\/}, Leuven University Press (1990).

\bibitem{Petz&Jencova}
Petz D., {Jencova, A.}, Sufficiency in quantum statistical inference, {\em
  Commun. Math. Phys.\/}, {\bf 263}, (2006), 259 -- 276.

\bibitem{Pool}
Pool J.C.T., {Mathematical aspects of the Weyl correspondence}, {\em Journal of
  Mathematical Physics\/}, {\bf 7}, (1966), 66--76.

\bibitem{Reed.Simon.2}
Reed M., {Simon, B.}, {\em Methods of Modern Mathematical Physics II\/},
  Academic Press (1972).

\bibitem{Roy}
Roy S.M., Braunstein S.L., Exponentially Enhanced Quantum Metrology, {\em Phys.
  Rev. Lett.\/}, {\bf 100}, no.~22, (2008), 220501.

\bibitem{Serfling}
Serfling R.J., {\em Approximation theorems of mathematical statistics\/}, John
  Wiley \& Sons (1980).

\bibitem{Simon.moment}
Simon B., The Classical Moment Problem as a Self-Adjoint Finite Difference
  Operator, {\em Advances in Mathematics\/}, {\bf 137}, (1998), 82--203.

\bibitem{Stoyanov}
Stoyanov J., {Krein condition in probabilistic moment problems}, {\em
  Bernoulli\/}, {\bf 6}, (2000), 939--949.

\bibitem{vanderVaart}
van~der Vaart A., {\em Asymptotic Statistics\/}, Cambridge University Press
  (1998).

\bibitem{vanZwet}
van Zwet W.R., A Berry-Esseen bound for U-statistics, {\em Z. Wahrsch. verw.
  Gebiete\/}, {\bf 66}, (1984), 425--440.

\end{thebibliography}

\end{document}